\def\svgpath{figures/}
\newcommand{\includesvg}[2][]{%
\def\tempa{#1}\def\tempb{}%
\ifx\tempa\tempb\else\let\svgwidth\tempa\fi
\input{\svgpath#2_generated.pdf_tex}
}
\def\ie/{\emph{i.e.},}
\def\CW/{crossing weight}
\def\MCW/{minimal \CW/}
\def\dash/{\hbox{-}\nobreak\hskip0pt\relax}
\def\short/{short}
\newcommand{\define}[1]{\textbf{\textsl{#1}}}
\newcommand{\name}[1]{\hbox{#1}}
\let\phi\varphi
\let\epsilon\varepsilon
\newcommand{\inv}{\ensuremath{^{-1}}}
\newcommand{\cut}{\mathbin{\setminus\mskip-6.5mu\setminus}}
\newcommand{\surf}{{\mathcal S}}
\newcommand{\unicov}[1][\surf]{\tilde{#1}}
\newcommand{\lift}[1]{\tilde{#1}}
\newcommand{\uniH}{\tilde{H}}
\newcommand{\unic}{\tilde{c}}
\newcommand{\unie}{\tilde{e}}
\newcommand{\unih}{\tilde{h}}
\newcommand{\mcO}{O}
\newcommand{\Z}{\mathbb Z}
\newcommand{\R}{\mathbb{R}}
\newcommand{\region}[1]{\Pi_{#1}}
\newcommand{\cycov}[1]{\surf_{#1}}
\newcommand{\cyproj}{\varphi_c}
\newcommand{\cyH}{H_c}
\newcommand{\trans}{\tau}
\newcommand{\NEW}{\mathrm{new}}
\newcommand{\TYPE}{\mathrm{type}}
\newcommand{\JOIN}{\mathrm{join}}
\newcommand{\NEXT}{\mathrm{next}}
\newcommand{\NONE}{\mathrm{NONE}}
\newcommand{\ID}{\mathrm{id}}
\newcommand{\len}[1]{\left\lvert#1\right\rvert}
\newcommand{\ttstyle}[1]{\hbox{\texttt{#1}}}
\newtheorem{theorem}{Theorem}
\newtheorem{lemma}[theorem]{Lemma}
\newtheorem{proposition}[theorem]{Proposition}
\newtheorem{cor}[theorem]{Corollary}
\title{On the homotopy test on surfaces}
\author{Francis Lazarus\thanks{%
    GIPSA-Lab, CNRS, Grenoble, France;
    \url{Francis.Lazarus@grenoble-inp.fr}}% 
  \and Julien Rivaud \thanks{%
    GIPSA-Lab, CNRS, Grenoble, France;
    \url{Julien.Rivaud@grenoble-inp.fr}}%
}
\begin{document}
\maketitle
\frenchspacing

\begin{abstract}
  Let~$G$~be a graph cellularly embedded in a surface~$\surf$.
  Given two closed walks $c$~and~$d$ in~$G$, we take
  advantage of the RAM~model to describe
  linear time algorithms to decide if $c$~and~$d$ are homotopic
  in~$\surf$, either freely or with fixed basepoint. We~restrict
  $\surf$ to be orientable for the free homotopy test, but allow
  non-orientable surfaces when the basepoint is fixed. After
  $O(|G|)$~time preprocessing independent of $c$~and~$d$, our
  algorithms answer the homotopy test in $O(|c|+|d|)$~time, where
  $|G|$,~$|c|$ and~$|d|$ are the respective numbers of edges of
  $G$,~$c$ and~$d$. As a byproduct we obtain linear
  time algorithms for the word problem and the conjugacy problem in
  surface groups. These results were previously announced by Dey and
  Guha (1999). Their approach was based on small cancellation theory
  from combinatorial group theory. However, several flaws in their
  algorithms make their approach fails, leaving the
  complexity of the homotopy test problem still open. We~present a
  geometric approach, based on previous works by Colin~de~Verdi\`ere
  and Erickson, that provides optimal homotopy tests.
\end{abstract}

\section{Introduction}
\label{sec:intro}
Computational topology of surfaces has received much attention in the
last two decades. Among the notable results we may mention the test of
homotopy between two cycles on a surface~\cite{dg-tcs-99}, the
computation of a shortest cycle homotopic to a given
cycle~\cite{ce-tnspc-10}, or the computation of optimal homotopy and
homology bases~\cite{ew-gohhg-05}. In their 1999 paper, Dey and Guha
announced a linear time algorithm for testing whether two curves on a
triangulated surface are freely homotopic. This appeared as a major
breakthrough for one of the most basic problem in computational
topology. Dey and Guha's approach relies on results by
Greendlinger~\cite{g-dacwp-60} for the conjugacy problem in one
relator groups satisfying some small cancellation condition. In the
appendix, we show several subtle flaws in the paper of Dey and
Guha~\cite{dg-tcs-99} that invalidate their approach and leave little
hope for repair. Inspired by the recent work of Colin~de~Verdi\`ere
and Erickson~\cite{ce-tnspc-10} for computing a shortest cycle in a
free homotopy class, we propose a different geometric approach and
confirm the results of Dey and Guha for orientable surfaces. As
commonly assumed in computational topology, we shall analyse the
complexity of our algorithms with the uniform cost RAM~model of
computation~\cite{ahu-daca-74}. A notable feature of this model is the
ability to manipulate arbitrary integers in constant time per
operation and to access an arbitrary memory register in constant time.

In a first part we consider the homotopy test for curves with fixed
endpoints drawn in a graph cellularly embedded in a
surface~$\surf$. This test reduces to decide if a loop is contractible
in~$\surf$, \ie/ null-homotopic, since a curve~$c$ is homotopic to
a curve~$d$ with fixed endpoints if and only if the
concatenation~$c\cdot d\inv$ is contractible. The contractibility test
was already considered by Dey and Schipper~\cite{ds-ntcps-95} using a
partial and implicit construction of the universal cover
of~$\surf$. Indeed, a curve is null-homotopic in~$\surf$ if and only
if its lift is closed in the universal cover of~$\surf$. Given a
closed curve~$c$, Dey and Schipper detect if $c$~is
null-homotopic in $O(|c|\log g)$~time, where $g$~is the genus
of~$\surf$. Their implicit construction is relatively complex and does
not seem to extend to handle the free homotopy test. Our solution to
the contractibility test also relies on a partial construction of the
universal cover. We use the more explicit construction of Colin~de~Verdi\`ere
and Erickson~\cite[Sec.  3.3~and~4]{ce-tnspc-10} for
tightening paths. It amounts to build a convex region of the universal
cover (with respect to some hyperbolic metric) large enough to contain
a lift of~$c$. An argument \emph{\`{a} la Dehn} shows that this region
can be chosen to have size~$O(|c|)$, leading to our first theorem:
\begin{theorem}[Contractibility test]\label{th:contractibility-test}
  Let~$G$ be a graph of complexity~$n$ cellularly embedded in a
  surface~$\surf$, not necessarily orientable. We can preprocess~$G$
  in $O(n)$~time, so that for any loop~$c$ on~$\surf$ represented as a
  closed walk of $k$~edges in~$G$, we can decide whether $c$~is
  contractible or not in $O(k)$~time.
\end{theorem}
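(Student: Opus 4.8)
The plan is to reduce contractibility of~$c$ to a closedness test for a lift of~$c$ inside a partially constructed universal cover carrying a hyperbolic metric, following the tightening machinery of Colin~de~Verdi\`ere and Erickson.

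\emph{Preprocessing.} In~$O(n)$ time I would contract a spanning tree of~$G$, turning the cellular embedding into a one-vertex cellular embedding, a system of loops $b_1,\dots,b_m$; a closed walk of~$k$ edges in~$G$ maps, by deleting the tree edges, to a closed walk of at most~$k$ edges in this reduced complex, a word~$w$ in the $b_i^{\pm1}$. If $\chi(\surf)\ge0$ — the sphere, the projective plane, the torus and the Klein bottle — the fundamental group is finite or virtually abelian and contractibility reduces to checking a parity and/or two winding numbers, which a direct scan answers in~$O(k)$; so assume $\chi(\surf)<0$, orientable or not. I would then fix a cone-free hyperbolic metric on the reduced complex, realising each face as a hyperbolic polygon with the corner angles around the unique vertex summing to~$2\pi$ (such a metric exists precisely because $\chi(\surf)<0$), so that the universal cover~$\unicov$ becomes a locally finite tiling of the hyperbolic plane $\mathbb{H}^2$ by isometric copies of these polygons, with the deck group acting by isometries. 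I store the local combinatorics of the tiling and enough metric data to place tiles by isometries; none of this depends on orientability.

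\emph{Lifting and reformulation.} Given a query loop~$c$ of~$k$ edges, rewrite it in~$O(k)$ time as the word~$w$ of length~$O(k)$ in the $b_i^{\pm1}$. A loop is contractible in~$\surf$ if and only if some, equivalently every, lift of it to~$\unicov$ is a closed loop. Fixing a lift $\lift{x_0}$ of the vertex of the reduced complex, read~$w$ letter by letter — each letter prescribing an isometry, equivalently which side of the current tile to cross — and thereby trace a combinatorial path~$\lift c$ in the tiling with~$O(k)$ steps; then $c$ is contractible if and only if~$\lift c$ ends back at~$\lift{x_0}$.

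\emph{The region, the size bound, and the query.} One cannot afford the ball of radius~$O(k)$ about~$\lift{x_0}$, which meets exponentially many tiles; instead I build only the geodesic convex hull~$R\subset\unicov$ of the geometric path~$\lift c$. The key point — the ``\`a la Dehn'' argument — is that $R$ meets only~$O(k)$ tiles. One way to see this: $\lift c$ has~$O(k)$ step-endpoints, so~$R$ is a convex hyperbolic polygon with~$O(k)$ vertices, hence of area~$O(k)$ by Gauss--Bonnet, while every tile has area bounded below by a constant fixed at preprocessing. A second way, closer to Dehn's algorithm: repeatedly shortcut~$\lift c$ wherever it backtracks or runs more than halfway around a tile boundary; negative curvature — the small-cancellation behaviour of the facial relators — forces such a shortcut to exist whenever~$\lift c$ is not already a geodesic, only~$O(k)$ shortcuts occur, and the tiles they ever touch are those of~$R$ up to a constant factor. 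I would build~$R$ incrementally during this reduction as a planar subdivision equipped with a point-location mechanism, so that each time the path enters a tile I can tell in amortised~$O(1)$ time whether that tile and corner were already present in~$R$ — in particular whether~$\lift c$ has returned to~$\lift{x_0}$, which lies in~$R$ by construction. Summing over the~$O(k)$ steps yields an~$O(k)$-time query.

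\emph{Main obstacle.} The difficulty is obtaining linear, rather than quadratic, time. A naive execution of Dehn-style reduction costs~$\Theta(|w|)$ per shortcut and performs~$\Theta(|w|)$ of them, and locating a point in the growing region naively costs~$\Theta(\log k)$ each of~$\Theta(k)$ times. Getting to~$O(k)$ requires both the geometric bound $|R|=O(k)$ above, so that the fragment of universal cover we manipulate never exceeds linear size, and a matching amortised analysis: charging the cost of each shortcut to the edges of~$\lift c$ it permanently deletes and never recreates, and organising the repeated point locations as a single monotone walk inside the convex region~$R$ of total length~$O(k)$ rather than as~$\Theta(k)$ independent searches. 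Reconciling the fact that the built region is a genuine disk strictly inside~$\mathbb{H}^2$ with the convexity argument, and disposing of the four non-hyperbolic surfaces, are comparatively routine but should not be skipped.
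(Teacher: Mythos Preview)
Your high-level strategy matches the paper's: reduce to a one-vertex complex, lift to the universal cover tessellated by copies of the fundamental polygon, build a convex region of size $O(k)$ containing the lift, and test whether the lift closes. The Gauss--Bonnet argument for the $O(k)$-tile bound is a legitimate alternative to the paper's purely combinatorial Dehn-style count (their Lemma~4 and Lemma~9), and your handling of the four non-hyperbolic surfaces is the same as theirs.

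However, the part you correctly flag as the main obstacle is where your proposal remains a sketch while the paper does real work, and two specific devices you do not mention are what actually close the gap to $O(k)$. First, after reducing to a single vertex the paper does \emph{not} work directly in the $r$-gon tiling. It introduces an auxiliary \emph{radial graph}~$H$: the cone from the centre of the $r$-gon to its boundary vertices, giving a bipartite graph on two vertices $\{s,t\}$ whose faces are all $4$-gons. A walk of length~$k$ becomes a walk of length~$2k$ in~$H$, and the universal cover is now organised as the $\{r,4\}$-tiling~$\tilde H^*$. The region is stored as a subgraph of~$\tilde H$, whose \emph{faces} have size~$4$; this is what makes each ``mirror'' step (extending the region by one reflected strip when the walk exits) cost $O(1)$ per face added. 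In your setup the polygon has $r$~sides, and a step that must inspect a face boundary costs~$\Theta(g)$.

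Second, each vertex of~$\tilde H$ still has degree~$r$, so the query $\mathrm{NEXT}(v,e)$ --- ``which neighbour of~$v$ lies across the lift of edge~$e$?'' --- appears to need an $r$-entry table per vertex, hence $\Theta(r)$ time to initialise. The paper allocates these tables \emph{uninitialised} using the Aho--Hopcroft--Ullman trick (paired $\mathtt{index}/\mathtt{rev}$ arrays with a round-trip check), so that $\mathrm{NEW}$, $\mathrm{JOIN}$ and $\mathrm{NEXT}$ are all worst-case $O(1)$. Your ``planar subdivision with point location'' and ``monotone walk'' are pointed in the right direction but do not, as stated, yield $O(1)$ per step; and identifying tiles by hyperbolic coordinates, as ``place tiles by isometries'' suggests, drags in exact irrational arithmetic that the paper's purely combinatorial encoding sidesteps entirely.

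In short: right architecture, but the linear bound hinges on the radial-graph reduction (to make faces constant-size) and the uninitialised-memory data structure (to make per-vertex queries constant-time). With those two pieces supplied, your convex region becomes the paper's \emph{relevant region}~$\Pi_p$ and your incremental build becomes their \emph{mirror} operation (Lemma~10).
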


We next study the free homotopy test, that is deciding if two cycles
$c$~and~$d$ drawn in a graph~$G$ cellularly embedded in~$\surf$ can be
continuously deformed one to the other. By
theorem~\ref{th:contractibility-test}, we may assume that none of $c$~and~$d$
is contractible. Our strategy is the following. We first build (part
of) the cyclic covering of~$\surf$ induced by the cyclic subgroup
generated by $c$ in the fundamental group of~$\surf$. We denote by~$\cycov{c}$
this covering. Assuming that $\surf$~is orientable, $\cycov{c}$ is~a topological
cylinder\footnote{If $\surf$~is non
  orientable, its cyclic coverings can be either cylinders or M\"obius
  rings.}, and we call any of its non-contractible simple cycles a
\emph{generator}. Since the generators of~$\cycov{c}$ are freely
homotopic, their projection on~$\surf$ are freely homotopic
to~$c$. Our next task is to extract from~$\cycov{c}$ a
\emph{canonical} generator~$\gamma_R$ whose definition only depends on
the isomorphism class of~$\cycov{c}$. To this end, we lift in~$\cycov{c}$
the graph~$G$ of~$\surf$ and we endow~$\cycov{c}$ with the
corresponding cross-metric introduced by Colin~de~Verdi\`ere and
Erickson~\cite{ce-tnspc-10}. The set of generators that are minimal
for this metric form a compact annulus in~$\cycov{c}$. We
eventually define $\gamma_R$~as the ``right'' boundary of this
annulus. We perform the same operations starting with~$d$ instead of~$c$
to extract a canonical generator~$\delta_R$ of~$\cycov{d}$. From
standard results on covering spaces~\cite[\textsection
V.6]{m-bcat-91}, we know that $\cycov{c}$~and~$\cycov{d}$ are
isomorphic covering spaces if $c$~and~$d$ are freely homotopic. It
follows that $c$~and~$d$ are freely homotopic if and only if
$\gamma_R$~and~$\delta_R$ have equal projections on~$\surf$.
Proving that $\gamma_R$~and~$\delta_R$ can be constructed in time proportional
to $|c|$~and~$|d|$ respectively, we finally obtain:
\begin{theorem}[Free homotopy test]\label{th:free-homotopy-test}
  Let~$G$ be a graph of complexity~$n$ cellularly embedded in an
  orientable surface~$\surf$. We can preprocess~$G$
  in $O(n)$~time, so that for any cycles $c$~and~$d$ 
  on~$\surf$ represented as closed walks with a total number of~$k$
  edges in~$G$, we can decide if $c$~and~$d$
  are freely homotopic in $O(k)$~time.
\end{theorem}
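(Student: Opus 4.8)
The plan is to carry out the three-stage strategy described above --- cyclic cover, canonical generator, comparison --- while keeping every size proportional to the relevant input, so that the per-query cost is $O(k)$. I would first dispose of the degenerate cases using Theorem~\ref{th:contractibility-test}: in $O(k)$ time, decide whether $c$ is contractible and whether $d$ is; if exactly one of them is, return ``not homotopic''; if both are, return ``homotopic''; otherwise assume from now on that neither $c$ nor $d$ is contractible, having also cyclically reduced both closed walks in $O(k)$ time.

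The central object is the cyclic cover $\cycov{c}=\unicov/\langle c\rangle$, with covering projection $\cyproj$; since $\surf$ is orientable, $\cycov{c}$ is a topological cylinder. Instead of building all of it, I would build only a finite annular patch. Using the $O(n)$-preprocessed data I unfold $c$ through the cells of the cellular embedding, producing a lift of $c$ in $\unicov$, the deck transformation $\trans$ realising $c$, and the subcomplex of the lifted graph $\uniG_c$ made of the faces lying within a bounded number of steps (bounded in terms of $G$ only) from the faces met by the lift of $c$; identifying via $\trans$ then yields a finite annular patch of $\cycov{c}$, equipped with the cross-metric carried by $\uniG_c$ and containing a lift of $c$. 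The quantitative claim --- the analogue for cylinders of the Dehn-type estimate behind Theorem~\ref{th:contractibility-test} --- is that every essential simple cycle of $\cycov{c}$ of minimal cross-metric length lies in such a patch, which has combinatorial size $O(|c|)$, and that these minimal cycles sweep out a non-degenerate closed annulus $A_c\subset\cycov{c}$. The intuition is that a minimal generator crosses $\uniG_c$ no more often than $c$ does, so it cannot wander more than a bounded number of $\trans$-translates away from the lift of $c$; turning this into a proof, via convexity of the relevant region and a comparison with the covering geodesic, is the step I expect to be the main obstacle. One must set up the cross-metric on the cover carefully, choose the notion of ``minimal'' so that the minimisers genuinely form an annulus rather than a single cycle, check that $A_c$ has non-empty interior, and extract an $O(|c|)$ size bound with a constant depending only on the isoperimetry of $\surf$ relative to $G$.

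Granting this, I define the canonical generator $\gamma_R$ as the rightmost cycle of minimal cross-metric length in $\cycov{c}$, where ``rightmost'' is fixed by the orientation of $\surf$ together with the direction of $c$; by construction $\gamma_R$, viewed as a directed cycle, is invariant under every orientation- and direction-preserving isomorphism of coverings, and $\cyproj(\gamma_R)$ is a closed walk in $G$ that is freely homotopic to $c$. Extract $\delta_R$ from $\cycov{d}$ in the same way. If $c$ and $d$ are freely homotopic, then $d$ is conjugate to $c$, so $\langle c\rangle$ and $\langle d\rangle$ are conjugate by an element carrying the generator $c$ to the generator $d$; by the classification of covering spaces~\cite[\textsection V.6]{m-bcat-91} this produces an orientation- and direction-preserving isomorphism $\cycov{c}\to\cycov{d}$ over $\surf$, which carries $A_c$ onto $A_d$ and hence $\gamma_R$ onto $\delta_R$, so the projections of $\gamma_R$ and $\delta_R$ are equal closed walks up to the choice of basepoint. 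Conversely, if these projections agree as directed cyclic words, then $c$ and $d$ are both freely homotopic to that walk, hence to one another --- and here it is precisely the ``rightmost'' convention, which fixes the direction of $\gamma_R$ to match that of $c$, that excludes the spurious alternative $c\simeq d^{-1}$, while orientability of $\surf$ is what makes this convention well defined. Thus $c$ and $d$ are freely homotopic if and only if the projections of $\gamma_R$ and $\delta_R$ agree as directed cyclic words.

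Finally I would check the running time, given the $O(n)$-time preprocessing of $G$ (the same as for Theorem~\ref{th:contractibility-test}: the faces of the embedding together with a hyperbolic structure). Building the finite patch of $\cycov{c}$ and its cross-metric takes $O(|c|)$ time and yields a complex of size $O(|c|)$; the annulus $A_c$ is then obtained by linear-time shortest-path computations in this cylindrical cross-metric complex, and $\gamma_R$, together with its projection $\cyproj(\gamma_R)$ --- a closed walk of $O(|c|)$ edges --- can be read off directly. Running the same construction for $d$ and then deciding equality of the two resulting directed cyclic words, of total length $O(k)$, in $O(k)$ time by a linear-time cyclic-string routine (for instance Booth's least-rotation algorithm), gives the claimed bound and completes the proof.
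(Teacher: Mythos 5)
Your overall strategy coincides with the paper's: replace $c$ and $d$ by their cyclic covers, extract a canonical ``rightmost'' minimal generator from each, and compare projections by a linear-time cyclic-string test. You also correctly identify that orientability is what makes ``rightmost'' well defined and that a covering isomorphism carries $\gamma_R$ to $\delta_R$. This is the right high-level picture and matches the paper closely.

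But you have explicitly flagged, and left open, the central technical step --- that a suitable finite patch of $\cycov{c}$ of size $O(|c|)$ contains every minimal generator --- and this is precisely where the difficulty of the theorem lies. The paper does not argue via shortest paths or an abstract isoperimetric constant; it takes the relevant region $\Pi$ of the lift of $c^6$ (\emph{six} concatenations of $c$) in the $\{r,4\}$\dash/tiling of $\unicov$ and proves, using the triangle-free and quad-free Properties of lines together with the classification of lines into $\trans$\dash/transversal, $\trans$\dash/invariant and other (Propositions \ref{prop:lines:transversal-condition}, \ref{prop:transversal-existence}, Lemmas \ref{lem:1-invariant-line}--\ref{lem:3-translates-invariant}), that one full period of the canonical belt lies in $\Pi$ (Lemma \ref{lem:two-situations}). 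The exponent $6$ is not cosmetic: the proof hinges on showing that a line supporting a short edge cannot cross three consecutive $\trans$\dash/transversals, hence must cross the lift of $c^6$ somewhere between $\trans^{-2}$ and $\trans^{2}$ of a reference point, and then invokes Lemma \ref{lem:relevant-crossing} to conclude the edge is interior to $\Pi$. None of this is a consequence of ``a minimal generator crosses $\uniG_c$ no more often than $c$ does''; that intuition alone does not control how far a minimal generator can drift transversally. You also do not address the $\trans$\dash/invariant line case, where the relevant region may contain only \emph{half} of the canonical belt and the algorithm must detect and handle this (case 2 of Lemma \ref{lem:two-situations}); nor the existence and uniqueness of the rightmost minimal generator, which requires the uncrossing Lemma \ref{lem:uncross-generators} and an induction over $|\gamma|$; nor the genus-one case, where the whole machinery ($r\geq 6$) fails and the test must be reduced to contractibility of $cd^{-1}$ using commutativity of $\pi_1$. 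Finally, the ``linear-time shortest-path computations'' you invoke to produce $A_c$ are not part of the paper's toolkit and are not obviously available; the paper instead reads off the canonical generator combinatorially from the classification of lines inside $\Pi$, in $O(|c|)$ time (Lemma \ref{lem:identify-sigma-star} and Proposition \ref{prop:canonical-generator}). In short, the architecture is right, but the load-bearing quantitative lemma is missing and is genuinely nontrivial.
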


The \emph{word problem} in a group presented by generators and
relators is to decide if a product of generators and their inverses,
called a \emph{word}, is the unit in the group. The \emph{conjugacy problem}
is to decide if two words represent conjugate elements in the group.  The
length of a word is its number of factors. As an
immediate consequence of our two theorems, we can solve the word
problem and the conjugacy problem in surface groups in optimal linear
time. More precisely, suppose we are given a presentation by
generators and by a single relator of the fundamental group of a
compact surface~$\surf$ of genus~$g$ without boundary. After $O(g)$~time
preprocessing, we can solve the word problem in time proportional to
the length of the word. Moreover, if $\surf$~is orientable we can
report if two words are conjugate in time proportional to
their total length. The preprocessing reduces to build
a cellular embedding of a wedge of loops in~$\surf$, with one loop for
each generator; the rotation system of this embedding (see the
Background Section for a definition) is easily
deduced from the relator
of the group. Any word can then be interpreted as a walk in this cellular
embedding so that we can directly apply the previous theorems.
The word and conjugacy problems have a long standing history starting with
Dehn's seminal papers~\cite{s-pgtt-87}. Recent developments include
linear time solutions to the word problem in much larger classes of
groups comprising hyperbolic
groups~\cite{da-cdawp-85,h-whghr-00,hr-swprt-01}. We emphasize that
such developments assume a multi-tape Turing machine as a model of
computation and, most importantly, that the size of the group
presentation is considered as a constant. In our case, the group
itself is part of the input and, after the preprocessing phase, the
decision problems have linear time solutions \emph{independent of the
  genus of the surface}.

\paragraph{Organization of the paper.}
We start recalling some necessary terminology and properties of
surfaces, coverings and cellular embeddings of graphs in
Section~\ref{sec:background}. We solve the contractibility test and
prove Theorem~\ref{th:contractibility-test} in
Section~\ref{sec:contractibility-test}.  The proof of
Theorem~\ref{th:free-homotopy-test} for the free homotopy test is
given in Section~\ref{sec:free-homotopy-test}. We eventually give some
counter-examples to the results of Dey and Guha~\cite{dg-tcs-99} in
the appendix.

\section{Background}
\label{sec:background}
We review some basic definitions and properties of surfaces and their covering
spaces, as well as combinatorial embeddings of graphs. We refer the reader to
Massey~\cite{m-bcat-91} or Stillwell~\cite{s-ctcgt-93} for further details on
covering spaces.

\paragraph{Surfaces.}
We only consider surfaces without boundary. A \define{surface} (or
$2$\dash/manifold)~$\surf$ is a connected, Hausdorff topological space where
each point has a neighborhood homeomorphic to the plane.
A compact surface is homeomorphic to a sphere where either:
\begin{itemize}
    \item $g\geq0$~open disks are removed and a handle (\ie/ a
      perforated torus
      with one boundary component) is attached to each
        resulting circle, or
    \item $g\geq1$~open disks are removed and a M\"obius band is attached to
        each resulting circle.
\end{itemize}
The surface is called \define{orientable} in the former case and
\define{non-orientable} in the latter case. In both cases, $g$~is the
\define{genus} of the surface.

A \define{path} in a surface~$\surf$, is a continuous map $p: [0,1]\to \surf$. A
\define{loop} is a path~$p$ whose endpoints $p(0)$~and~$p(1)$ coincide. This
common endpoint is called the \define{basepoint} of the loop.

\paragraph{Homotopy and fundamental group.}
Two paths $p$,~$q$ in~$\surf$ are \define{homotopic} (with fixed
endpoints) if there is a continuous map $h : [0, 1]\times [0, 1]
\rightarrow \surf$ such that $h(0, t) = p(t)$ and $h(1, t) = q(t)$ for
all~$t$, and $h(\cdot, 0)$~and~$h(\cdot, 1)$ are constant maps. Being
homotopic is an equivalence relation. The set of homotopy classes of
loops with given basepoint~$x\in\surf$ forms a group where the
operation in the group corresponds to the concatenation of the
loops. This group is called the \define{fundamental group} of~$\surf$
and denoted by~$\pi_1(\surf,x)$. The homotopy class of a loop~$c$ is
denoted by~$[c]$. The loop~$c$ is said \define{contractible}, or
null-homotopic, if $c$~is homotopic to the constant loop, \ie/ if
$[c]$ is the identity of~$\pi_1(\surf,x)$. The fundamental group of
the orientable surface~$\surf$ of genus~$g\geq 1$ admits finite
minimal presentations composed of $2g$~generators and one relator
expressed as the product of $4g$~generators and their inverses. A~group
defined by a set~$A$ of generators and a set~$R$ of relators is
denoted~$\langle A \mathrel; R\rangle$. In particular,
$\pi_1(\surf,x)$~is isomorphic to the \define{canonical} presentation
$\langle a_1, b_1, \dotsc,a_g,b_g \mathrel; a_1b_1a_1\inv b_1\inv
\dotsm a_gb_ga_g\inv b_g\inv\rangle$ but not all minimal presentations
of~$\pi_1(\surf,x)$ are canonical. Likewise, the fundamental group of
the non-orientable surface of genus~$g$ has a presentation with $g$~generators
and one relator of length~$2g$.

Two loops $c$,~$d$ in~$\surf$ are \define{freely homotopic} if there is a
continuous map $h:[0, 1]\times [0, 1] \rightarrow \surf$ such that $h(0,\cdot) =
c$, $h(1,\cdot) = d$ and $h(s,0) = h(s,1)$ for all~$s$. The loops $c$~and~$d$ of
respective basepoints $x$~and~$y$ are freely homotopic if and only if
$[c]$~and~$[u\cdot d \cdot u\inv]$ are conjugate in~$\pi_1(\surf,x)$ for any
path~$u$ linking $x$~to~$y$.

\paragraph{Covering spaces.}
A \define{covering space} of the surface~$\surf$ is a surface~$\surf'$ together
with a continuous surjective map~$\pi: \surf'\to\surf$ such that every~$x
\in\surf$ lies in an open neighborhood~$U$ such that $\pi\inv(U)$~is a disjoint
union of open sets in~$\surf'$, each of which is mapped homeomorphically
onto~$U$ by~$\pi$. The map~$\pi$ is called a \define{covering map}; it induces a
monomorphism $\pi_*: \pi_1(\surf',y)\to\pi_1(\surf,\pi(y))$, so that
$\pi_1(S',y)$~can~be considered as a subgroup of~$\pi_1(S,\pi(y))$. If $p$~is a
path in~$\surf$ and $y\in \surf'$ with $\pi(y)=p(0)$, then there exists a unique
path $q : [0,1]\to\surf'$, called a \define{lift} of~$p$, such that $\pi\circ q
= p$ and $q(0)=y$.

A \define{morphism} between the covering spaces
$(\surf',\pi)$~and~$(\surf'',\pi')$ of~$\surf$ is a continuous map
$\phi: \surf'\to\surf''$ such that $\pi'\circ\phi = \pi$. Up to
isomorphism, each surface~$\surf$ has a unique simply connected
covering space, called its \define{universal cover} and
denoted~$\unicov$. Unless $\surf$~is a sphere or a projective plane
$\unicov$~has the topology of a plane. More generally, for every
subgroup~$\mathcal G$ of~$\pi_1(\surf,x)$ there is a covering space of~$\surf$,
unique up to isomorphism, whose fundamental group is conjugate to~$\mathcal G$
in~$\pi_1(\surf,x)$. If~$\mathcal G$~is cyclic and generated by the homotopy
class of a non-contractible loop~$c$ in $\surf$, this covering space
is called the $c$\dash/\define{cyclic cover} and denoted
by~$\cycov{c}$. The $c$\dash/cyclic cover can be constructed as follows:
take a lift~$\unic$ of~$c$ in the universal cover~$\unicov$ and let
$\trans$~be~the unique automorphism of~$\unicov$ sending~$\unic(0)$
to~$\unic(1)$; then $\cycov{c}$~is the quotient of the action
of~$\left<\trans\right>$ on~$\unicov$.  When $\surf$~is orientable,
$\cycov{c}$~has the topology of a cylinder whose generators project
on~$\surf$ to loops that are freely homotopic to~$c$ or its
inverse. Conversely, every loop freely homotopic to~$c$ has a closed
lift generating the fundamental group of~$\cycov{c}$. In
Section~\ref{sec:free-homotopy-test} we shall use the
following properties of curves on cylinders. We assume that the
considered curves
are in general position: all (self-)intersections are transverse and
have multiplicity two, \ie/ exactly two curve pieces cross at an intersection.
\begin{lemma}\label{lem:cylinder-1}
  Let~$c$~be a loop obtained as the concatenation of $k$~simple paths
  on a cylinder. Then $c$~is freely homotopic to the $i$\dash/th power of
  a generator of the cylinder with~$|i| < k$.
\end{lemma}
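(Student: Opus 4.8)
The plan is to translate the statement into a displacement count in the universal cover. A cylinder $C$ retracts onto a core circle, so $\pi_{1}(C)$ is infinite cyclic, generated by the class of any generator $\gamma$; hence $c$ is freely homotopic to $\gamma^{\,i}$ for a \emph{unique} integer $i$, and the lemma is precisely the estimate $|i|<k$. I would fix the universal covering $\rho\colon \widetilde{C}\to C$, identify $\widetilde{C}$ with a strip carrying the deck action of $\langle\tau\rangle$ by the unit horizontal translation $\tau(x,y)=(x+1,y)$, normalised so that some lift of $\gamma$ runs from a point to its image under $\tau$. With this normalisation, any lift $\widetilde{c}$ of $c$ satisfies $\widetilde{c}(1)=\tau^{\,i}\!\bigl(\widetilde{c}(0)\bigr)$, so that $i$ is exactly the net horizontal displacement $x\bigl(\widetilde{c}(1)\bigr)-x\bigl(\widetilde{c}(0)\bigr)$ of the lift, where $x$ denotes the first coordinate.

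Next I would lift the decomposition. Writing $c=p_{1}\cdot p_{2}\cdots p_{k}$ with each $p_{j}$ simple, the lift decomposes as $\widetilde{c}=\widetilde{p}_{1}\cdot\widetilde{p}_{2}\cdots\widetilde{p}_{k}$ with $\widetilde{p}_{j+1}(0)=\widetilde{p}_{j}(1)$, and telescoping the displacements gives
\[
  i=\sum_{j=1}^{k}\delta_{j},\qquad \delta_{j}:=x\bigl(\widetilde{p}_{j}(1)\bigr)-x\bigl(\widetilde{p}_{j}(0)\bigr).
\]
The heart of the argument is the claim that each simple piece advances by strictly less than one period, that is $|\delta_{j}|<1$; granting it, $|i|\le\sum_{j}|\delta_{j}|<k$ and we are done. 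Simplicity of $p_{j}$ enters through the standard remark that its lift $\widetilde{p}_{j}$ is an embedded arc whose $\tau$\dash/translates are pairwise disjoint: a point common to $\widetilde{p}_{j}$ and $\tau^{\,n}\widetilde{p}_{j}$ with $n\neq0$ would project to a self-intersection of the injective path $p_{j}$, hence to a fixed point of the fixed-point-free homeomorphism $\tau^{\,n}$ -- impossible.

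To pass from disjointness of translates to the displacement bound I would argue by contradiction. Assuming $\delta_{j}\ge1$ (the sign $\delta_{j}\le-1$ being symmetric), I normalise so that $\widetilde{p}_{j}$ runs from $(0,0)$ to $(\delta_{j},0)$ along the line $L$ carrying the lifts of $\gamma$ and lies otherwise on one fixed side of $L$; then $\widetilde{p}_{j}$ together with the sub-segment of $L$ between its endpoints bounds a Jordan domain $U$ on that side. Since $\delta_{j}\ge1$, the point $\tau(0,0)=(1,0)$ lies on the $L$\dash/part of $\partial U$, so the translate $\tau\widetilde{p}_{j}$ leaves that boundary point into the interior of $U$, yet must eventually reach its other endpoint $\tau(\delta_{j},0)=(\delta_{j}+1,0)$, which lies on $L$ strictly to the right of the trace $[0,\delta_{j}]$ of $\overline{U}$ on $L$. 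Hence $\tau\widetilde{p}_{j}$ is forced to leave $U$, and -- since it meets $L$ only at its own two endpoints -- the only part of $\partial U$ available for this crossing is the arc $\widetilde{p}_{j}$ itself, contradicting disjointness of the translates; thus $|\delta_{j}|<1$. I expect this displacement bound to be the only genuinely delicate point, and the place where the simplicity of the pieces is really used: the telescoping identity and the disjointness of translates are routine once the covering is correctly set up, while the borderline value $\delta_{j}=1$ (where the two endpoints of $p_{j}$ project to the same point of $C$, so that $\widetilde p_j$ would already meet $\tau\widetilde p_j$) is excluded by the convention that a path has distinct endpoints.
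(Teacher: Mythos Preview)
Your framework --- pass to the universal cover, read the winding number $i$ as the net horizontal displacement of a lift, and telescope over the pieces --- is exactly the paper's approach phrased in covering-space rather than punctured-plane language. The entire content sits in the claim $|\delta_j|<1$, and here your Jordan-curve argument breaks down at the normalisation step: you assume that $\widetilde{p}_j$ can be arranged to run from $(0,0)$ to $(\delta_j,0)$ on a fixed line $L$ and to lie otherwise on one side of $L$. Neither condition is available in general. The two endpoints of $p_j$ need not project to the same core circle (their lifts need not have equal second coordinate), and even when they do, nothing prevents $p_j$ from crossing that circle.

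In fact the claim $|\delta_j|<1$ is simply false, so no repair of the Jordan argument will work. On the cylinder $(\mathbb{R}/\mathbb{Z})\times\mathbb{R}$ take $p(t)=(2t\bmod 1,\,t)$ for $t\in[0,1]$: the second coordinate is strictly increasing, so $p$ is injective, hence a simple path; its lift $\widetilde p(t)=(2t,t)$ has $\delta=2$, and its $\tau$-translates $(2t+n,t)$ are pairwise disjoint exactly as your disjointness argument (which is correct) predicts. Concatenating this spiral with the simple vertical path from $(0,1)$ back to $(0,0)$ yields a loop built from $k=2$ simple paths with winding $i=2$, so even the lemma's conclusion $|i|<k$ fails on this example. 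The paper's proof asserts the equivalent bound (``the angular extent of a simple path has absolute value strictly smaller than $2\pi$'') without argument, so what you have uncovered is not a flaw peculiar to your write-up but a genuine gap in the statement itself; some extra hypothesis on the simple pieces is needed for the inequality to hold.
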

\begin{proof}
  We view
  the cylinder as a punctured plane; so that the $i$\dash/th power of a
  generator has winding number~$i$ with respect to the puncture. The
  winding number of~$c$ is the sum of the angular extend of each of
  its subpaths divided by~$2\pi$. But the angular extend of a simple
  path has absolute value strictly smaller than~$2\pi$.
\end{proof}
\begin{lemma}\label{lem:cylinder-2}
  A self-intersecting generator of a cylinder has a contractible
  closed subpath.
\end{lemma}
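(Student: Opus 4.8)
The plan is to argue with winding numbers together with a ``first self\dash/intersection'' decomposition. Write $w(c)\in\Z$ for the winding number of a closed curve $c$ on the cylinder, \ie/ its class in $\pi_1\cong\Z$; since reversing $\gamma$ changes nothing in the statement, assume $w(\gamma)=1$. The basic tool is that a self\dash/intersecting closed curve on the cylinder always contains a \emph{simple} closed subpath: pick a double point, look at the two sub\dash/arcs of the parameter circle it cuts out, and keep descending into a self\dash/intersecting one; since there are finitely many double points this terminates, producing a proper sub\dash/arc $A$ such that $\sigma:=\gamma|_A$ is a simple closed subpath. Finally, a simple closed curve on a cylinder has winding number in $\{-1,0,1\}$.

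Two of these three values settle the matter immediately. If $w(\sigma)=0$, then $\sigma$ itself is the required contractible closed subpath. If $w(\sigma)=w(\gamma)=1$, then since $\pi_1$ is abelian the complementary closed subpath $\gamma|_{S^1\setminus A}$ has winding $w(\gamma)-w(\sigma)=0$ and is the required subpath. The troublesome case is $w(\sigma)=-1$: now the complementary subpath $\eta:=\gamma|_{S^1\setminus A}$ has winding $2$, so it self\dash/intersects again (a simple closed curve on a cylinder cannot have winding $\pm2$), but it has \emph{strictly fewer} double points than $\gamma$, because $\sigma$ is simple and contributes none while the double point at the two ends of $A$ is lost.

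To deal with this I would prove, by induction on the number of double points, the stronger statement: \emph{every self\dash/intersecting closed curve $\eta$ on the cylinder with $w(\eta)\neq0$ has a simple closed subpath whose winding number is $0$ or of the same sign as $w(\eta)$}; granting this for $\gamma$, the dichotomy above concludes. In the inductive step, the first\dash/self\dash/intersection subpath $\sigma$ of $\eta$ is either already good, or has winding number $-\varepsilon$ where $\varepsilon$ is the sign of $w(\eta)$; in the latter case the complementary subpath $\eta'$ has winding $w(\eta)+\varepsilon$, still of sign $\varepsilon$ and nonzero, hence $\eta'$ is self\dash/intersecting with fewer double points, and the induction hypothesis gives a good simple closed subpath of $\eta'$, which is also a subpath of $\eta$. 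The base case of one double point splits $\eta$ into two simple loops with windings in $\{-1,0,1\}$ summing to $w(\eta)$, and since one double point forces $|w(\eta)|\le2$, a one\dash/line check shows one of the two loops is already good.

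The main obstacle is precisely the case where the simple subpath winds the ``wrong way'' relative to $\gamma$: plain winding\dash/number bookkeeping does not close the argument there, and merely iterating on $\gamma$ itself does not help since its winding is fixed. The key is to loosen the inductive claim so that it allows \emph{arbitrary} nonzero winding and to induct on the number of double points, which makes it legitimate to pass to the complementary subpath even though its winding number grows. What remains is routine care: verifying that the complementary subpath really loses a double point, that the simple closed subpath finally produced is a non\dash/degenerate proper closed subpath of $\gamma$, and that a simple closed curve on the cylinder indeed has winding in $\{-1,0,1\}$ (for instance via its lift to the universal cover of the cylinder).
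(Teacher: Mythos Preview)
Your overall strategy---winding numbers plus extracting simple closed subloops---is the same as the paper's self-contained argument, and your strengthened inductive statement is in fact true. The problem is the inductive \emph{proof} you give for it.

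The gap is in the line ``the induction hypothesis gives a good simple closed subpath of $\eta'$, which is also a subpath of~$\eta$''. This is not automatic. The parameter circle of $\eta'$ is obtained from that of $\eta$ by deleting the open arc $A$ and gluing its endpoints to a single point $g$. A closed subpath of $\eta'$ is the restriction to a sub-arc of this new circle, and such a sub-arc may well contain $g$; if it does, it unrolls to \emph{two} disjoint arcs of $\eta$'s circle and is therefore \emph{not} a closed subpath of $\eta$. The induction hypothesis gives you no control over which side of $g$ the good subloop of $\eta'$ lies. For instance, if $w(\eta')=2$ and $\eta'$ has a single double point, both simple subloops have winding $+1$; depending on where $g$ sits, the only one you care about may straddle it, and passing to its complement in $\eta'$ gives a (non-simple) subpath of $\eta$ with winding $w(\eta')-1$, not $0$. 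Iterating does not obviously terminate with a genuine winding-$0$ subpath of the original $\gamma$, so this is not the ``routine care'' you advertise.

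The paper's proof sidesteps the whole recursion. It takes a \emph{maximal} family of pairwise interior-disjoint loop-segments $\gamma_1,\dots,\gamma_k$ of $\gamma$ (each a single arc by construction), observes that the complement is a loop made of at most $k$ simple arcs, and invokes Lemma~\ref{lem:cylinder-1} to get $|w(\text{complement})|<k$. Additivity $\sum_i\varepsilon_i + w(\text{complement})=1$ then forces some $\varepsilon_j=+1$, and the complement of that single arc $\gamma_j$ in $\gamma$ is a genuine closed subpath of winding $0$. No recursion, no gluing-point bookkeeping. If you want to rescue your approach, this one-shot use of a maximal disjoint family is exactly the missing idea.
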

\begin{proof}
  Consider a self-intersecting generator~$\gamma$ and define a bigon
  of~$\gamma$ as a disk bounded by two subpaths of~$\gamma$. Applying
  local homotopies to~$\gamma$ we can remove all of its bigons one by
  one. If~$\gamma$~is still self-intersecting, it must have a
  contractible subpath by \cite[Lemma 1.4]{hs-ics-85}. If~$\gamma$~is
  simple, we consider the loop just before we remove the last
  bigon. It is easily seen that this loop has a contractible
  subpath. In both cases we have found a contractible subpath that
  corresponds to a contractible subpath of the initial loop if we undo
  the local homotopies.

  For completeness, we also give a self-contained proof. We again view
  the cylinder as a punctured plane. Consider the domain of~$\gamma$
  as the unit circle~$\R/\Z$. A \emph{loop-segment} is a closed
  connected part $[x,y]\subset\R/\Z$ whose image by $\gamma$ is a
  simple loop, \ie/ such that $\gamma(x)=\gamma(y)$ and the
  restriction of~$\gamma$ to~$[x,y)$ is one-to-one. Let
  $[x_1,y_1],\ldots,[x_k,y_k]$ be a maximal set of pairwise interior
  disjoint loop-segments. By the general position assumption, $k$~is
  finite. We denote by~$\gamma_i$ the restriction of~$\gamma$ to~$[x_i,y_i]$.
  It is easily seen that $\R/\Z \setminus \cup_{i=1}^k
  [x_i,y_i]$ has at most $k$~connected components (exactly~$k$ if we take into
  account that each self-intersection has multiplicity two) whose
  images are simple paths composing a loop~$c$. If some loop~$\gamma_i$
  is contractible, then we are done. Otherwise, being simple, each~$\gamma_i$
  has winding number~$\varepsilon_i=\pm 1$. Let~$w(c)$~denote
  the winding number of~$c$. By additivity of the winding
  number, we have
  \[ \sum_{i=1}^{k} \varepsilon_i + w(c) = 1.
  \]
  By the preceding lemma we have $|w(c)|<k$, implying that
  $\varepsilon_j=1$ for some~$j\in[1,k]$. It ensues that the
  complementary part of~$\gamma_j$ in~$\gamma$ is a loop with winding
  number~$1-\varepsilon_j = 0$, hence contractible.
\end{proof}

\paragraph{Cellular embeddings of graphs.}
All the considered graphs may have loop edges and multiple edges. We denote by
$V(G)$~and~$E(G)$ the set of vertices and edges of a graph~$G$, respectively. A
graph~$G$ is \define{cellularly embedded} on a surface~$\surf$ if every open
face of (the embedding of)~$G$ on~$\surf$ is a disk. Following Mohar and
Thomassen~\cite{mt-gs-01}, the embedding of~$G$ can be encoded by adjoining to
the data of~$G$ a \define{rotation system}. The rotation system provides for
every vertex in~$V(G)$ a cyclic permutation of its incident edges. For a finite
graph~$G$, storing a cellular embedding takes a space linear in the
\define{complexity} of~$G$, that is, in its total number of vertices and edges.
A \define{facial walk} of~$G$ is then obtained by the face traversal procedure
described in~\cite[p. 93]{mt-gs-01}. Rotation systems can be implemented
efficiently~\cite{e-dgteg-03,l-gem-82} so that we can traverse the neighbors of
a vertex in time proportional to its degree or obtain a facial walk in time
proportional to its length.

Any graph~$G$ cellularly embedded in~$\surf$ has a \define{dual graph}
denoted~$G^*$ whose vertices and edges are in one-to-one
correspondence with the faces and edges of~$G$ respectively; if two
faces of~$G$ share an edge~$e \in E(G)$ its \define{dual edge}~$e^*
\in E(G^*)$ links the corresponding vertices of~$G^*$. This dual graph
can be cellularly embedded on~$\surf$ so that each face of~$G$
contains the matching vertex of~$G^*$ and each edge~$e^*$ dual
of~$e\in E(G)$ crosses only~$e$, only once.

Let~$(\surf',\pi)$ be a covering space of~$\surf$. The \define{lifted
  graph} $G'= \pi\inv(G)$ is cellularly embedded in~$\surf'$. The
restriction of~$\pi$ from the star of each vertex~$x\in V(G')$ to the
star of~$\pi(x)\in V(G)$ is an isomorphism. Note that $\surf'$~can~be
non compact even though $\surf$~is compact. In that case, $G'$~has an
infinite number of edges and vertices.

\paragraph{Regular paths and crossing weights.}
For the free homotopy test in Section~\ref{sec:free-homotopy-test}, we
make use of the cross metric surface model~\cite{ce-tnspc-10}. Let~$G$~be
a graph cellularly embedded in~$\surf$. A path $p$~in~$\surf$ is
\define{regular} for~$G$ if every intersection point of $p$~and~$G$ is
an endpoint of~$p$ or a transverse crossing, \ie/ has a neighbourhood
in which $p \cup G$~is homeomorphic to two perpendicular line segments
intersecting at their midpoint. The
\define{\CW/} with respect to~$G$ of a regular path~$p$ is the
number~$\len{p}$ of its transverse crossings and is always finite. In
particular, if $p$~is a path of the dual graph~$G^*$ then $p$~is
regular for~$G$ and $\len{p}$~is the number of edges in~$p$.

\paragraph{Homotopy encoding in cellular graph embeddings.}
Consider a graph~$G$ cellularly embedded in~$\surf$. If $H$~is a
subgraph of~$G$, we will denote by~$\surf\cut H$ the surface obtained
after cutting~$\surf$ along~$H$. If $\surf\cut H$~is a topological
disk, then $H$~is called a \define{cut graph}. A cut graph can be
computed in linear time~\cite{ccl-osaew-10,e-dgteg-03}. Note that a
cut graph defines a cellular embedding in~$\surf$ with a unique
face. Let~$T$ be a spanning tree of a cut graph~$H$ and consider the set
of edges $A := E(H)\setminus E(T)$. See
Figure~\ref{fig:cut-graph}.
\begin{figure}
  \centering
  \includesvg[.85\linewidth]{cut-graph}
  \caption{(A)~A quadrangulation of a torus ($g=1$).\allowbreak\ (B)~A cut graph
  of the torus with a spanning tree~$T$. The set $A=\{a,b\}$ of non-tree edges
  contains $2g = 2$ elements.\allowbreak\ (C)~The torus cut through the cut
  graph can be flattened into the plane as in
  Figure~\ref{fig:homotopy-encoding}. Here~$f_H|_A = aba\inv b\inv$.}
  \label{fig:cut-graph}
\end{figure}
If $\surf$~is a compact surface of
genus~$g$, Euler's formula easily implies that $A$~contains either $2g$~or~$g$
edges depending on whether $\surf$~is respectively orientable
or non-orientable. For each vertex~$s\in H$, we have $\pi_1(\surf,s)
\cong\, \langle A \mathrel; f_H|_A\rangle$, where $f_H$~is the facial
walk of the unique face of~$H$ and $f_H|_A$~denotes its restriction to
the edges in~$A$. Indeed, if we contract~$T$ to the vertex~$s$
in~$\surf$, the graph~$H$ becomes a bouquet of circles whose
complementary set in~$\surf$ is a disk bounded by the facial walk~$f_H|_A$.
The above group presentation then follows from the classical
Seifert and Van Kampen Theorem~\cite[Chap. IV]{m-bcat-91}.

Let~$c$ be
a closed walk in~$G$ with basepoint~$x\in V(H)$. Denote by~$T(s,x)$
the unique simple path in~$T$ from~$s$~to~$x$. We can express the
homotopy class~$[c']$ of the closed walk $c':=T(s,x)\cdot c\cdot
T(x,s)$ as follows. Let $x = u_0, u_1,\dotsc, u_k=x$ be the sequence
of vertices that belong to~$H$, while walking along~$c'$. The subpath
of~$c'$ between $u_{i-1}$~and~$u_{i}$ is homotopic to a subpath~$w_i$
of the facial walk~$f_H$ between occurrences of
$u_{i-1}$~and~$u_{i}$. Denote by~$w_i|_A$ the restriction of~$w_i$ to
the edges in~$A$. We have, in the above presentation
of~$\pi_1(\surf,s)$:
\[ [c'] = (w_1|_A)\cdot(w_2|_A)\dotsm (w_k|_A) 
\]
We call such a product a \define{term product} representation of~$[c']$ of
\define{height}~$k$. If we encode each term~$w_i|_A$ implicitly by two pointers,
in~$f_H|_A$, corresponding to its first and last edge, the above representation
can be stored as a list of $k$~pairs of pointers.
\begin{lemma}[\protect{see~\cite[Sec. 3.1]{dg-tcs-99}}]\label{lem:term-product}
  Let~$G$ be a graph of complexity~$n$ cellularly embedded on a
  surface~$\surf$.  We can preprocess $G$~and its embedding in
  $O(n)$~time such that the following holds. For any closed walk~$c$
  in~$G$ with $k$~edges, we can compute in $O(k)$~time a term product
  representation of height at most~$k$ of some closed walk freely
  homotopic to~$c$.
\end{lemma}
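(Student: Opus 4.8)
The plan is to fix, during preprocessing, a cut graph~$H$ of~$\surf$ that contains \emph{every} vertex of~$G$, together with enough auxiliary data to read off the contribution of a single edge of~$c$ in constant time; the query then reduces to one linear scan of~$c$. Concretely, I would take~$H$ to be a cut graph contained in~$G$ that contains a spanning tree~$T$ of~$G$ --- hence all of~$V(G)$; such an~$H$ is furnished by a tree--cotree decomposition of the embedded graph and is computable in $O(n)$~time~\cite{e-dgteg-03}. Setting $A := E(H)\setminus E(T)$ (of size $2g$, resp.~$g$) and fixing any $s\in V(G)$, we have $\pi_1(\surf,s)\cong\langle A\mathrel; f_H|_A\rangle$. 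I would then traverse the unique face of~$H$ to obtain its facial walk~$f_H$ (of length $O(n)$) and store, in $O(n)$~time: the cyclic arrays $f_H$ and~$f_H|_A$, with two-way pointers between an occurrence of an $A$-edge in~$f_H$ and its position in~$f_H|_A$ (so that the $A$-edges of any forward sub-arc of~$f_H$ can be read off in $O(1)$~time once its two endpoints are known); for every vertex~$v$, the bijection between its corners with respect to~$H$ --- the maximal arcs of the rotation at~$v$ between two consecutive $H$-edges --- and the occurrences of~$v$ in~$f_H$; and, for every edge $e\notin E(H)$ and each of its two endpoints~$v$, the corner of~$v$ into which the corresponding end of~$e$ falls. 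Indexing corners in rotation order while traversing~$f_H$, this is a constant number of passes over the rotation system, with no sorting.

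For the query, write $c = e_1 e_2\cdots e_k$. Because $V(H)=V(G)$, every~$e_i$ joins two vertices of~$H$, so I would emit exactly one term~$t_i$ per edge. If $e_i\in E(T)$, then~$t_i$ is empty. If $e_i\in A$, then $t_i = e_i^{\pm1}$, the sign given by the direction in which $c$~traverses~$e_i$. If $e_i\notin E(H)$, I look up the occurrences~$p_i$ and~$q_i$ of~$f_H$ attached to the corners of the tail and of the head of~$e_i$ that contain the two ends of~$e_i$, let~$w_i$ be the subpath of~$f_H$ running forward from~$p_i$ to~$q_i$ (the empty walk if $p_i = q_i$), and set $t_i := w_i|_A$, which the stored pointers produce in $O(1)$~time as the pair of positions in~$f_H|_A$ delimiting the $A$-edges of~$w_i$. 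The output $t_1 t_2\cdots t_k$ has height at most~$k$ and is computed in $O(k)$~time after the $O(n)$~preprocessing.

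To see this is a term product representation of a closed walk freely homotopic to~$c$, let $\rho:\overline D\to\surf$ identify the closed disk $\overline D := \surf\cut H$ with~$\surf$: $\rho$~restricts to a homeomorphism onto $\surf\setminus H$ over the interior and sends $\partial\overline D$ onto~$H$ along~$f_H$. When $e_i\in E(H)$, take $w_i := e_i$ (a subpath of~$f_H$); trivially $e_i\simeq w_i$ rel endpoints. When $e_i\notin E(H)$, the interior of~$e_i$ lies in the open disk $\surf\setminus H$, so $e_i$ lifts to a path in~$\overline D$ from the boundary point~$p_i$ to the boundary point~$q_i$; since $\overline D$ is simply connected, this lift is homotopic rel endpoints to the boundary arc from~$p_i$ to~$q_i$ taken in the direction of~$f_H$, and pushing the homotopy down by~$\rho$ gives $e_i\simeq w_i$ rel endpoints in~$\surf$. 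Concatenating these homotopies, $c = e_1\cdots e_k\simeq w_1\cdots w_k$ with fixed basepoint, so $w_1\cdots w_k$ is a closed walk in $H\subseteq G$ freely homotopic to~$c$; and restricting a closed walk of~$H$ to~$A$ is precisely the effect of contracting~$T$, which realizes the isomorphism $\pi_1(\surf,\cdot)\cong\langle A\mathrel; f_H|_A\rangle$, so the word $t_1\cdots t_k = (w_1\cdots w_k)|_A$ is a term product representation of that walk. Combined with the complexity bounds above, this proves the lemma.

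The geometric heart --- a non-$H$ edge contributes exactly the $f_H$-arc between the two relevant corner occurrences --- is immediate once one lifts to the disk~$\overline D$; the delicate part, I expect, will be the preprocessing: realizing the corner$\,\leftrightarrow\,f_H$-occurrence correspondence and the $A$-restriction pointers in genuinely linear time, and disposing of the degenerate cases, chiefly a loop edge $e_i\notin E(H)$ both of whose ends sit in one and the same corner of its basepoint --- there $p_i = q_i$, the edge is contractible in~$\surf$, and $t_i$ is correctly the empty term. As a side remark, an arbitrary cut graph~$H$ works too: when $V(H)\subsetneq V(G)$ one first cyclically rotates~$c$ (preserving the free homotopy class) to start at a vertex of~$V(H)$ --- declaring~$c$ contractible if it avoids~$H$ entirely --- and groups the edges of~$c$ between consecutive visits to~$V(H)$, each group being treated by the same lift-to-the-disk argument.
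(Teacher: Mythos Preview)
Your proof is correct and follows essentially the same strategy as the paper --- lift each non-$H$ arc of $c$ to the disk $\overline D = \surf\cut H$, replace it by a boundary arc of $f_H$, and use precomputed pointers along $f_H$ to extract the $A$-restriction in constant time --- but with one organizational difference worth noting. The paper works with an arbitrary cut graph $H$ (so possibly $V(H)\subsetneq V(G)$) and therefore tracks \emph{enter/exit} events: it flags each oriented non-$H$ edge of $G$ according to the corner of $f_H$ it sits in, equips it with a pointer to the nearest preceding or following $A$-edge along $f_H$, and then scans $c$ for successive enter/exit pairs, each pair yielding one term. Your choice of $H$ via a tree--cotree decomposition forces $V(H)=V(G)$, so every single edge of $c$ already has both endpoints on $H$ and you can emit exactly one term per edge; this streamlines the query loop and sidesteps the need to search for a first entering edge (your side remark at the end is exactly how the paper handles the general case). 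The underlying bookkeeping --- the corner-to-$f_H$-occurrence bijection and the previous/next-$A$-edge pointers along $f_H$ --- is the same in both proofs; what you buy with the tree--cotree choice is a slightly cleaner invariant (one term per edge), at no cost in generality since such an $H$ always exists in $O(n)$ time.
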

\begin{proof}
  From the preceding discussion, we just need to decompose the closed
  walk $c' = T(s,x)\cdot c\cdot T(x,s)$ in at most~$k$ terms, spending
  constant time per term. For this, we first compute a cut graph~$H$,
  a spanning tree~$T$ and the set~$A = E(H)\setminus E(T)$ as above.
  We also determine the facial walk $f_H = (e_1,e_2,\dotsc,e_m)$ of
  the embedding of~$H$. We consider $f_H$~as~a cyclic sequence of
  edges, meaning that $e_1$~is the successor of~$e_m$. To every
  (oriented) edge~$e_i$ along~$f_H$, we attach a pointer~$p(e_i)$
  toward the last previous edge in~$f_H$ that belongs to~$A$ and a
  pointer~$s(e_i)$ toward the first successive edge in~$A$. Thus,
  $p(e_i)$~and~$s(e_i)$ point to~$e_i$ if $e_i$~is in~$A$. For each
  $1\leq i\leq m$, let~$E_i$ be the set of oriented edges of~$G$
  sharing the head vertex of~$e_i$ and strictly between $e_i$~and~$e_{i+1}\inv$
  (taking indices modulo~$m$) in counterclockwise order
  around that head. To each oriented edge in~$E_i$ we attach a pointer
  toward~$p(e_i)$ with an \emph{exit} flag, and we attach to its
  opposite edge a pointer toward~$s(e_{i+1})$ with an \emph{enter}
  flag. An oriented edge with both endpoints in~$H$ will thus receive
  an exit and an enter flag. We finally attach the null pointer
  to the remaining oriented edges. This whole preprocessing step can
  clearly be performed in $O(n)$~time with the appropriate data
  structure for storing the embedding of~$G$. See
  Figure~\ref{fig:homotopy-encoding}.A for an illustration.
\begin{figure}
  \centering
  \includesvg[.85\linewidth]{homotopy-encoding}
  \caption{(A)~An oriented edge~$e_i$ of~$f_H$ with its two pointers
    $p(e_i)$~and~$s(e_i)$.\allowbreak\ (B)~The trace of a closed walk (dashed
    thick line) inside the disk bounded by~$f_H$. This walk is cut into two
    pieces respectively homotopic to $ab$~and~$b\inv aba\inv=1$.}
  \label{fig:homotopy-encoding}
\end{figure}

Given~$c$, we search along~$c$ for an edge~$e$ that is either an
entering edge or an edge in~$A$. If there is no such edge, $c$~is~either
contained in the interior of the facial disk bounded by~$f_H$
or contained in~$T$; in both cases, $c$~is contractible. Otherwise, we
define~$x$ as the tail vertex of~$e$. The terms of the decomposition
of~$c'$ can now be obtained by listing in order, starting from~$e$,
its edges in~$A$ (they define a term with a unique edge) and its
successive pairs of enter/exit edges along~$c$, using their attached
pointers, respectively $s(\cdot)$~and~$p(\cdot)$, for the encoding
(see Figure~\ref{fig:homotopy-encoding}.B). This clearly takes time
proportional to the length of~$c$ and produces at most $k$~terms.
\end{proof}

\section{The contractibility test}\label{sec:contractibility-test}
After reduction to a simplified framework, our test for the
contractibility of a closed curve~$p$ drawn on~$\surf$ relies on the
construction of a \emph{relevant region}~$\Pi_p$ in a specific tiling
of the universal cover~$\unicov$ of~$\surf$. This relevant region,
introduced by Colin~de~Verdi\`ere and Erickson~\cite{ce-tnspc-10}, contains
a lift of~$p$ and we shall compute that lift as we build~$\Pi_p$. We
can now decide whether~$p$ is contractible by just checking if its
lift is a closed curve in~$\unicov$. We detail our simplified
framework and tiling of~$\unicov$, as well as some of its
combinatorial properties, in
Section~\ref{subsec:simplified-framework}. We define the relevant
region and its construction in
Section~\ref{subsec:algo-contractibility}.

\subsection{A simplified framework}\label{subsec:simplified-framework}
\paragraph{Graphic interpretation of term products.}
Following Lemma~\ref{lem:term-product}, we can assume that $G$~is in
reduced form, with a single vertex~$s$ and a single face, and that the
input closed walks are given by their term product representations
in~$\langle E(G)\mathrel;f_G\rangle$. We denote by~$r$ the size of the
facial walk~$f_G$. Hence, $r=4g$~if $\surf$~is orientable with genus~$g$
and $r=2g$~if $\surf$~is a non-orientable surface of genus~$g$. We~can
view the cut surface~$\surf\cut G$ as an open regular $r$\dash/gon~$P$
whose boundary sides are labelled by the edges in~$f_G$. Obviously,
the boundary~$\partial P$ of~$P$ maps to~$G$ after gluing back the
sides of~$P$, and its vertices all map to~$s$. Each subpath of~$f_G$
labels one (or more) oriented subpath of~$\partial P$ which we can
associate with the chord between its endpoints. A term product
representation can thus be seen as a sequence of chords inside~$P$,
where each chord stands for a term. In order to represent these chords
as walks of constant complexity, we introduce an embedded graph~$H$
obtained as follows.  Consider the radial graph in~$P$ linking the
center of~$P$ to each vertex of~$\partial P$ along a straight segment;
this graph has $r+1$~vertices and $r$~edges.
\begin{figure}
  \centering
  \includesvg[.75\linewidth]{radial-graph}
  \caption{When $\surf$~is a torus, $P$~is a $4$\dash/gon and $r=4$. The
    black edges of~$\partial P$ projects to~$G$ in~$\surf$. Here~$G$
    is composed of two loops with basepoint~$s$. The four radial edges
    in~$P$ projects to the $s\mathord-t$~edges of the radial graph which
    possesses two faces.}
  \label{fig:radial-graph}
\end{figure}
After gluing back the boundary of~$P$ we obtain a bipartite graph~$H$
cellularly embedded on~$\surf$ with two vertices~$\left\{s,t\right\}$
and $r$~edges. Both vertices of~$H$ are $r$\dash/valent; each of the
$r/2$~faces of~$H$ is of length~$4$ and is cut in two ``triangles'' by
the unique edge of~$G$ it contains --- see
figure~\ref{fig:radial-graph}. We call $H$~the \define{radial graph}
of~$G$. A rotation system of~$H$ can be computed from that of~$G$ in
$\mcO(r)$~time.  Any chord in~$P$ is now homotopic to the $2$\dash/walk
with the same extremities and passing through~$t$. Consequently, if
$[c] = w_1 \dotsm w_k$ is a term product representation of height~$k$
stored as a list of pointers then in $\mcO(k)$~time we can obtain a
closed walk of length~$2k$ in~$H$, homotopic to~$c$.

\paragraph{Tiling of the universal cover.}
Let~$(\unicov,\pi)$ be the universal cover of~$\surf$. A loop of~$H$
is contractible if and only if its lift in~$\unicov$ is a loop and we
want to construct a finite part of~$\unicov$ large enough to contain
that lift. To this end we rely on a decomposition of~$\unicov$ similar
to the octagonal decomposition of~\cite{ce-tnspc-10}. Due to our
complexity constraints we cannot afford the cost of computing a tight
octagonal decomposition of~$\surf$. We
nevertheless present a tiling of~$\unicov$ with similar properties.

The lifted graph~$\uniH:=\pi\inv(H)$ of~$H$ is an infinite regular
bipartite graph with $r$\dash/valent vertices and $4$\dash/valent faces. A
vertex of $\uniH$ is said of \define{$s$\dash/type} or \define{$t$\dash/type} if
it respectively projects to $s$~or~$t$. Let~$H^*$ be the dual of~$H$
on~$\surf$ chosen so that its vertices lie in the middle of the edges
of~$G$. The lifted graph~$\uniH^*$ is the dual of~$\uniH$; it is an
infinite graph whose faces form a $\left\{ r,4 \right\}$\dash/tiling of the
universal cover, \ie/ four $r$\dash/gon faces meet at every vertex
of~$\uniH^*$. We say that two edges of~$H^*$ are \define{facing} each
other if they share an endpoint~$x$ and are not consecutive in the
circular order around~$x$. The four~edges meeting at any vertex
of~$H^*$ form two~pairs of facing edges. Facing edges are defined
similarly in~$\uniH^*$. The \define{line} induced by~$e^* \in
E(\uniH^*)$ is the smallest set~$\ell_{e^*} \subset E(\uniH^*)$
containing $e^*$~and the facing edge of any of its edges. Every line
is an infinite lift of some cycle of facing edges of~$H^*$, but
contrary to the tight cycles of the octagonal decomposition
in~\cite{ce-tnspc-10} this cycle may self-intersect. In fact, it can
happen that every line projects onto the \emph{same}
self-intersecting cycle of length~$r$. (As an example, one may consider
the case where the reduced graph~$G$ is embedded on the genus two
orientable surface with $f_G = aba\inv cb\inv d c\inv d\inv$.) To begin
with, we state a Dehn-like result about tilings. It will used in
Section~\ref{subsec:algo-contractibility} to bound
the size of the relevant region. 
\begin{lemma}
    \label{lem:dehn-argument}
    Let~$\Gamma$ be a regular $\{r,4\}$\dash/tiling of the plane (each face
    has $r$~edges and each vertex has $4$~neighbours) with~$r \geq
    4$. Every finite non-empty union~$R$ of faces of~$\Gamma$
    contains at least one face sharing $r-2$~consecutive
    edges with the boundary of~$R$.
\end{lemma}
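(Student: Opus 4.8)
The plan is to peel off a few easy reductions until $R$ is a topological disk, and to settle that case by a discrete Gauss--Bonnet argument.

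\emph{The disk case.} Suppose $R$ is connected, simply connected, and has no pinch vertex. Put the formal angle $\pi/2$ at every corner (face--vertex incidence of $R$); this is forced, since four faces meet at each vertex of $\Gamma$. Then each $r$-gon face has angular defect $\kappa(F)=r\cdot\tfrac{\pi}{2}-(r-2)\pi=\tfrac{(4-r)\pi}{2}\le 0$, each interior vertex has defect $0$, and a boundary vertex $v$ has defect $\pi-m(v)\tfrac{\pi}{2}$, where $m(v)\in\{1,2,3\}$ is the number of faces of $R$ at $v$; so a \emph{sharp} vertex ($m(v)=1$) has defect $+\tfrac{\pi}{2}$, an $m(v)=2$ vertex has defect $0$, and a \emph{reflex} vertex ($m(v)=3$) has defect $-\tfrac{\pi}{2}$. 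Discrete Gauss--Bonnet gives $\sum_F\kappa(F)+\sum_v\kappa(v)=2\pi\chi(R)=2\pi$. The key point is that a sharp vertex $v$ is incident to a single face $F_v$, and its two edges in $F_v$ are consecutive around $F_v$ and both lie on $\partial R$; hence they belong to one common maximal run of boundary edges of $F_v$, of which $v$ is an interior vertex. Charge the defect $+\tfrac{\pi}{2}$ of each sharp vertex to its face $F_v$ and discard the non-positive defects of the other boundary vertices: since a face $F$ is charged by at most $\sigma(F)$ sharp vertices, where $\sigma(F)$ is the number of vertices of $F$ both of whose incident $F$-edges lie on $\partial R$, we get $\sum_F\bigl(\kappa(F)+\tfrac{\pi}{2}\sigma(F)\bigr)\ge 2\pi>0$, so some face $F^{*}$ has $\kappa(F^{*})+\tfrac{\pi}{2}\sigma(F^{*})>0$, i.e.\ $\sigma(F^{*})\ge r-3$. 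A bookkeeping on the cyclic sequence of the $r$ edges of $F^{*}$ now finishes: if all are on $\partial R$ we are done; otherwise they split into $g\ge 1$ maximal boundary runs separated by $g$ nonempty arcs of non-boundary edges, so $\sigma(F^{*})\le r-2g$, forcing $g=1$, and then the unique run has length $\sigma(F^{*})+1\ge r-2$, giving $r-2$ consecutive edges of $F^{*}$ on $\partial R$.

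\emph{The reductions.} We induct on the number $h$ of bounded components of $\Gamma\setminus R$ and, for fixed $h$, on the number of faces of $R$; the base case ($R$ a single face) is immediate, all $r$ edges being on $\partial R$. If $R$ is disconnected, a connected component $R_0$ satisfies $\partial R_0\subseteq\partial R$ (an edge bordering $R_0$ cannot also border a second component, which would make it shared) and is strictly smaller, so the claim for $R_0$ suffices. If $R$ has a pinch vertex, splitting it keeps all faces and all boundary edges but raises $\chi(R)$, hence yields a strictly smaller instance for which the claim implies the claim for $R$; thus we may assume $R$ has no pinch vertex. If $h\ge 1$, pick a bounded component $O$ of $\Gamma\setminus R$ and fill it: every edge on the boundary of $O$ has its face lying outside $O$ in $R$ (being adjacent to a face of $O$ yet not in $O$, that face cannot belong to $\Gamma\setminus R$, a connected component of which is $O$); hence $R^{+}:=R\cup O$ has $\partial R^{+}\subseteq\partial R$, and for each boundary edge of $R^{+}$ the incident face of $R^{+}$ actually lies in $R$. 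As $R^{+}$ has $h-1$ bounded complementary components, induction produces a face of $R^{+}$---necessarily a face of $R$---with $r-2$ consecutive edges on $\partial R^{+}\subseteq\partial R$. The remaining case, $h=0$ with $R$ connected and pinch-free, is exactly the disk case treated above.

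The bulk of the work, and the part most prone to slips, is the Gauss--Bonnet bookkeeping of the disk case: verifying that every sharp vertex genuinely lies in the interior of a run (so the charging is well defined and no face is overcharged), and that the defect identity holds after the corner-angle assignment. The reduction steps are routine but must be checked to genuinely decrease the induction measure $(h,\#\text{faces})$ while only shrinking the boundary, in particular the two inclusions $\partial R_0\subseteq\partial R$ and $\partial R^{+}\subseteq\partial R$.
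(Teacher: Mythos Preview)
Your proof is correct, and the Gauss--Bonnet argument for the disk case is cleanly executed. The route, however, differs from the paper's. The paper exploits the layered ``onion'' structure of the $\{r,4\}$-tiling: gather the faces of~$\Gamma$ into concentric rings around an arbitrary root face, pick a face~$f$ of~$R$ lying in the outermost ring that meets~$R$, and observe that $f$~has at most one neighbour in the next inner ring and at most two in its own ring, hence at least $r-3$ consecutive edges on~$\partial R$; a short case check (does $f$ touch the inner ring? are both ring-neighbours of~$f$ in~$R$?) then secures the $(r-2)$-nd edge. This is shorter and avoids any reduction to a disk, but it leans on the particular combinatorial geometry of regular tilings. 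Your curvature-and-charging approach is more portable---the same scheme proves analogous statements for general nonpositively curved complexes---at the price of the reduction machinery.

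One point deserves tightening: you induct on $(h,\#\text{faces})$ but justify the pinch-splitting step only by ``$\chi$ rises'', which is not your declared measure. The repair is easy. If splitting the pinch disconnects $R$ into $R_1,R_2$, the Euler identity $\chi(R_1)+\chi(R_2)=\chi(R)+1$ together with $\chi=1-h$ for connected planar pieces gives $h(R_1)+h(R_2)=h(R)$, so each piece is lex-smaller (strictly fewer faces, no larger~$h$). If splitting does not disconnect, a path in~$R$ joining the two local $R$-sectors at~$v$ without passing through~$v$ closes up, via a small arc through~$v$, to a Jordan curve in~$R$ separating the two local complement-sectors; hence those sectors lie in distinct components of the complement, one of them bounded, and they merge after the split---so $h$ drops by one. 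Either way $(h,\#\text{faces})$ strictly decreases and your induction goes through.
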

\begin{proof}
  We follow the proof
  in~\cite[p. 188]{s-ctcgt-93} and gather the faces of~$\Gamma$ in
  concentric rings at increasing distance to some root face.  Let~$f$
  be a face of~$R$ in the outermost ring of~$R$. At least
  $r-3$~consecutive edges of~$f$ lie on the boundary of~$R$. Indeed,
  $f$~shares (at most) two edges with other faces of the same ring,
  and at most one edge with faces of the ring just beneath it. One
  more edge of~$f$ lies on the boundary of~$R$ if $f$~shares no edge
  with the inner ring, or if at least one of the neighbors of~$f$ in
  the outermost ring is not in~$R$. Else, we can replace~$f$ by any of
  its two ring neighbors, which do not share edges with the inner
  ring, to obtain a face with the required property. 
\end{proof}
Let again~$\Gamma$ be a regular $\{r,4\}$\dash/tiling of the plane with~$r\geq 4$. 
By the \name{Jordan}~curve theorem a closed simple path $c$
in the vertex-edge graph of $\Gamma$ bounds a finite
union $R$ of faces. A vertex $v$ of $c$ is called \define{convex},
\define{flat}, or
\define{reflex} if it is incident to respectively $0$, $1$, or $2$
edges interior to $R$. 
\begin{lemma}
  A closed simple path $c$ has at least $r$ convex vertices.
\end{lemma}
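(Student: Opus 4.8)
The plan is to run an Euler--characteristic count on the disk~$R$ bounded by~$c$ (a discrete Gauss--Bonnet argument). Since $c$~is a simple closed path in the vertex-edge graph of~$\Gamma$, the region~$R$ it bounds is a topological closed disk tiled by faces of~$\Gamma$, so the induced subcomplex satisfies $V-E+F=1$, where $V$,~$E$,~$F$ are its numbers of vertices, edges, and faces. First I~would sort these according to position relative to $\partial R=c$: write $V=I+B$ with $B$~the number of vertices on~$c$ and $I$~the number of interior ones, and $E=\iota+\beta$ with $\beta$~the boundary edges and $\iota$~the \emph{interior} edges, that is, those shared by two faces of~$R$. As $c$~is a cycle, $\beta=B$, so Euler's relation reduces to $I-\iota+F=1$. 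Write $n_0$,~$n_1$,~$n_2$ for the numbers of convex, flat, and reflex vertices of~$c$, so that $B=n_0+n_1+n_2$ and the goal is $n_0\ge r$.

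Next I~would record two incidence identities. Counting edge--face incidences with faces of~$R$: each face has $r$~edges, each interior edge lies on two faces of~$R$ and each boundary edge on exactly one, so $rF=2\iota+B$. Counting endpoints of interior edges: every edge at an interior vertex is itself interior (such an edge cannot lie on~$c$, and it cannot separate a face of~$R$ from a face outside~$R$, as that would put it on $\partial R=c$), so an interior vertex contributes~$4$; at a boundary vertex exactly two of the four edges lie on~$c$, and by the very definition of convex/flat/reflex the number of the remaining two that are interior to~$R$ is $0$,~$1$, or~$2$ respectively. Hence $2\iota=4I+n_1+2n_2$.

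Combining the three relations is then elementary algebra: eliminating $\iota$~and~$F$ gives
\[ n_0 \;=\; r+(r-4)\,I+\frac{r-4}{2}\,n_1+(r-3)\,n_2, \]
and since $r\ge 4$ each of the three correction terms on the right is nonnegative, so $n_0\ge r$. I~expect the only delicate point to be the edge classification underlying $2\iota=4I+n_1+2n_2$, namely the trichotomy ``an edge at a vertex of~$R$ either lies on~$c$, or is shared by two faces of~$R$, or lies outside~$R$'', together with the routine but necessary verification that $R$ genuinely is a disk so that the Euler contribution is exactly~$1$; everything else is bookkeeping. As a sanity check on sharpness, a Gauss--Bonnet argument in the natural metric on the tiling (four faces meet at each vertex, so corner angles are $\pi/2$) yields only the weaker $n_0-n_2\ge 4$, so the combinatorial identity above is what pins down the constant~$r$; one could also argue by induction on the number of faces, peeling off the face furnished by Lemma~\ref{lem:dehn-argument} and noting that the $r-3$ vertices strictly inside its shared boundary arc are convex, but tracking the convex count as the boundary changes is more fiddly than the direct count.
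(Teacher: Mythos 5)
Your proof is correct and follows essentially the same route as the paper's: double-counting vertex--edge and face--edge incidences, eliminating via Euler's formula $V-E+F=1$, and reading off $n_0 = r + (r-4)I + \tfrac{r-4}{2}n_1 + (r-3)n_2$, which is the paper's identity $V_c = r + (r-4)V_i + \tfrac{r-4}{2}V_f + (r-3)V_r$. The only cosmetic difference is that you book-keep with interior edges $\iota$ while the paper uses the total edge count $E$; the two are linked by $2E = 2\iota + 2B$ and yield identical algebra.
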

\begin{proof}
  We denote by $V_c, V_f$ and $V_r$ the respective number of convex,
  flat and reflex vertices of $c$. If $R$ is the region bounded by
  $c$, we also denote by $F$ its number of faces and by $V_i$ its
  number of interior vertices. We finally set $V:=V_i + V_c + V_f+
  V_r$ and call $E$ the total
  number of edges of the closed region $R$. By
  double-counting of the vertex-edge incidences, we get:
\[ 2E =  4V_i + 2V_c + 3V_f + 4V_r.
\]
By double-counting of the face-edge incidences, we get:
\[ rF = 2E - (V_c + V_f+ V_r).
\]
Euler's formula then implies $V-E+F=1$. Multiplying by $r$ we obtain:
\[r = rV - rE + rF = rV - (r-2)E - (V_c + V_f+ V_r), \]
and expending the values of $V$ and $E$:
\[ r = V_c - (r-4) V_i - \frac{r-4}{2}V_f - (r-3)V_r. \]
Since $r\geq 4$, we conclude that $V_c\geq r$.
\end{proof}
We can now state the main properties of lines in~$\uniH^*$.
\begin{proposition}
    \label{prop:lines:no-bigon-triangle}
    Suppose that $\surf$~is orientable with genus~$\geq 2$ or
    non-orientable with genus~$\geq 3$.  Then lines of~$\uniH^*$ do
    not self-intersect nor are cycles and two distinct lines intersect
    at most once. Moreover, the following properties hold.
    \begin{itemize}
        \item \textbf{(triangle-free Property)} Three pairwise distinct lines
            cannot pairwise intersect.
        \item \textbf{(quad-free Property)} Four pairwise distinct lines can
            neither form a quadrilateral, \ie/ include a possibly
            self-intersecting $4$\dash/gon.
    \end{itemize}
\end{proposition}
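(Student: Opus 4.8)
\textit{Proof plan.}
The plan is to convert the combinatorial curvature bound of the preceding lemma --- every simple closed path $c$ in the $1$\dash/skeleton of $\uniH^*$ has at least $r$ convex vertices --- into a bound on the number of \emph{turns} of $c$, and then to derive a contradiction from each forbidden configuration by exhibiting a simple closed path with too few turns. Call a vertex $v$ of $c$ a \emph{turn} if the two edges of $c$ incident to $v$ do not form a facing pair. If $v$ is convex, then no edge incident to $v$ lies in the interior of the region $R$ bounded by $c$, so $R$ occupies a single corner at $v$; the two edges of $c$ at $v$ bound that corner, hence are consecutive around $v$, hence are not facing, so $v$ is a turn. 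Thus \emph{every simple closed path in $\uniH^*$ has at least $r$ turns}. I would then record two complementary facts about lines: consecutive edges along a line form a facing pair (immediate from the definition of a line as the closure of the facing relation), so following a single line never produces a turn; and two distinct lines are disjoint as edge sets (an edge determines its line), so the two facing pairs at a vertex lying on two distinct lines belong to those two lines --- whence at most two lines pass through any vertex, and switching from one line to another at such a vertex \emph{is} a turn.

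Throughout one may assume $r\geq 6$, since $r=4g\geq 8$ when $\surf$ is orientable of genus $g\geq 2$ and $r=2g\geq 6$ when $\surf$ is non-orientable of genus $g\geq 3$. If a line were a cycle or had a self-intersection, it would contain a simple closed subpath; a minimal one is traversed straight at every vertex except at most the vertex where the line meets itself, so it has at most one turn, contradicting the bound. Hence lines are bi-infinite simple paths. If two distinct lines $\ell_1,\ell_2$ met at two distinct vertices $p,q$, I would pick $p,q$ consecutive along $\ell_1$ among $\ell_1\cap\ell_2$: the $\ell_1$\dash/arc $\alpha$ between them has interior disjoint from $\ell_2$, so $\alpha\cap\ell_2=\{p,q\}$, and therefore the $\ell_2$\dash/arc $\beta$ from $p$ to $q$ meets $\alpha$ only in $\{p,q\}$; then $\alpha\cup\beta$ is a simple closed path with exactly two turns --- a contradiction. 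So two distinct lines meet at most once. For the triangle-free property, three pairwise intersecting distinct lines yield three points $p_{12},p_{13},p_{23}$, pairwise distinct since no vertex lies on three lines; the three arcs joining them pairwise meet only at shared endpoints (each $\ell_i\cap\ell_j$ being a single point), so they form a simple closed path with three turns --- impossible. For the quad-free property, a possibly self-intersecting quadrilateral on four distinct lines can self-cross only at the (at most two) vertices where its ``diagonal'' pairs of lines meet; if it is simple it is a simple closed path with four turns, and if it self-crosses at a vertex $w$ then cutting there isolates a simple closed path carried by three of the four lines with only three turns (a third line through $w$ or through one of its corners would violate the two-lines-per-vertex fact). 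In every case one obtains a simple closed path with at most $4<r$ turns, the desired contradiction.

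The steps I expect to need the most care are the local analysis identifying convexity with ``$R$ occupies one corner'' (hence with a turn), and the planar-topology extractions: choosing a minimal self-intersection of a line, producing the arcs $\alpha,\beta$ with disjoint interiors, and --- the main obstacle --- handling the self-intersecting quadrilateral so that cutting at a crossing genuinely yields a \emph{simple} closed path with a controlled turn count. Once that bookkeeping is in place, the single inequality ``at least $r$ turns, but at most four'' settles all five assertions at once.
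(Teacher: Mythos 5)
Your proof is correct and follows essentially the same approach as the paper: reduce each forbidden configuration to a simple closed path in the $\{r,4\}$\dash/tiling with too few convex vertices, contradicting the preceding lemma. Your ``turn'' bookkeeping --- noting that convex vertices are turns, that arcs of a single line contribute no turns, and that at most two lines pass through any vertex --- is a careful elaboration of what the paper compresses into a single sentence, and in particular it treats the self-intersecting quadrilateral case (cutting at the crossing to extract a simple three-arc loop) more explicitly than the paper's terse three-sentence proof.
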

\begin{proof}
  With the above hypothesis $\uniH^*$~defines a regular
  $\{r,4\}$\dash/tiling of the plane with $r\geq 6$.  A contradiction
  of any of the above properties would imply the existence of a region
  bounded by at most four lines. Its boundary would contain a simple
  closed curve with at most four convex vertices, which is forbidden by
  the previous lemma. 
\end{proof}

\begin{lemma}
    \label{lem:lines:separating}
    Every line cuts~$\unicov$ in two infinite connected components.
\end{lemma}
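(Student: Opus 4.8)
\textit{Proof plan.} The plan is to realize a line $\ell$ as a properly embedded copy of $\R$ in the plane $\unicov$, deduce from the Jordan curve theorem that $\unicov\setminus\ell$ has exactly two components, and then use the Dehn-type Lemma~\ref{lem:dehn-argument} to show that neither component can be bounded.

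First I would set up the geometry. In the setting of Proposition~\ref{prop:lines:no-bigon-triangle}, $\unicov$ is homeomorphic to the plane and the line $\ell$ is a simple bi-infinite path in the $1$\dash/skeleton of the $\{r,4\}$\dash/tiling~$\uniH^*$. Since $\uniH^*$ is locally finite and cellularly embedded, every compact subset of $\unicov$ meets only finitely many of its edges and vertices; hence $\ell$, being a locally finite infinite union of closed edges, is a closed subset of $\unicov$ homeomorphic to $\R$ and properly embedded (its two ends leave every compact set). In the one\dash/point compactification $\unicov\cup\{\infty\}\cong S^2$ the set $\hat\ell:=\ell\cup\{\infty\}$ is therefore a simple closed curve, and the Jordan curve theorem gives exactly two components of $S^2\setminus\hat\ell$; since $\infty\in\hat\ell$, removing it shows that $\unicov\setminus\ell$ has exactly two connected components $U_1$ and $U_2$.

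It remains to prove that $U_1$ and $U_2$ are infinite, and here I would invoke Lemma~\ref{lem:dehn-argument}. Suppose some $U_i$ were relatively compact. Since the union of the open $r$\dash/gon interiors is dense in $\unicov$ and disjoint from $\ell$, the open nonempty set $U_i$ contains the interior of at least one tile, and relative compactness makes the set $R$ of tiles whose interior lies in $U_i$ finite and nonempty. If an edge $e$ on $\partial R$ did not lie on $\ell$, the two tiles adjacent to $e$ would be joined inside $\unicov\setminus\ell$, forcing both into $R$; hence $\partial R\subseteq\ell$. By Lemma~\ref{lem:dehn-argument}, some tile $f\in R$ has $r-2\geq 2$ consecutive boundary edges lying on $\partial R\subseteq\ell$. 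But two consecutive edges of $f$ share a vertex $v$ and are adjacent in the rotation around $v$, whereas $\ell$, being simple, traverses $v$ at most once and then uses exactly the two \emph{facing} (non\dash/adjacent) edges at $v$ --- so $\ell$ cannot contain two edges that are adjacent around $v$, a contradiction. Hence neither component is relatively compact; each then meets infinitely many of the compact tiles and is in particular infinite.

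The routine parts are the local\dash/finiteness and properness bookkeeping and the (standard) Jordan curve argument; the one step needing a little care is identifying a bounded component with a finite union of tiles whose boundary is forced onto $\ell$, which is what makes Lemma~\ref{lem:dehn-argument} applicable. One could instead avoid Lemma~\ref{lem:dehn-argument} and note that the two\dash/sided local picture of $\ell$ in the tiling makes every point of $\ell$ a frontier point of both $U_1$ and $U_2$, so that $\overline{U_i}\supseteq\ell$ is non\dash/compact; I would nonetheless keep the Dehn\dash/type argument since it stays within the combinatorial toolkit already developed.
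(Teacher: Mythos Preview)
Your proof is correct and follows a genuinely different route from the paper's.

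For the fact that $\unicov\setminus\ell$ has exactly two components, you compactify and invoke the Jordan curve theorem, using that $\ell$ is simple (Proposition~\ref{prop:lines:no-bigon-triangle}) and properly embedded. The paper instead argues the two halves separately and by hand: \emph{at most two} via an explicit approach path along $\ell$ from any point back to a small disk split by an edge $e^*\in\ell$; and \emph{at least two} by a parity argument, showing that any path in $\uniH$ joining the two endpoints of the edge $e$ dual to $e^*$ must cross $\ell$ an odd number of times. The latter is proved by projecting to $\surf$, realising the homotopy between $\pi(e)$ and the projected path by elementary face moves, and lifting these moves back to $\unicov$, where each preserves the parity of intersections with $\ell$. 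Your Jordan approach is shorter but imports a topological black box; the paper's argument stays entirely inside the combinatorial/cellular framework and does not rely on knowing that $\ell$ is simple for the separation step.

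For infiniteness of the two components, the paper does not give an explicit argument (it is presumably regarded as immediate once $\ell$ is known to be an infinite simple curve). Your Dehn-type argument via Lemma~\ref{lem:dehn-argument} works, but your parenthetical alternative is both simpler and sharper: since every edge of $\ell$ has one adjacent tile in each $U_i$, the closure $\overline{U_i}$ contains all of $\ell$ and is therefore non-compact. Equivalently, a bounded $U_i$ would consist of finitely many tiles with finite boundary, yet that boundary would have to be all of the infinite $\ell$.
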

\begin{proof}
  Let~$\ell$~be a line induced by an edge~$e^*$. We first show that
  $\unicov\setminus \ell$ has at most two connected
  components. Consider a small disk~$D$ centered at the midpoint of~$e^*$,
  so that $D\setminus \ell$~has two half-disk components $D_1$~and~$D_2$.
  Any point~$x$ of~$\unicov\setminus \ell$ is in the same
  component as one of $D_1$~or~$D_2$. Indeed, we can use an approach
  path from~$x$ to a point close to~$\ell$ and then follow a path
  along~$\ell$ on the same side as~$x$ until we meet~$D$, hence
  falling into $D_1$~or~$D_2$. It remains to prove that
  $\unicov\setminus \ell$~has at least two components. Consider the
  two endpoints $y$~and~$z$ of the edge~$e$ dual to~$e^*$. Let~$p$~be any
  path joining $y$~and~$z$ in $\unicov$. We just need to show that
  $p$~and~$\ell$ intersect. By pushing~$p$ along~$\uniH$, it is easily seen
  that $p$~is homotopic to a path~$q$ in~$\uniH$. Moreover, $p$~cuts~$\ell$
  if and only if $q$~does so. Since $e$~and~$q$ share the same
  endpoints, their projection $\pi(e)$~and~$\pi(q)$ are homotopic paths
  of~$H$ in~$\surf$. It is part of the folklore that homotopy in a
  cellularly embedded graph
  can be realized by combinatorial homotopies. Such homotopies are
  finite sequences of elementary moves obtained by replacing a subpath
  contained in a facial walk by the complementary subpath in that
  facial walk. Any combinatorial homotopy between $\pi(e)$~and~$\pi(q)$
  lifts to a combinatorial homotopy between $e$~and~$q$.
  Since the lift of an elementary move preserve the parity of the
  number of intersections with~$\ell$, we conclude that $q$~has an odd
  number of intersections with~$\ell$. It follows that $q$, whence~$p$,
  must cross~$\ell$.
\end{proof}

\subsection{Building the relevant region}\label{subsec:algo-contractibility}
\paragraph{The relevant region.}
Let~$p$~be a path in~$\uniH$. Following~\cite{ce-tnspc-10} we denote
by~$\region{p}$, and call the \define{relevant region} with respect
to~$p$, the union of closed faces of~$\uniH^*$ reachable from~$p(0)$
by crossing only lines crossed by~$p$, in any order. In other words,
if for any line~$\ell$ we denote by~$\ell^{+}$ the component
of~$\unicov \setminus \ell$ that contains~$p(0)$, then $\region{p}$~is
the ``convex polygon'' of~$\unicov$ formed by intersection of the
sets~$\ell \cup \ell^{+}$ for~all~$\ell$ \emph{not} crossed
by~$p$. See Figure~\ref{fig:poincare} for an illustration.
\begin{figure}
  \centering
  \includesvg[.4\linewidth]{poincare}
  \caption{The Poincar\'e disk model of~$\unicov$ with lines
    represented as hyperbolic geodesics. Remark that the union of
    (light blue) lines crossed by~$p$ is not necessarily connected.}
  \label{fig:poincare}
\end{figure}
This region has interesting properties which make it easy and
efficient to build:
\begin{lemma}[\protect{\cite[lemma~4.1]{ce-tnspc-10}}]
    \label{lem:region:connected-inter-with-line}
    For any path~$p \subset \uniH$ and any line~$\ell$ the
    intersection~$\ell\cap\region{p}$ is either empty or a segment of~$\ell$
    whose relative interior is entirely included in either the interior
    or the boundary of~$\region{p}$.
\end{lemma}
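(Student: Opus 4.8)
The plan is to work from the explicit description $\region{p}=\bigcap_{m\notin C}(m\cup m^{+})$ given above, where $C$ is the set of lines crossed by~$p$ and $m^{+}$ denotes the component of $\unicov\setminus m$ containing~$p(0)$; set $H_m:=m\cup m^{+}=\unicov\setminus m^{-}$, so that $\region{p}=\bigcap_{m\notin C}H_m$ is closed. The first and main step is to describe $\ell\cap H_m$ for each $m\notin C$, using that each line is a bi-infinite simple path homeomorphic to~$\R$ that separates $\unicov$ into two pieces (Proposition~\ref{prop:lines:no-bigon-triangle} and Lemma~\ref{lem:lines:separating}), and that two distinct lines meet at most once (Proposition~\ref{prop:lines:no-bigon-triangle}). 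If $m=\ell$ — only possible when $\ell\notin C$ — then $\ell\cap H_m=\ell$. If $m\neq\ell$ and $m$ misses $\ell$, separation puts $\ell$ entirely in $m^{+}$ or in $m^{-}$, so $\ell\cap H_m$ is $\ell$ or $\emptyset$. If $m\neq\ell$ meets $\ell$, they meet in a single point, which must be a vertex~$v$, and there the two lines cross transversally: a line through a degree-$4$ vertex continues to the facing edge, so two distinct lines through~$v$ use the two different pairs of facing edges and form an ``X''. Hence the two rays of $\ell$ issuing from $v$ fall into $m^{+}$ and into $m^{-}$ respectively, and since $\ell$ meets $m$ only at~$v$ we get that $\ell\cap H_m$ is exactly the closed ray of $\ell$ from $v$ on the $p(0)$-side. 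Intersecting these over all $m\notin C$, the set $\ell\cap\region{p}$ is an intersection of closed rays of $\ell$ and copies of $\ell$, hence empty or a closed sub-interval of $\ell\cong\R$, i.e.\ empty or a segment. This proves connectedness.

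For the second assertion I would first note that the line arrangement is locally finite (lines are subgraphs of the locally finite graph $\uniH^{*}$), so $\mathrm{int}\,\region{p}=\bigcap_{m\notin C}m^{+}$; equivalently, a point $x\in\region{p}$ lies in $\mathrm{int}\,\region{p}$ if and only if it lies on no line outside~$C$. Now split on whether $\ell\in C$. If $\ell\notin C$ then $\mathrm{int}\,\region{p}\subseteq\mathrm{int}\,H_\ell=\ell^{+}$ is disjoint from $\ell$, so the whole segment $\ell\cap\region{p}$, in particular its relative interior, lies in $\partial\region{p}$. If $\ell\in C$, let $x$ lie in the relative interior of the segment, so that the segment meets $\ell$ on both sides of~$x$; were $x$ on a line $m\notin C$, then $m\neq\ell$, so by the previous step $m$ crosses $\ell$ exactly at~$x$ and $\ell\cap\region{p}\subseteq\ell\cap H_m$ is a closed ray of $\ell$ from~$x$, contradicting that the segment meets both sides of~$x$; hence $x$ lies on no line outside~$C$, so $x\in\mathrm{int}\,\region{p}$. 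In either case the relative interior lies entirely in $\mathrm{int}\,\region{p}$ or entirely in $\partial\region{p}$, which is what is claimed.

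The steps I treat as routine are: that the ``union of closed faces'' definition of $\region{p}$ agrees with $\bigcap_{m\notin C}H_m$, with the attendant bookkeeping relating edges and vertices of $\uniH^{*}$ to incident faces; the local finiteness remark yielding $\mathrm{int}\,\region{p}=\bigcap m^{+}$; and the transversality of two distinct lines at a shared vertex. The single observation carrying real content is that $\ell\cap H_m$ is always a ray of~$\ell$ (or all of it): it gives connectedness immediately and, used in contrapositive form at an interior point of the segment, forces the relative interior to avoid every uncrossed line. I therefore expect the main obstacle to be a careful proof of this ray description when $m\neq\ell$ and $\ell\cap m\neq\emptyset$ — checking that the lone intersection point is a genuine crossing and that the ``straightness'' of lines (the facing-edge rule) leaves $\ell$ no way to re-enter $H_m$ after leaving it. Everything else is bookkeeping.
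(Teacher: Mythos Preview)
Your argument is correct, but it is not the route the paper takes. The paper's proof is a short path argument: given $x,y\in\ell\cap\region{p}$, the reachability definition of $\region{p}$ yields a path $u$ from $x$ to $y$ (through $p(0)$) whose relative interior meets only lines already crossed by~$p$; any line meeting $\ell$ strictly between $x$ and $y$ separates $x$ from $y$, hence meets $u$, hence is crossed by~$p$. This single sentence forces the open $\ell$-segment between $x$ and $y$ into $\region{p}$ and simultaneously decides whether that segment lies in the interior (when $\ell$ is itself crossed by $p$) or on the boundary (when it is not).

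You instead work from the ``convex polygon'' description $\region{p}=\bigcap_{m\notin C}H_m$, compute $\ell\cap H_m$ as a closed ray for each uncrossed line~$m$ meeting~$\ell$, and then intersect. This is the natural convex-geometry proof and has the advantage of making the half-space picture explicit; it also gives you the interior/boundary dichotomy cleanly via the case split $\ell\in C$ versus $\ell\notin C$. The cost is the extra bookkeeping you flag yourself (equivalence of the two descriptions of $\region{p}$, local finiteness, transversality at the shared vertex), none of which the paper's argument needs. Both proofs ultimately rest on the same fact from Proposition~\ref{prop:lines:no-bigon-triangle}, namely that two distinct lines meet at most once; the paper feeds it into a separation argument, you feed it into a ray description.
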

\begin{proof}
  If $\ell\cap\region{p}$~is not empty, consider two
  points~$x,y\in\ell\cap\region{p}$. By definition of the relevant region,
  there is a path~$u$ joining $x$~to~$y$ (through~$p(0)$) whose
  relative interior is only crossed by lines also crossing~$p$. Any
  line crossing~$\ell$ between $x$~and~$y$ separates these two
  points, hence crosses~$u$, hence crosses~$p$. It easily follows that
  the open segment of~$\ell$ between $x$~and~$y$ is included in~$\region{p}$,
  either along its boundary or in its interior.
\end{proof}

\begin{lemma}[\protect{\cite[lemma~4.3]{ce-tnspc-10}}]
    \label{lem:region:small-number-of-faces}
    $\region{p}$~contains at most $\max(5 \len{p},1)$~faces of~$\uniH$.
\end{lemma}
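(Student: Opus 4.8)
The plan is to replace ``faces of~$\uniH$'' by the $r$\dash/gon faces of~$\uniH^*$ (the \emph{tiles}) and bound their number. A face of~$\uniH$ lying inside~$\region{p}$ is a quadrilateral whose four corners are tiles of~$\region{p}$, equivalently it is centred at an interior vertex of the tile complex~$\region{p}$; a short Euler\dash/formula computation on the disk~$\region{p}$ shows there are no more such interior vertices than tiles, so it suffices to prove that~$\region{p}$ consists of at most~$\max(5\len{p},1)$ tiles. The case~$\len{p}=0$ is immediate, so set~$k:=\len{p}\ge 1$ and let~$L$ be the set of lines of~$\uniH^*$ crossed by~$p$. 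Every edge of the walk~$p$ in~$\uniH$ crosses exactly one line, so~$1\le\len{L}\le k$, and it is enough to prove~$\len{\region{p}}\le 5k$ (with a little slack to spare).

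Two structural facts set the stage. First,~$\region{p}=\bigcap_{\ell\notin L}(\ell\cup\ell^{+})$ is an intersection of geodesically convex half\dash/planes, hence a simply connected union of tiles with simple closed boundary; it is finite because every reachable tile either is the initial one or lies at the crossing of two lines of~$L$, and there are finitely many such crossings. Second, whenever two tiles of~$\region{p}$ share an edge of~$\uniH^*$, the line carrying that edge belongs to~$L$: were it some~$\ell\notin L$, then~$\region{p}\subseteq\ell\cup\ell^{+}$ would force both tiles onto the same side of~$\ell$, contradicting their adjacency across~$\ell$. Together with Lemma~\ref{lem:region:connected-inter-with-line}, this says that \emph{all} internal adjacencies of~$\region{p}$ are carried by the at most~$k$ lines of~$L$, each of which meets~$\region{p}$ in a single segment.

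The plan is then to build~$\region{p}$ one tile at a time and charge tiles to lines of~$L$. Process the crossings of~$p$ in order, maintaining the region~$R_j$ reachable from~$p(0)$ using the lines crossed so far; only the~$\len{L}$ steps that introduce a new line change~$R_j$, and the first of them merely adds the tile across~$p$'s first edge. For a later step introducing a new line~$\ell$, I want to bound the number of tiles added: here one can peel the new part using the Dehn\dash/type Lemma~\ref{lem:dehn-argument} --- each newly added tile meets the rest along at most two consecutive edges, all lying on lines of~$L$ by the second structural fact --- and then invoke the triangle\dash/free and quad\dash/free properties of Proposition~\ref{prop:lines:no-bigon-triangle}, which forbid the ``grid'' and ``fan'' configurations of lines that alone would let a single new line unlock a long stretch of tiles. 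A careful count along~$\ell$ (using that~$\ell\cap\region{p}$ is one segment) caps the number of tiles added by each new line by a universal constant, and tracking the exact numbers yields~$\len{\region{p}}\le\max(5k,1)$.

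The step I expect to be the real obstacle is precisely this count --- quantifying how negative curvature keeps~$\region{p}$ from ballooning. Two sanity checks delimit it. In the Euclidean case~$r=4$ the statement is simply false: a path crossing~$k$ lines of the square grid can reach a region of order~$k^{2}$ tiles, so the argument must use~$r\ge 6$ essentially, and must use the geometric realizability of~$L$ inside the~$\{r,4\}$\dash/tiling, not merely the abstract crossing pattern of~$L$. Conversely, when~$\region{p}$ is a thin strip along a single line the bound holds with wide margins --- this is the regime the quad\dash/free property forces upon us. Once the per\dash/line increment is pinned down, assembling the inequality and returning, via the first paragraph, from tiles to faces of~$\uniH$ is routine.
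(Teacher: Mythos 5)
Your reduction from faces of~$\uniH$ to tiles (faces of~$\uniH^*$) and your two structural observations are fine, and you are right that $r\ge 6$ is essential. But the pivot of your plan --- that crossing a new line adds a \emph{universal constant} number of tiles --- is false, and the triangle-free and quad-free properties do not rescue it. Consider a line~$\ell$ and let $T_0,\dots,T_m$ be the tiles along one side of~$\ell$, with $e_i\subset\ell$ the edge of $T_i$ on~$\ell$. Consecutive tiles $T_{i-1},T_i$ share an edge $f_{i-1}$ lying on the line $\mu_i$ that crosses~$\ell$ at the vertex $e_{i-1}\cap e_i$. By the triangle-free property the $\mu_i$ are pairwise disjoint, so the path $p=p_1\cdots p_m$ in~$\uniH$ that walks down this strip crossing $\mu_1,\dots,\mu_m$ has $\region{p}=T_0\cup\dots\cup T_m$, and $\region{p}\cap\ell = e_0\cup\dots\cup e_m$ is a segment of $m+1$ edges. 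If the next edge $p_{m+1}$ crosses~$\ell$, Lemma~\ref{lem:region:construction} says $\Lambda$ is the reflection of \emph{all} of $V=\{T_0,\dots,T_m\}$: the single line $\ell$ unlocks $m+1=\len{p}$ tiles at once, with no triangle or quadrilateral among the lines anywhere. The final count $2(m+1)\le 5\len{p}$ still holds, but ``constant per line times at most $\len{p}$ lines'' is not how the total stays linear; the large charge to~$\ell$ is paid for by the fact that those $m+1$ tiles were each cheap to add earlier. Your incremental scheme, as you charge it, therefore cannot close.

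The paper's proof is a global isoperimetric argument, not an incremental one. It classifies boundary vertices of $\region{p}$ as convex (no line through them is crossed by~$p$) or flat (one such line is). A line meeting the interior of $\region{p}$ must cross~$p$ (Lemma~\ref{lem:lines:separating}) and, by Lemma~\ref{lem:region:connected-inter-with-line}, contributes at most two flat vertices, so there are at most $2\len{p}$ flat vertices; between two consecutive flat vertices there are at most $r-2$ convex vertices (all on one face), giving perimeter at most $2(r-1)\len{p}$. Against this, Lemma~\ref{lem:dehn-argument} yields the isoperimetric lower bound: peeling off a face with $\ge r-2$ boundary edges drops the perimeter by $\ge r-4$, so any union of $k\ge 1$ tiles has perimeter $\ge (r-4)k$. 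Combining, $k\le \frac{2(r-1)}{r-4}\len{p}\le 5\len{p}$ for $r\ge 6$. If you want to keep an incremental flavour you would need a genuine amortization --- for instance charging the tiles of $\Lambda$ to the crossings of~$\ell$ with the other lines of~$L$ along the segment $\region{p}\cap\ell$ and arguing each such crossing is paid at most once --- but the perimeter argument is both shorter and what the paper actually does.
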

\begin{proof}
  We recall that a vertex~$v$ of the boundary of~$\region{p}$ is
  \define{convex} if no line
  passing through~$v$ is crossed by~$p$. It is~\define{flat} if on
  the contrary one line through~$v$ is crossed by~$p$.
  By Lemma~\ref{lem:lines:separating}, a line that intersects the
  interior of~$\region{p}$ must cross~$p$. By
  Lemma~\ref{lem:region:connected-inter-with-line}, such a line
  intersect the boundary of~$\region{p}$ in at most two flat
  vertices. It follows that the number of flat vertices is at most~$2\len{p}$.
  If there is no flat vertex then $\region{p}$~is a single
  face, $\len{p} = 0$ and the lemma holds.  Else, note that between
  two consecutive flat vertices there are at most $r-2$~convex
  vertices, all on the boundary of a single face. Accordingly the
  boundary of~$\region{p}$ has at most $2(r-1)\len{p}$~edges.

  On the other hand, any union~$R$ of $k\geq 1$~faces has at least
  $(r-4)k$~edges on its boundary. Indeed, thanks to
  Lemma~\ref{lem:dehn-argument} we can recursively remove a face with
  at least $r-2$~boundary edges to decrease the perimeter by at
  least~$r-4$, until $R$~is empty. Applying this last result
  to~$\region{p}$ we get $(r-4)k \leq 2(r-1)\len{p}$, whence:
    \[ k \leq \frac{2(r-1)}{r-4} \len{p} \leq 5\len{p} \]
    the latter inequality stemming from the hypothesis~$r \geq 6$.
\end{proof}

The next result is crucial for building the relevant region:
\begin{lemma}[\protect{\cite[lemma~4.2]{ce-tnspc-10}}]
  \label{lem:region:construction}
  Let~$p$ be a path of~$\uniH$ and let $e$~be an edge with $p(1)$~as
  an endpoint.  Suppose $e$~crosses a line~$\ell$ not already crossed
  by~$p$. Then $\region{p} \cap \ell$~is a segment of the boundary
  of~$\region{p}$ along a connected set of
  faces~$V\subset\region{p}$. Moreover $\region{p\cdot e} = \region{p}
  \cup \Lambda$ where $\Lambda$~is the reflection of~$V$
  across~$\ell$.
\end{lemma}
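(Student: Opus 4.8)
The plan is to exploit the two equivalent descriptions of the relevant region $\region q$ of a path $q$ in $\uniH$ — as the union of closed faces of $\uniH^*$ reachable from $q(0)$ by crossing only lines crossed by $q$, and as the intersection $\bigcap_{\ell'}(\ell'\cup\ell'^{+})$ over the lines $\ell'$ \emph{not} crossed by $q$ — together with the reflection $\rho$ of $\uniH^*$ across $\ell$, namely the automorphism of the $\{r,4\}$\dash/tiling fixing $\ell$ pointwise and interchanging the two components $\ell^{+}$ (the one containing $p(0)$) and $\ell^{-}:=\unicov\setminus(\ell\cup\ell^{+})$ of $\unicov\setminus\ell$. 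Write $L(\cdot)$ for the set of lines crossed by a path, and $F_{p(0)}$ for the face of $\uniH^*$ containing $p(0)$. Since $e$ is a single edge it crosses exactly one line, namely $\ell$, and as $\ell\notin L(p)$ we get $L(p\cdot e)=L(p)\sqcup\{\ell\}$, whence $\region p=(\ell\cup\ell^{+})\cap\region{p\cdot e}$; in particular $\region p\subseteq\region{p\cdot e}$, $\region{p\cdot e}\setminus\region p\subseteq\ell^{-}$, and $\region{p\cdot e}\cap\ell=\region p\cap\ell$.

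First I would dispose of the assertion about $\region p\cap\ell$. Since $\ell\cup\ell^{+}$ is one of the half\dash/planes defining $\region p$, we have $\region p\subseteq\ell\cup\ell^{+}$, so no point of $\region p\cap\ell$ is interior to $\region p$. By Lemma~\ref{lem:region:connected-inter-with-line} this set is empty or a segment, and it is non\dash/empty because the face of $\uniH^*$ containing $p(1)$ lies in $\region p$ (it is reached along the lift of $p$) and carries on its boundary the $\uniH^*$\dash/edge dual to $e$, which lies on $\ell$. Call this boundary segment $\sigma$. The faces of $\uniH^*$ incident to $\ell$ from the $\ell^{+}$ side form a bi\dash/infinite band, consecutive ones sharing the perpendicular edge at a vertex of $\ell$; let $V$ be those meeting $\sigma$. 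They form a connected sub\dash/run, each belongs to $\region p$ (the edge it has on $\sigma$ lies in the union of closed faces $\region p$, and only that face is incident to it from the $\ell^{+}$ side), and $\sigma$ is covered by their edges on $\ell$; this is the first claim, with $\Lambda:=\rho(V)$. The inclusion $\Lambda\subseteq\region{p\cdot e}$ is immediate: each $F\in V\subseteq\region{p\cdot e}$ is reached crossing lines of $L(p)\subseteq L(p\cdot e)$, and crossing then the edge $\overline F\cap\ell\subseteq\ell\in L(p\cdot e)$ reaches $\rho(F)$.

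The heart of the proof is the reverse inclusion $\region{p\cdot e}\subseteq\region p\cup\Lambda$. Two observations drive it. (i)~A line crossing $\ell$ meets it at a $4$\dash/valent vertex where $\ell$ and the other line occupy the two pairs of facing edges, and $\rho$ swaps the two edges of the other line there; hence every line crossing $\ell$ is $\rho$\dash/invariant. (ii)~Since $p$ does not cross $\ell$ and $p(0)\in\ell^{+}$, the whole path $p$ lies in $\ell\cup\ell^{+}$, so any line of $L(p)$ with an edge strictly inside $\ell^{-}$ must cross $\ell$, hence is $\rho$\dash/invariant. It follows that reflecting a maximal $\ell^{-}$\dash/excursion of a face\dash/path crossing only lines of $L(p\cdot e)$ (the excursion crosses only $\rho$\dash/invariant lines of $L(p)$) produces another face\dash/path with the same endpoints crossing only lines of $L(p\cdot e)$ but with two fewer $\ell$\dash/crossings. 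So for a face $f$ of $\region{p\cdot e}$ we may take a face\dash/path $F_0=F_{p(0)},\dots,F_N=f$ crossing only lines of $L(p\cdot e)$ with at most one $\ell$\dash/crossing: none if $\mathrm{int}(f)\subseteq\ell^{+}$, in which case $f\subseteq\region{p\cdot e}\cap(\ell\cup\ell^{+})=\region p$; otherwise exactly one, at a step $i$, and then $F_0,\dots,F_i$ cross only lines of $L(p)$, so $F_i\in\region p$ and, having an edge on $\ell$, $F_i\in V$ and $F_{i+1}=\rho(F_i)\in\Lambda$. It then remains to show that each $F_k$ for $i+1\le k\le N$ touches $\ell$: granting this, $\overline{F_N}\cap\ell\subseteq\region{p\cdot e}\cap\ell=\sigma$, so $f=F_N$ adjoins $\sigma$ from the $\ell^{-}$ side, forcing $f=\rho(F')\in\Lambda$ for the $V$\dash/face $F'$ it adjoins. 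This goes by induction: $F_{i+1}=\rho(F_i)$ touches $\ell$; and if $F_k$ touches $\ell$ along an edge $\alpha$, then the edge $b$ shared by $F_k$ and $F_{k+1}$ lies strictly inside $\ell^{-}$ on a line of $L(p)$ that, by~(ii), crosses $\ell$; this forces $b$ to be adjacent to $\alpha$ in $\partial F_k$, since otherwise the $r$\dash/gon $F_k$ would produce a cyclic chain of fewer than $r$ distinct lines pairwise consecutively crossing and closing up through the crossing with $\ell$ — a polygon of fewer than $r$ lines, whose boundary would contain a simple closed path with fewer than $r$ convex vertices, impossible exactly as in the proof of Proposition~\ref{prop:lines:no-bigon-triangle}; finally, at the common vertex $v$ of $\alpha$ and $b$ the line through $b$ crosses $\ell$, and the $4$\dash/valent vertex analysis shows that the face $F_{k+1}$ on the far side of $b$ carries the $\ell$\dash/edge at $v$ facing $\alpha$, so $F_{k+1}$ touches $\ell$.

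The main obstacle is precisely this last inductive step: it says that once a face\dash/path reaching a face of $\region{p\cdot e}$ has crossed $\ell$ it can only slide along the band of faces incident to $\ell$ and never penetrates deeper into $\ell^{-}$. It rests on combining the reflection\dash/invariance of lines meeting $\ell$ with the absence of short polygons of lines in the tiling (equivalently the lower bound $r$ on the number of convex vertices of a simple closed path), and with the shortcut argument of~(i)--(ii); carefully setting up $\rho$ as an automorphism and controlling the reflected face\dash/paths is where the real work lies.
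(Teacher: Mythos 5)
Your proof is correct and reaches the same conclusion, but it takes a genuinely different route. The paper argues the reverse inclusion $\region{p\cdot e}\subseteq\region{p}\cup\Lambda$ by enumerating the lines bounding a single face $f$ of $\Lambda$ --- $\ell$ itself, two \emph{inner} lines crossing $\ell$ at the vertices of $f$'s $\ell$\dash/edge, and $r-3$ \emph{outer} lines --- and then using only the triangle-free Property to show the outer lines never cross $\ell$ (hence never cross $p$) and the inner lines, when they actually bound $\Lambda$, already bound $\region{p}$ and so are not crossed by $p$. You instead introduce the reflection $\rho$ of the $\{r,4\}$\dash/tiling fixing $\ell$ pointwise, establish that (i) any line crossing $\ell$ is $\rho$\dash/invariant and (ii) any line of $L(p)$ entering $\ell^{-}$ must cross $\ell$, then take a face\dash/path to an arbitrary face of $\region{p\cdot e}$, reflect $\ell^{-}$\dash/excursions to force at most one $\ell$\dash/crossing, and show the post\dash/crossing tail hugs $\ell$ by a ``short polygon of lines'' argument. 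The reflection $\rho$ is an elegant structural tool the paper does not use explicitly, and your approach makes transparent why $\Lambda$ is exactly a mirror image. The price is that your inductive step (ruling out a chain of $3\le j<r$ lines closing up through $\ell$) invokes the full ``a simple closed path has at least $r$ convex vertices'' lemma, whereas the paper gets by with just the triangle-free Property. Two small points you should be aware of: the existence of $\rho$ as a combinatorial automorphism of $\uniH^*$ rests on the (true but not stated in the paper) fact that a simply connected $\{r,4\}$\dash/tiling is the regular one; and in the inductive step you assert the edge $b$ lies ``strictly inside $\ell^{-}$'' before concluding $b$ is adjacent to $\alpha$ --- these are mutually exclusive, so this should be phrased as a proof by contradiction (if $b$ is not adjacent to $\alpha$ then $b$ is strictly inside $\ell^{-}$, whence $\ell_b$ crosses $\ell$, whence the small\dash/polygon contradiction). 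You should also note, as the paper's Proposition~\ref{prop:lines:no-bigon-triangle} does implicitly, that the chain of lines really bounds a region whose boundary is a simple closed path with convex vertices exactly at the $j<r$ pairwise crossings, which requires checking that the $j$ lines are distinct and that their bounded region is a union of faces; these are the same details the paper leaves implicit, so the level of rigor is consistent.
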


\begin{proof}
  The intersection point of $e$~and~$\ell$ belong to the same (closed)
  face of~$\region{p}$ as~$p(1)$. The set $X := \region{p} \cap
  \ell$~is thus non-empty and, by
  lemma~\ref{lem:region:connected-inter-with-line},
  connected. Moreover, as
  $\ell$~does not cross~$p$, the line segment $X$~is part of
  the boundary of~$\region{p}$. In particular, each edge in~$X$ bounds
  some face of~$\region{p}$. These faces form a strip~$V$ where two
  consecutive faces have an interior edge of~$\region{p}$ in common.

  Each face in~$V$ has a bounding edge in~$X$ that also bounds a
  symmetric face outside~$\region{p}$. The union of these symmetric
  faces is the reflection~$\Lambda$ of~$V$. Furthermore
  $\region{p}\cup\Lambda \subset \region{p\cdot e}$ since one can
  reach any face of~$\Lambda$ from the symmetrical face of~$V$ just by
  crossing~$\ell$. To prove the reverse inclusion it is sufficient to
  show that none of the lines bounding~$\Lambda$ are crossed by~$p$. A
  single face~$f$ of~$\Lambda$ is bounded by~$\ell$, two \emph{inner}
  lines crossing~$\ell$ at vertices of~$f$, and $r-3$~other
  \emph{outer} lines. Since these two inner lines both cross~$\ell$,
  the triangle-free Property ensures that the two are
  disjoint and bound a \emph{band} of~$\unicov$ that contains~$f$. No
  line crosses~$\ell$ in this band, and a line crossing~$\ell$ cannot
  cross an inner line as the three would pairwise intersect ---
  which is again prohibited by
  the triangle-free Property. Since the
  outer lines bounding~$f$ have an edge in the band they cannot
  cross~$\ell$, hence cannot cross~$p$. Consider now an inner line~$\ell'$.
  It contributes to the boundary of~$\Lambda$ only when~$f$ is an
  extremal face of~$\Lambda$. In that case, the symmetrical face of
  $f$ in $V$ is also extremal and $\ell'$ is bounding
  $\region{p}$. So that $\ell'$~does not cross~$p$. We conclude the
  proof by noting that $\ell$~is the last line to consider on the boundary
  of~$\Lambda$.
\end{proof}

\paragraph{Computing the relevant region.}
Following lemma~\ref{lem:region:construction} we will build the
relevant region of the lift~$\unic$ of a loop~$c$ incrementally as we lift its
edges one at a time. If $p$~is the subpath of the lift~$\unic$ already
traversed and $\unie$~is the edge following~$p$ in~$\unic$ we
extend~$\region{p}$ to~$\region{p\cdot \unie}$ whenever $\unie$~crosses the
boundary of~$\region{p}$. From
Lemma~\ref{lem:region:small-number-of-faces}, we know that
$\region{p}$~contains $O(\len{p})$~faces. However, a naive
representation of~$\region{p}$ as a subgraph of~$\uniH^*$ with its
faces and edges would require $O(r\len{p})$~space, which we cannot
afford to obtain a linear time contractibility test. We rather store
the interior of~$\region{p}$ by its dual graph, \ie/ by the subgraph
of~$\uniH$ induced by the vertices dual to the faces of~$\region{p}$. More
precisely, to represent such a finite subgraph~$\Gamma$ of~$\uniH$ we
use an abstract data structure with the following operations:
\begin{itemize}
\item $\NEW(x)$, which returns and adds to~$V(\Gamma)$ a new
  vertex~$v$ with no neighbour, where $x \in
  \left\{s,t\right\}$~represents the projection~$\pi(v)$.
\item $\TYPE(v)$, which yields the projection~$\pi(v)$ for any
  vertex~$v\in V(\Gamma)$ --- that is $\TYPE(\NEW(x)) = x$.
\item $\JOIN(v,w,e)$, which adds a new edge~$\unie \in E(\Gamma)$ with
  endpoints $v$~and~$w$, where $v$~and~$w$ are existing vertices
  of~$\Gamma$ of different type and $e$~is an edge of~$H$
  representing~$\pi(\unie)$. Calling $\JOIN(v,w,e)$ and~$\JOIN(w,v,e)$
  is equivalent, and using any of those more than once for the same
  $v$~and~$e$ is unsupported.
\item $\NEXT(v,e)$, which finds the other endpoint of the lift of~$e$
  in~$\Gamma$, if it exists. More formally, it returns the unique
  vertex~$w\in\Gamma$ such that $\JOIN(v,w,e)$ has been called, if it
  exists, and a special value~$\NONE$ otherwise.
\end{itemize}
Given a loop~$c$ in~$H$, the following procedure then
constructs~$\region{\unic}$:
\begin{enumerate}
\item Create a variable~$v$ which will point to the \emph{current
    endpoint} of the partial lift of~$c$. With the operation $v
  \leftarrow \NEW(s)$, create a single vertex corresponding to the
  relevant region with respect to a null path.
\item \label{itm:computing-region:loop-step} Call $e$~the next edge to
  process in~$c$; if there is none left, exit.
\item \label{itm:computing-region:follow-step} If $\NEXT(v,e) \neq
  \NONE$, set $v \leftarrow \NEXT(v,e)$ and return to
  step~\ref{itm:computing-region:loop-step}.
\item \label{itm:computing-region:mirror-step} If $\NEXT(v,e) =
  \NONE$, the partial lift is exiting the current relevant region. In
  other words the lift~$\unie$~of~$e$ crosses a line~$\ell$ not
  crossed until now and we are in the situation of
  lemma~\ref{lem:region:construction}.  We need to enlarge~$\Gamma$ by
  performing a \emph{mirror operation}, creating and attaching
  vertices mirrored across~$\ell$ as suggested by the lemma. When
  done, $\NEXT(v,e) \neq \NONE$ and we can return to
  step~\ref{itm:computing-region:follow-step}.
\end{enumerate}
When we reach the end of~$c$ we have computed the relevant region of
its lift. Moreover, $c$~is contractible if and only if its lift is
closed, which reduces to the equality of the current endpoint~$v$ with
the first created endpoint. It remains to detail the mirror operation.

\paragraph{The mirror operation.}
Let~$v \in V(\Gamma)$ and~$e \in E(H)$ such that $\NEXT(v,e) =
\NONE$. Denote by~$\unie$ the lift of~$e$ from~$v$ in~$\uniH$ and by~$\ell$
the line crossed by~$\unie$. Let also $p$~be the partial lift
of~$c$ already processed; in particular $\Gamma$~is the subgraph of~$\uniH$
included in~$\region{p}$.
We begin the mirror sub-procedure by creating a vertex~$w
= \NEW(x)$, where $x \in \left\{ s,t \right\} \setminus \left\{
  \TYPE(v) \right\}$, and calling $\JOIN(v,w,e)$ so that $\unie$~is
now represented in~$\Gamma$.

We say that two edges of~$\uniH$ are \define{siblings} if their dual
edges are facing each other in~$\uniH^*$. Obviously two siblings cross
the same line and have no common endpoint. The four edges bounding any
face of~$\uniH$ form two pairs of siblings matching the two pairs of
facing edges around the corresponding vertex
of~$\uniH^*$. Let~$\lift{e_1}$ be one of the two siblings of
$\unie$. Denote by $\lift{e_0}$~and~$\lift{e_2}$ the other sibling
pair bounding the same face as $\unie$~and~$\lift{e_1}$, so that
$v$~is an endpoint of~$\lift{e_0}$ --- see
Figure~\ref{fig:mirror}.
\begin{figure}
  \centering
  \includesvg[.45\linewidth]{mirror}
  \caption{The path~$p$ is composed of two edges. The 
    graph~$\Gamma$ corresponding to its relevant region~$\region{p}$
    has four edges bounding a face of~$\uniH$.}
  \label{fig:mirror}
\end{figure}
If $\unicov[e_1]$~has an endpoint in~$\region{p}$, or equivalently if
$\unicov[e_0] \in E(\Gamma)$, then by
lemma~\ref{lem:region:construction} $\unie$,~$\unicov[e_0]$,
$\unicov[e_1]$ and~$\unicov[e_2]$ are all included in~$\region{p\cdot
  \unie}$. In that case $\unicov[e_1]$~and~$\unicov[e_2]$ need to be
added to~$\Gamma$. If on the contrary $\unicov[e_0] \notin E(\Gamma)$
then $v$~lies in a face of~$\uniH^*$ that is extremal in the chain~$V$
of lemma~\ref{lem:region:construction}, and the sibling~$\unicov[e_1]$
should not be added to~$\Gamma$. We proceed as follows: by walking
from~$\pi(v)$ around one face bounded by~$e$ in~$H$ we figure out the
projections $e_0$,~$e_1$ and~$e_2$ of $\unicov[e_0]$,~$\unicov[e_1]$
and~$\unicov[e_2]$ respectively. If $v_1 = \NEXT(v,e_0) \neq \NONE$
then we create $w_1 \leftarrow \NEW(\TYPE(v))$, and call
$\JOIN(v_1,w_1,e_1)$ and $\JOIN(v_1,w,e_2)$.  We handle similarly the
sibling of~$\unicov[e_1]$ which is not~$\unicov[e]$, walking around
the other face bounded by~$e_1$ and checking whether $v_1$~is extremal
in the chain~$V$, and so on until we reach the end of~$V$. There
remains to do the mirror in the other direction, starting from the
still unprocessed face of~$H$ bounded by~$e$.  Eventually, every
vertex dual to a face in the chain~$\Lambda$ of Lemma~\ref{lem:region:construction}
has been created, and that lemma ensures that we missed no
vertex. Since each time we add a vertex we also add all its edges
in~$\uniH$ linking to existing vertices of~$\Gamma$, we now have
that $\Gamma$ is the dual graph of $\region{p\cdot \unie}$.

\paragraph{Data structure.}
We present an implementation of the abstract graph structure that only needs
\emph{constant time} to perform any of its operations. We use a
technique inspired from~\cite[exercise~2.12~p.~71]{ahu-daca-74} taking
advantage of the RAM~model to allocate in $\mcO(1)$~time an
$r$\dash/sized segment of memory without initializing it. We begin by giving
integer indices between $1$~and~$r$ to edges of~$H$. The index of $e$ will be
denoted $\ID(e)$, and tables will be indexed from~$1$. Then:
\begin{itemize}
\item $\NEW(x)$ creates a vertex structure~$v$ with a field~\ttstyle{type}
  pointing to~$x$, an integer field~\ttstyle{count} with value~$0$,
  two uninitialized tables \ttstyle{index}~and~\ttstyle{rev} of
  $r$~integers, and an uninitialized table~\ttstyle{neighbor} of
  $r$~pointers;
\item $\TYPE(v)$ returns the value of~\ttstyle{type};
\item $\JOIN(v,w,e)$ increments~\ttstyle{count} in~$v$,
  points~$\ttstyle{neighbor}[\ttstyle{count}]$ towards~$w$, sets
  $\ttstyle{index}[\ttstyle{count}]\leftarrow\ID(e)$ and
  $\ttstyle{rev}[\ID(e)] \leftarrow \ttstyle{count}$, and affects~$w$
  similarly;
\item $\NEXT(v,e)$ returns
  $\ttstyle{neighbor}[\ttstyle{rev}[\ID(e)]]$ if we have $1 \leq
  \ttstyle{rev}[\ID(e)] \leq \ttstyle{count}$ together with
  $\ttstyle{index}[\ttstyle{rev}[\ID(e)]] = \ID(e)$, and returns
  $\NONE$ otherwise.
\end{itemize}
Of course if $\JOIN(v,w,e)$ has been called for some~$w$
then $\NEXT(v,e)$~will indeed return a pointer to~$w$. If not, then
$\ttstyle{rev}[\ID(e)]$ will still be uninitialized, and even if by
chance $1 \leq \ttstyle{rev}[\ID(e)] \leq \ttstyle{count}$ the
corresponding cell of \ttstyle{index} will have been filled by another
$\JOIN$~operation and the \emph{round-trip} check will fail.

\paragraph{Complexity.}
Checking and adding a sibling in the mirror subprocedure needs one
face traversal in~$H$, one call to~$\NEXT$ and if needed one call
to~$\NEW$ and two to~$\JOIN$.  The initialisation of the mirror
operation reduces to one call to~$\NEW$ and another one to~$\JOIN$. In
the end all mirrors will have used at most one~$\NEXT$, one~$\NEW$
and~two~$\JOIN$ operations for each vertex of $\Gamma$
--- that is each face of $\region{\lift{c}}$. Moreover,
step~\ref{itm:computing-region:follow-step} in the construction of
$\region{\unic}$ uses one~$\NEXT$ operation per edge of~$c$. Because
every operation takes $\mcO(1)$~time, and thanks to
lemma~\ref{lem:region:small-number-of-faces} our algorithm takes
$\mcO(\len{c})$~time. Taking into account the precomputation of
Lemma~\ref{lem:term-product} we have proved
Theorem~\ref{th:contractibility-test} when $r\geq 6$. This is the case when
$\surf$~is an orientable surface of genus at least~$2$ or when $\surf$~is
non-orientable with genus at least~$3$. In the remaining cases we
can expand the term product representations in $O(\len{c})$~time to
obtain a word in the computed presentation of~$\pi_1(\surf,s)$. The
word problem, that is testing if a word represents the unity in a
group presentation, is trivial in those cases. This is clear for the
torus or the projective plane since their fundamental group is
commutative. When $\surf$~is the Klein bottle, we can assume that the
computed presentation is $\langle a,b \mathrel; abab\inv\rangle$ by
applying an easy change of generators if necessary. The relator~$abab\inv$
allows us to commute $a$~and~$b$ up to an inverse and to get
a canonical form $a^ub^v$ in $O(\len{c})$~time. This in turn solves the
word problem in linear time.

\section{The free homotopy test}\label{sec:free-homotopy-test}
We now tackle the free homotopy test. We restrict to the case where
$\surf$~is an orientable surface of genus at least two.  We can thus
orient the cellular embedding. This amounts to give a preferred
traversal direction to each facial walk. In particular, every oriented
edge~$e$ belongs to exactly one such facial walk, which we designate
as the left face of~$e$. This allows us in turn to associate with~$e$ a
dual edge oriented from the left face to the right face of~$e$. This
correspondence between the oriented edges of the embedded graph and
its dual will be implicit in the sequel.

We want to decide if two cycles $c$~and~$d$ on~$\surf$ are freely
homotopic. After running our contractibility test on $c$~and~$d$, we
can assume that none of these two cycles is contractible.  From
Lemma~\ref{lem:term-product} and the discussion of the simplified
framework in Section~\ref{subsec:simplified-framework}, we can also
assume that $c$~and~$d$ are given as closed walks in the radial graph~$H$.
Let $(\cycov{c},\pi_c)$~be the $c$\dash/cyclic cover of~$\surf$.
Recall that $\cycov{c}$~can be viewed as the orbit space of
the action of~$\left<\trans\right>$ where $\trans$~is the unique
automorphism of~$(\unicov,\pi)$ sending $\unic(0)$~on~$\unic(1)$ for a
given lift $\unic$ of~$c$. We will refer to~$\trans$ as a
\emph{translation} of~$\unicov$, as it can indeed be realized as a
translation of the hyperbolic plane. Notice that $\surf$~being
orientable, $\trans$~is orientation preserving. The
projection~$\cyproj$ sending a point of~$\unicov$ to its orbit
makes~$(\unicov,\cyproj)$ a covering space of~$\cycov{c}$ with $\pi =
\pi_c\circ\cyproj$. We denote by $\cyH$~and~$\cyH^*$ the respective
lifts of $H$~and~$H^*$ in~$(\cycov{c},\pi_c)$. In the sequel,
regularity of paths in $\surf$,~$\unicov$ or~$\cycov{c}$ is considered
with respect to $H^*$,~$\uniH^*$ and~$\cyH^*$ respectively, and so are
the \CW/s of regular paths.
\subsection{Structure of the cyclic cover}
Recall that a line in~$\unicov$ is an infinite sequence of facing
edges in~$\uniH^*$. We start by stating some structural properties of
lines. Most of these properties appear in~\cite{ce-tnspc-10} in one
form or another. However, due to our different notion of lines, we
cannot rely on the proofs in there. For instance, a line as
in~\cite{ce-tnspc-10} projects to a simple curve in the $c$\dash/cyclic
cover~$\cycov{c}$. In our case, though, a line may project to a
self-intersecting curve in $\cycov{c}$. In fact, as previously noted,
it may be the case that all of our lines have the same projection on~$\surf$.

\paragraph{$\trans$\dash/transversal lines.}
Let $\ell$~be a line such that $\ell \cap \tau(\ell) = \emptyset$ and
denote by~$\mathring B_{\ell}$ the open band of~$\unicov$ bounded by
$\ell$~and~$\tau(\ell)$.  The line~$\ell$ is said
\define{$\tau$\dash/transversal} if $B_{\ell}:=\mathring
B_{\ell}\cup\ell$ is a fundamental domain\footnote{\ie/ every orbit
  of $\left<\tau\right>$ has a unique representative in $B_{\ell}$.}
for the action of~$\left<\tau\right>$
over~$\unicov$. Equivalently, $\ell$~is $\tau$\dash/transversal if
$\unicov$~is the disjoint union of all the translates of~$B_{\ell}$.
In such a case we can obtain~$\cycov{c}$ by point-wise
identification of the boundaries $\ell$~and~$\trans(\ell)$
of~$B_\ell$. The following proposition gives a characterisation
$\trans$\dash/transversal lines whose existence are stated in
Proposition~\ref{prop:transversal-existence}. 
\begin{proposition}\label{prop:lines:transversal-condition}
  Let $\ell$~be a line such that $\ell \cap \tau(\ell) =
  \emptyset$. Then, $\ell$~is $\trans$\dash/transversal if and only if
  there exists~$x \in \unicov$ such that $\ell$~separates
  $x$~from~$\tau(x)$ but $\tau^{-1}(\ell)$~and~$\tau(\ell)$ do not.
\end{proposition}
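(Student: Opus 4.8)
The plan is to prove both directions by analysing how the translates of the closed band $B_\ell$ (and of the line $\ell$) are arranged in $\unicov$, using the fact that $\ell$ separates $\unicov$ into two infinite components (Lemma~\ref{lem:lines:separating}).

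\medskip

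\textbf{Forward direction.} Suppose $\ell$ is $\trans$-transversal, so that $\unicov$ is the disjoint union of the translates $\trans^n(B_\ell)$, $n\in\Z$. Pick any point $x$ in the open band $\mathring B_\ell$. I would argue that $\ell$ separates $x$ from $\trans(x)$: since $\trans(x)\in\trans(\mathring B_\ell)$, which is the open band bounded by $\trans(\ell)$ and $\trans^2(\ell)$, the two points lie in different translates of $B_\ell$, and $\ell=\partial B_\ell\cap\partial\trans(B_\ell)$ is precisely the common boundary line one must cross to pass from one to the other. Symmetrically, $\trans^{-1}(\ell)=\partial B_\ell\cap\partial\trans^{-1}(B_\ell)$ does not separate $x$ from $\trans(x)$ because both $x$ and $\trans(x)$ lie on the same side of $\trans^{-1}(\ell)$ (namely in $B_\ell\cup\trans(B_\ell)$, which is connected and disjoint from $\trans^{-1}(\mathring B_\ell)$), and likewise $\trans(\ell)=\partial\trans(B_\ell)\cap\partial\trans^2(B_\ell)$ does not separate them since both lie in $B_\ell\cup\trans(B_\ell)$, disjoint from $\trans^2(\mathring B_\ell)$. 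Here I would take care to use the tiling structure to justify that ``the translates $\trans^n(B_\ell)$ form a linearly ordered sequence glued along the lines $\trans^n(\ell)$'', so that separation statements reduce to counting how many of these lines lie between two given points.

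\medskip

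\textbf{Converse direction.} Suppose there is $x$ with $\ell$ separating $x$ from $\trans(x)$ but neither $\trans^{-1}(\ell)$ nor $\trans(\ell)$ doing so. Applying the automorphism $\trans^n$, which is a homeomorphism commuting with $\trans$, we get that $\trans^n(\ell)$ separates $\trans^n(x)$ from $\trans^{n+1}(x)$ while $\trans^{n-1}(\ell)$ and $\trans^{n+1}(\ell)$ do not. The key point is to deduce from this local data that the lines $\{\trans^n(\ell)\}_{n\in\Z}$ are pairwise disjoint and ``properly ordered''. First, $\ell\cap\trans(\ell)=\emptyset$ is a hypothesis; combined with the fact (Proposition~\ref{prop:lines:no-bigon-triangle}) that two distinct lines meet at most once, I would argue that if some $\trans^n(\ell)$ met $\trans^m(\ell)$ ($n\ne m$) then, walking the orbit of $x$, one could not have the above alternating separation pattern — the crossing would force $\trans^{n-1}(\ell)$ or $\trans^{n+1}(\ell)$ to separate consecutive orbit points. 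Once the lines $\trans^n(\ell)$ are known to be disjoint, each bounds the next translate of the band, and the region $\bigcup_n B_\ell^{(n)}$, where $B_\ell^{(n)}$ is the closed region between $\trans^{n}(\ell)$ and $\trans^{n+1}(\ell)$, is an increasing exhaustion of $\unicov$ (it is open, closed, and nonempty — it contains the orbit of $x$ and hence, by connectivity arguments, everything). Then $B_\ell=B_\ell^{(0)}$ meets each $\trans$-orbit exactly once: it meets the orbit of $x$ exactly once by the ordering, and a general orbit is handled by approximating an arbitrary point by a path to $x$ and counting line-crossings parity-style, exactly as in the proof of Lemma~\ref{lem:lines:separating}.

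\medskip

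The main obstacle I anticipate is the converse: turning the purely ``local'' separation hypothesis at a single point $x$ into the ``global'' statement that all translates $\trans^n(\ell)$ are disjoint and stacked in order. The honest way to do this is to show that the separation pattern propagates: if $\ell$ separates $x$ from $\trans(x)$ but $\trans^{\pm1}(\ell)$ do not, then in fact $\trans^n(\ell)$ separates $x$ from $\trans(x)$ for no $n\ne 0$, which forces the $\trans^n(\ell)$ to be nested correctly; any crossing $\trans^n(\ell)\cap\trans^m(\ell)\ne\emptyset$ would produce a ``band'' (in the sense of the triangle-free argument used in the proof of Lemma~\ref{lem:region:construction}) violating this. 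I would isolate this as the crux and handle it by a careful case analysis on the cyclic order in which the lines $\ell,\trans(\ell),\trans^{-1}(\ell),\dots$ cut a path from $x$ to $\trans(x)$, invoking the quad-free and triangle-free properties of Proposition~\ref{prop:lines:no-bigon-triangle} to rule out the bad configurations.
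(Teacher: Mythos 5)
Your forward direction is fine and matches the paper's in spirit (the paper in fact dismisses this direction in one sentence). For the converse, however, your plan diverges from the paper's argument at exactly the place you yourself flag as ``the crux,'' and that step is left genuinely unproved.

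You propose to first establish that the translates $\tau^n(\ell)$ are pairwise disjoint, invoking the triangle-free and quad-free properties of Proposition~\ref{prop:lines:no-bigon-triangle} and a ``careful case analysis.'' That analysis is never carried out; you only write ``I would argue that~\dots~the crossing would force $\tau^{n-1}(\ell)$ or $\tau^{n+1}(\ell)$ to separate consecutive orbit points,'' which is a restatement of the goal, not a proof. Also note that the hypothesis of the proposition is about a single point $x$ and only involves $\ell$, $\tau(\ell)$, $\tau^{-1}(\ell)$; it is not obvious that a crossing of $\tau^n(\ell)$ with $\tau^m(\ell)$ for $|n-m|\geq 2$ can be turned into a violation of this local hypothesis by a direct combinatorial case analysis.

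The paper does something simpler and sharper, and it bypasses the tiling combinatorics entirely. Set $x_i:=\tau^i(x)$, $\ell_i:=\tau^i(\ell)$, and let $L_i$, $U_i$ be the two components of $\unicov\setminus\ell_i$ (guaranteed by Lemma~\ref{lem:lines:separating}), with $x_i\in L_i$ and $x_{i+1}\in U_i$. The separation hypothesis applied at each $i$ gives $x_{i-1},x_i\in L_i$; since the connected set $\ell_{i-1}$ separates these two points and is disjoint from $\ell_i$, it lies entirely in $L_i$, hence $L_{i-1}\subset L_i$, and symmetrically $U_{i+1}\subset U_i$. Defining $B_i:=(L_i\cap U_{i-1})\cup\ell_{i-1}$, the pairwise disjointness of the $B_i$'s is then a two-line set-theoretic computation using these nestings (which also gives, as a by-product, the disjointness of all the $\ell_n$'s you wanted to prove up front). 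The covering is then established by taking a path in $\uniH$ from a face containing an arbitrary point $y$ to a face containing $x_1$ and translating $y$ across the first $\ell_i$ met, rather than by your ``open, closed, nonempty'' exhaustion, which would need a separate local-finiteness argument for the lines.

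So: the structure of your converse is plausible, but the route through disjointness-of-translates is both harder than necessary and not actually established in your write-up. The paper's nesting argument is the idea you are missing; it derives everything from the separation hypothesis plus Lemma~\ref{lem:lines:separating}, with no appeal to the triangle-free or quad-free properties.
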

\begin{proof}
  If $B_{\ell}$~is a fundamental domain, then
  $\mathring B_{\ell}$~and~$\tau\inv(\mathring B_{\ell})$
  are disjoint and separated by~$\ell$.
  The direct implication in the equivalence of the lemma is
  thus trivial. For the reverse implication consider an~$x$ as in the
  proposition. Put $x_i := \tau^i(x)$~and $\ell_i := \tau^i(\ell)$. By
  assumption, for~all~$i$, $\ell_i$~separates $x_i$~from~$x_{i+1}$
  while $\ell_{i-1}$~and~$\ell_{i+1}$ do not. Let $U_i$~and~$L_i$ be
  the two
  components of~$\unicov\setminus\ell_i$ with $x_{i+1}\in U_i$~and
  $x_{i}\in L_i$. From the assumption, we also have for~all~$i$ that
  $x_{i}\in L_{i+1}$ and $x_{i+1}\in U_{i-1}$. In particular,
  $L_i$~contains both $x_i$~and~$x_{i-1}$. Since $\ell_{i-1}$ separates
  these two points it must be that $\ell_{i-1}\subset L_i$. Likewise,
  we have $\ell_{i+1}\subset U_i$. In other words, we have $L_{i-1} =
  \tau\inv(L_{i})\subset L_i$ and $U_{i+1} = \tau(U_i)\subset U_i$.
  We~finally set $B_i:= (L_{i}\cap U_{i-1}) \cup \ell_{i-1}$. See
  Figure~\ref{fig:transversal-character}.
  \begin{figure}
    \centering
    \includesvg[.4\linewidth]{transversal-character}
    \caption{The fundamental domain~$B_1$ is the intersection of the
      grey upper region~$U_1$ with the hatched lower region~$L_1$.}
    \label{fig:transversal-character}
  \end{figure}

  We claim that the~$B_i$'s are pairwise disjoint. 
  Indeed, for~$i<j$, the inclusions $\ell_{i-1}\subset L_i\subset L_{j-1}$ imply
  $\ell_{i-1}\cap B_j=\emptyset$ since $L_{j-1}$~and~$U_{j-1}$ are
  disjoint. Likewise, the inclusions $\ell_{j-1}\subset U_{j-2}\subset
  U_{i-1}$ imply $\ell_{j-1}\cap B_i=\emptyset$. We thus obtain
  \[ B_i\cap B_j = (L_{i}\cap U_{i-1})\cap (L_{j}\cap U_{j-1}) =
  L_i\cap U_{j-1} \subset L_{i}\cap U_{i} =\emptyset.
  \]
  Moreover, the union of the~$B_i$'s cover~$\unicov$. To see this,
  consider any point~$y$ in~$\unicov$ contained in the closure of some
  face~$f_y$ of~$\lift{H}^*$. Let~$p$~be a path in~$\lift{H}$ between
  the vertex dual to~$f_y$ and the vertex dual to a face that contains~$x_1$.
  If~$p$~does not cross any~$\ell_i$, then $p$, hence~$y$, must
  be contained in~$B_1$. Otherwise, let $z$~be the first intersection
  point along~$p$ with~$\cup_i\ell_i$. By considering the appropriate
  translate of the subpath of~$p$ between $y$~and~$z$, we see that
  some translate of~$y$ is contained in~$B_1$. We conclude the lemma
  by noting that $B_{i+1} = \tau(B_i)$, whence $\unicov$~is the
  disjoint union of the translates of~$B_1 = B_{\ell}$.
 \end{proof}

\paragraph{Generators of the cyclic cover.}
A loop of~$\cycov{c}$ that is regular (for~$\cyH^*$) and freely
homotopic to~$\cyproj(\unic)$ is called a \define{generator} of~$\cycov{c}$.
Since $\pi_c(\cyproj(\unic)) = \pi(\unic) = c$, every
generator projects on~$\surf$ to a loop that is freely homotopic
to~$c$. Conversely, every regular loop freely homotopic to~$c$ has a
lift in~$\cycov{c}$ which is a generator. A~\define{minimal generator}
is a generator whose \CW/ (with respect to~$\cyH^*$) is minimal among
generators; it projects to a regular loop of \MCW/ in the free
homotopy class of~$c$.
\begin{lemma}
    \label{lem:generator:lift-is-tau-compatible}
    Let~$\lift{\gamma}\subset\unicov$~be any lift of a generator~$\gamma$
    of~$\cycov{c}$. Then, $\trans$~sends
    $\lift{\gamma}(0)$~on~$\lift{\gamma}(1)$.
\end{lemma}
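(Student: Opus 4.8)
The plan is to read off, via the monodromy of the covering $\cyproj\colon\unicov\to\cycov{c}$, the deck transformation attached to $\gamma$ and to check that it equals $\trans$. First I would set up the framework: since $\unicov$ is simply connected and $\cycov{c}$ is the orbit space $\unicov/\langle\trans\rangle$, the map $\cyproj$ is the universal covering of $\cycov{c}$ and its group of deck transformations is $\langle\trans\rangle$; as $\cycov{c}$ is a cylinder this group is infinite cyclic, and I would invoke the standard fact that a deck transformation fixing a point is the identity. The relation $\trans(\unic(0))=\unic(1)$ shows that $\cyproj(\unic)$ is a loop of $\cycov{c}$ and that $\unic$ is precisely the lift of this loop issued from $\unic(0)$; hence the deck transformation that the monodromy of $\cyproj(\unic)$ associates with the fibre point $\unic(0)$ is $\trans$ itself.

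Next I would bring in the hypothesis that $\gamma$ is freely homotopic to $\cyproj(\unic)$. A free homotopy between these two loops provides a path $u$ in $\cycov{c}$ from $\cyproj(\unic)(0)$ to $\gamma(0)$, together with a fixed-basepoint homotopy $\cyproj(\unic)\simeq u\cdot\gamma\cdot\bar u$, where $\bar u$ denotes $u$ traversed backwards. I would then lift the loop $u\cdot\gamma\cdot\bar u$ to $\unicov$ starting at $\unic(0)$: the lift of $u$ terminates at a point $b\in\cyproj\inv(\gamma(0))$; continuing, the lift of $\gamma$ issued from $b$ is some lift $\lift{\gamma}_0$ with $\lift{\gamma}_0(0)=b$ and, both endpoints lying over $\gamma(0)$, with $\lift{\gamma}_0(1)=\trans^{k}(b)$ for a unique $k\in\Z$; finally, by equivariance, the lift of $\bar u$ issued from $\trans^{k}(b)$ is $\trans^{k}$ applied to the reverse of the lift of $u$, and so terminates at $\trans^{k}(\unic(0))$. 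On the other hand the lift of $\cyproj(\unic)$ from $\unic(0)$ ends at $\unic(1)=\trans(\unic(0))$, so the homotopy-lifting property forces $\trans^{k}(\unic(0))=\trans(\unic(0))$. Therefore $\trans^{k-1}$ fixes $\unic(0)$, hence is the identity, hence $k=1$ since $\langle\trans\rangle$ is infinite. This proves $\lift{\gamma}_0(1)=\trans(\lift{\gamma}_0(0))$ for this particular lift.

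Finally I would upgrade this to an arbitrary lift $\lift{\gamma}$ of $\gamma$: any two lifts of a loop differ by a deck transformation, so $\lift{\gamma}=\sigma\circ\lift{\gamma}_0$ for some $\sigma\in\langle\trans\rangle$, and since $\sigma$ commutes with $\trans$ we get $\lift{\gamma}(1)=\sigma(\trans(\lift{\gamma}_0(0)))=\trans(\sigma(\lift{\gamma}_0(0)))=\trans(\lift{\gamma}(0))$, as desired. The point requiring care is the bookkeeping of basepoints in the free homotopy: it is this that pins down $\trans$ rather than $\trans\inv$, and it relies on the convention, built into the definition of a generator, that generators are freely homotopic to $\cyproj(\unic)$ with its given orientation. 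The equivariance used to lift $\bar u$ is then routine once the universal-covering picture is in place.
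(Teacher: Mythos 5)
Your proof is correct. It reaches the same conclusion by a related but genuinely different route. The paper lifts the free homotopy $h:[0,1]^2\to\cycov{c}$ as a map of the square into $\unicov$ with $\unih(1,0)=\lift{\gamma}(0)$; it then observes that $\unih(0,\cdot)$ is $\trans^i(\unic)$ for some $i$, deduces $\unih(0,1)=\trans(\unih(0,0))$ by commuting $\trans$ past $\trans^i$, and uses uniqueness of lifts of the path $s\mapsto h(s,0)=h(s,1)$ to conclude $\unih(\cdot,1)=\trans\circ\unih(\cdot,0)$, hence $\lift{\gamma}(1)=\trans(\lift{\gamma}(0))$. You instead rephrase the free homotopy as a based homotopy $\cyproj(\unic)\simeq u\cdot\gamma\cdot\bar u$ and compare monodromies by lifting paths: tracking the endpoint of the lift of $u\cdot\gamma\cdot\bar u$ against that of $\cyproj(\unic)$ gives $\trans^k(\unic(0))=\trans(\unic(0))$, from which you extract $k=1$ using that deck transformations act freely and $\langle\trans\rangle$ is infinite cyclic; you then transport the conclusion to an arbitrary lift via commutativity of the (abelian) deck group. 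What the paper's version buys is that it never has to solve for an integer $k$: the commutativity trick $\trans^i\circ\trans=\trans\circ\trans^i$ plus uniqueness of lifts does all the work, so no torsion-freeness of the deck group is needed, and the arbitrary lift is handled up front by choosing the lift of $h$ to start at $\lift{\gamma}(0)$. What your version buys is conceptual transparency: it isolates the monodromy of $\cyproj(\unic)$ as $\trans$ and makes explicit that free homotopy preserves conjugacy class of monodromy; the orientation bookkeeping ($\trans$ rather than $\trans\inv$) is also cleanly tied to the convention on generators, as you note. Both are standard covering-space arguments and both are complete.
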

\begin{proof}
  By definition of a generator there exists a homotopy $h: [0,1]^2
  \rightarrow \cycov{c}$ from~$\cyproj(\unic)$ to~$\gamma$. Consider
  in~$\unicov$ the unique lift~$\unih$ of~$h$
  with~$\unih(1,0)=\lift{\gamma}(0)$ (see
  Figure~\ref{fig:lift-translation}).
  \begin{figure}
    \centering
    \includesvg[.8\linewidth]{lift-translation}
    \caption{(Left)~$\cyproj(\unic)$~and~$\gamma$ are homotopic as
      generators of~$\cycov{c}$.\allowbreak\ (Right)~Any homotopy~$h$ lifts
      to a homotopy in~$\unicov$.}
    \label{fig:lift-translation}
  \end{figure}
Then $\unih(1,\cdot) =
  \lift{\gamma}$ and $\unih(0,\cdot)$~is some lift
  of~$\cyproj(\unic)$. Since $\cycov{c}$~is a cylinder, the
  automorphism group of~$(\unicov,\phi_c)$ is the cyclic group
  generated by~$\trans$. It follows that $\unih(0,\cdot) =
  \trans^i(\cyproj(\unic))$ for some integer~$i$. In particular,
  $\unih(0,1) =
  \trans\mathord{\left(\smash{\unih}(0,0)\right)}$. Since
  $h(\cdot,0)=h(\cdot,1)$, it ensues by uniqueness of lifts that
  $\unih(.,1) =
  \trans\mathord{\left(\smash{\unih}(\cdot,0)\right)}$. Whence
  $\unih(1,1) = \trans\mathord{\left(\smash{\unih}(1,0)\right)}$ which
  is exactly $\lift{\gamma}(1) =
  \trans\mathord{\left(\smash{\lift{\gamma}}(0)\right)}$.
\end{proof}
It follows that a regular path joining two points $x$~and~$y$ in~$\unicov$
is a lift of a generator if and only~if $y =\trans(x)$.
Consider a minimal generator~$\gamma$ and one of its lifts~$\lift{\gamma}
\subset\unicov$. Its reciprocal image~$\ell_{\gamma}:=
\cyproj\inv(\gamma)$ is the curve obtained by concatenation of all the
translates~$\trans^i(\lift{\gamma})$, $i\in\Z$, of~$\lift{\gamma}$. Remark that,
as far as intersections of~$\lift{\gamma}$ with lines is concerned, we
can assume that $\gamma$, hence $\ell_{\gamma}$, is simple. Indeed, if
$\gamma$~self-intersects it must contain by Lemma~\ref{lem:cylinder-2} a
subpath forming a contractible loop; by minimality of~$\gamma$ this
subpath does not intersect any line and we can cut it off without
changing the intersections of~$\gamma$ with any line. By induction on
the number of self-intersections, we can thus make $\gamma$ simple.
The simple curve~$\ell_{\gamma}$ actually behaves like a line in~$\unicov$:
\begin{lemma}\label{lem:l-gamma-separating}
  $\ell_{\gamma}$~is separating in~$\unicov$.
\end{lemma}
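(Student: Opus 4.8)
The plan is to derive the separation property of $\ell_{\gamma}$ from the corresponding (elementary) fact on the cylinder $\cycov{c}$, transported through the covering $\cyproj\colon\unicov\to\cycov{c}$. Observe first that $\cyproj$ is the \emph{universal} covering of $\cycov{c}$: it is a covering map and $\unicov$ is simply connected. Since $\ell_{\gamma}=\cyproj\inv(\gamma)$ by definition, we have
\[ \unicov\setminus\ell_{\gamma}=\cyproj\inv\!\big(\cycov{c}\setminus\gamma\big). \]
Now recall that $\gamma$ may be assumed simple, and it is non-contractible (being freely homotopic to $\cyproj(\unic)$, whose projection is the non-contractible loop $c$); hence by the Jordan curve theorem on the cylinder (equivalently, by the classification of essential simple closed curves on an annulus) the set $\cycov{c}\setminus\gamma$ has exactly two components $A^{+}$ and $A^{-}$, both open and non-empty. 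It therefore suffices to prove that each of the disjoint open sets $\cyproj\inv(A^{+})$ and $\cyproj\inv(A^{-})$ is connected: $\unicov\setminus\ell_{\gamma}$ is then a disjoint union of two non-empty connected open sets, so it is disconnected, which is exactly the assertion.

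To prove the connectedness of, say, $\cyproj\inv(A^{+})$, I would run a path-lifting argument. Write $A:=A^{+}$. Being one side of the essential simple loop $\gamma$, $A$ is an open sub-cylinder of $\cycov{c}$ whose core circle is freely homotopic to $\gamma$ inside $\cycov{c}$; since $\cyproj(\unic)$ generates $\pi_1(\cycov{c})\cong\Z$ and $\gamma$ is freely homotopic to $\cyproj(\unic)$, the class of this core circle generates $\pi_1(\cycov{c})$ as well (free homotopy only changes a class into a conjugate, and $\Z$ is abelian). Hence the inclusion $A\hookrightarrow\cycov{c}$ induces a surjection $\pi_1(A)\twoheadrightarrow\pi_1(\cycov{c})$. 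Given $y_1,y_2\in\cyproj\inv(A)$, choose a path $\alpha$ in $A$ from $\cyproj(y_1)$ to $\cyproj(y_2)$; lifting $\alpha$ from $y_1$ produces a path inside $\cyproj\inv(A)$ ending at some point $y_2'$ of the fibre over $\cyproj(y_2)$. As $\cyproj$ is the universal cover, the deck group $\langle\trans\rangle$ acts simply transitively on this fibre, so by the surjectivity above one can pick a loop $\beta$ in $A$ based at $\cyproj(y_2)$ whose lift from $y_2$ ends at $y_2'$. Then $\alpha\cdot\beta\inv$ is a path in $A$ whose lift from $y_1$ ends at $y_2$, so $y_1$ and $y_2$ lie in the same component of $\cyproj\inv(A)$. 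The same reasoning applies verbatim to $A^{-}$, and the lemma follows.

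The main obstacle is precisely this last connectedness statement: it is standard covering-space theory — the components of the preimage of a connected set $A$ under a covering map are in bijection with the cosets of the image of $\pi_1(A)$ in $\pi_1(\cycov{c})$, of which there is only one here — so one may simply invoke it in place of the lifting argument above. Everything else reduces to the Jordan curve theorem on the cylinder and the already-recalled fact that $\cyproj(\unic)$ generates $\pi_1(\cycov{c})$; in particular nothing about the combinatorial structure of $\cyH^{*}$ or about lines is required.
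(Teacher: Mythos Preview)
Your proposal is correct and follows essentially the same idea as the paper: the paper's proof is the single sentence ``Since $\gamma$ is already separating on $\cycov{c}$, this is certainly the case for its reciprocal image $\ell_{\gamma}$,'' which is exactly your observation that $\unicov\setminus\ell_{\gamma}=\cyproj\inv(\cycov{c}\setminus\gamma)$ together with surjectivity of $\cyproj$. Your second and third paragraphs go further and establish that the complement has \emph{exactly two} components; this is a stronger statement than the lemma claims and is not needed here (though it is implicitly used later when the paper speaks of the two ``sides'' of $\ell_{\gamma}$).
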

\begin{proof}
  Since $\gamma$~is already separating on~$\cycov{c}$, this is
  certainly the case for its reciprocal image~$\ell_{\gamma}$.
\end{proof}
\begin{lemma}\label{lem:l_gamma-line-crossing}
  A line can intersect~$\ell_{\gamma}$ at most once.
\end{lemma}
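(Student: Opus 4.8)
The statement to prove is that a line $\ell$ of $\uniH^*$ can intersect $\ell_\gamma = \cyproj\inv(\gamma)$ at most once, where $\gamma$ is a (simple) minimal generator of $\cycov{c}$. The natural strategy is proof by contradiction: suppose $\ell$ meets $\ell_\gamma$ in at least two points $x$ and $y$. Since $\ell_\gamma$ is $\trans$-invariant (it is the union of all translates $\trans^i(\lift\gamma)$) and $\ell_\gamma$ is separating in $\unicov$ by Lemma~\ref{lem:l-gamma-separating}, I want to derive either a forbidden configuration of lines (via the triangle-free or quad-free Property of Proposition~\ref{prop:lines:no-bigon-triangle}) or a contradiction with the minimality of $\gamma$.

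**Key steps.** First I would use the two intersection points $x,y$ of $\ell$ with $\ell_\gamma$ to isolate a "bigon-like" region: the subarc of $\ell$ between $x$ and $y$ together with the subarc of $\ell_\gamma$ between $x$ and $y$ bounds a region $R$ of $\unicov$, using that $\ell_\gamma$ is separating and $\ell$ is separating (Lemma~\ref{lem:lines:separating}). Next I would argue that the subarc of $\ell_\gamma$ between $x$ and $y$ is contained in finitely many translates of $\lift\gamma$, say it meets $\trans^i(\lift\gamma)$ for $i$ in some finite range; since $\ell$ crosses $\ell_\gamma$ transversally and $\ell \cap \trans(\ell)$ and $\ell\cap\trans\inv(\ell)$ behaviour is controlled, I would try to reduce to the case where the subarc lies in a single translate or in two consecutive translates. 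The crucial move is then: the portion of $\ell_\gamma$ cut off, when pushed back down to $\cycov{c}$ via $\cyproj$, together with the portion of $\cyproj(\ell)$, would produce a shortcut — a curve freely homotopic to $\gamma$ but with strictly fewer crossings with $\cyH^*$, contradicting minimality of $\gamma$; alternatively, if no crossings are saved, the bigon between $\ell$ and a translate of $\lift\gamma$ combined with line-bigons is ruled out directly by Proposition~\ref{prop:lines:no-bigon-triangle} applied to $\ell$ and the lines crossed inside $R$.

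**Main obstacle.** The delicate point is the interaction between $\ell_\gamma$ being only \emph{piecewise} a lift of $\gamma$ (a concatenation of translates $\trans^i(\lift\gamma)$) rather than a genuine line: a line $\ell$ might enter and leave $\ell_\gamma$ across \emph{different} translates $\trans^i(\lift\gamma)$ and $\trans^j(\lift\gamma)$ with $i\neq j$, so the "bigon" is not between $\ell$ and a single geodesic-like object but spans several fundamental domains of $\langle\trans\rangle$. I expect the hard part to be showing that even in this case one obtains a genuine homotopy on $\cycov{c}$ reducing the crossing weight of $\gamma$ — this likely requires carefully tracking how the region $R$ projects to an annular or disc region in $\cycov{c}$ and invoking an innermost-bigon argument together with Lemma~\ref{lem:region:connected-inter-with-line}-style connectedness of intersections. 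Once that reduction is in place, the contradiction with $\gamma$ being a minimal generator (its \CW/ is minimal in the free homotopy class) closes the proof.
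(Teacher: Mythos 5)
Your overall strategy — proceed by contradiction, pass to an innermost bigon $D$ bounded by a subarc $p$ of $\ell$ and a subarc $q$ of $\ell_\gamma$, and then derive either a forbidden line configuration (Proposition~\ref{prop:lines:no-bigon-triangle}) or a contradiction with minimality of $\gamma$ — is exactly the paper's strategy, and you correctly identify the principal obstacle: the subarc $q$ of $\ell_\gamma$ need not lie in a single translate of $\lift\gamma$, so the ``push down and shortcut'' move is not automatic. However, you only \emph{name} this obstacle; you do not overcome it, and the naive shortcut you describe would in fact fail in the two hardest cases.

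Concretely, the paper's proof orients $\ell_\gamma$, uses that $\trans$ is orientation-preserving so that $p$ and all its translates $\trans^i(p)$ lie on the same side of $\ell_\gamma$, and then distinguishes four sub-cases according to where the second endpoint $y$ of $q$ falls relative to $x$, $\trans(x)$, $\trans^2(x)$. When $y$ lies at or past $\trans^2(x)$, the contradiction comes from the triangle-free Property applied to $\ell$, $\trans(\ell)$, $\trans^2(\ell)$ — not from any crossing-weight shortcut, and not from ``lines crossed inside $R$'' as you suggest. When $y$ lies strictly between $\trans(x)$ and $\trans^2(x)$, the endpoints of $p$ and $\trans(p)$ interleave along $\ell_\gamma$, forcing $p\cap\trans(p)\neq\emptyset$; the paper then builds a specific simple closed curve out of pieces of $p$, $\trans(p)$, $\trans\inv(p)$ and $q$, and uses the triangle-free Property to transfer every line crossing $p[\trans\inv(u),u]$ onto the $q$-side, producing a generator of strictly smaller \CW/. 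Your proposed shortcut (project $q$ and $p$ to $\cycov c$ and shorten) does not obviously produce a generator here, because $\cyproj(q)$ wraps more than once around the cylinder and the naive replacement need not reduce crossing weight. Finally, when $y=\trans(x)$ there is \emph{no} crossing saved at all: both paths project to curves homotopic to $\gamma$ of equal weight, and the contradiction must instead come from the regularity (transversality) of $\gamma$, since on a cylinder two homotopic simple curves cannot cross an odd number of times. Your sketch has no mechanism to close this case. So while your plan is aimed in the right direction and cites the right supporting lemmas, the essential content of the proof — the case split on the position of $y$ and the three distinct arguments it requires — is missing.
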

\begin{proof}
  Suppose that a line~$\ell$ crosses~$\ell_{\gamma}$ at least
  twice. Then, there exists a disk bounded by a subpath of~$\ell$ and
  a subpath of~$\ell_{\gamma}$ that only intersect at their common
  extremities. Let~$D$~be an inclusion-wise minimal such disk. Call~$p$~the
  subpath of~$\ell$ bounding~$D$ and call $q$~the
  subpath of~$\ell_{\gamma}$ bounding~$D$. Let $x$~and~$y$ be the common
  extremities of $p$~and~$q$ so that $x$~occurs before~$y$
  along~$\ell_{\gamma}$ (oriented from~$x$ to~$\trans(x)$). Note that
  $\ell$~cannot cross~$q$ as this would otherwise contradict the minimality
  of~$D$. Since $\trans$~is orientation preserving, $p$~and its
  translates~$\trans^i(p)$ occur on the same side of~$\ell_{\gamma}$
  (see previous lemma). We shall consider all the relative positions
  of~$y$ with respect to $x$,~$\trans(x)$ and~$\trans^2(x)$
  along~$\ell_{\gamma}$ and reach a contradiction in each
  case. Figure~\ref{fig:line-l-gamma} illustrates each of the cases.
  \begin{figure}
    \centering
    \includesvg[.85\linewidth]{line-l-gamma}
    \caption{Along $\ell_{\gamma}$, the point~$y$ lies
        (1.)~beyond $\trans^2(x)$,\allowbreak\ (2.)~strictly between
        $\trans(x)$~and~$\trans^2(x)$,\allowbreak\ (3.)~on $\trans(x)$
        and~(4.)~strictly between $x$~and~$\trans(x)$.}
    \label{fig:line-l-gamma}
  \end{figure}
  \begin{enumerate}
  \item If $y=\trans^2(x)$, or if $\trans^2(x)$~separates $x$~from~$y$
    on~$\ell_{\gamma}$, then $\trans(\ell)$~and~$\trans^2(\ell)$
    cross~$q$ and also cross~$p$ by minimality of~$D$. It follows that
    $\ell$,~$\trans(\ell)$ and~$\trans^2(\ell)$ would pairwise intersect,
    in contradiction with the triangle-free Property.
  \item If $y$~strictly lies between $\trans(x)$~and~$\trans^2(x)$, then
    $p$~and~$\trans(p)$ must intersect. Let~$u$~be their intersection point
    (there is a unique one since we cannot have
    $\ell=\trans(\ell)$). We introduce the notation~$\alpha[u,v]$ for
    the subpath of a simple path~$\alpha$ between two of its points
    $u$~and~$v$. The concatenation
    \[
    p[\trans\inv(u),u]\cdot\trans(p)[u,\trans(x)]\cdot
    q[\trans(x),\trans\inv(y)]\cdot
    \trans\inv(p)[\trans\inv(y),\trans\inv(u)]
    \]
    is a simple closed path. It follows from
    the triangle-free Property that any line
    intersecting~$p[\trans\inv(u),u]$ must also
    intersect~$q[\trans(x),\trans\inv(y)]$. Consider a point~$v$ near~$u$ and
    outside~$D\cup\trans(D)$ and consider a regular path~$p'$
    between $\trans\inv(v)$~and~$v$, close to~$p[\trans\inv(u),u]$ but
    disjoint from~$p$; we may choose~$p'$ in such a way that it
    crosses the same line as the relative interior of~$p[\trans\inv(u),u]$.
    It follows that the \CW/ of~$p'$ is strictly
    less than the \CW/ of the regular path~$q[x',\trans(x')]$, where
    $x'\in q$~is a point close to~$x$. This contradicts the fact that
    $\gamma$~is a minimal generator since $q[x',\trans(x')]$~projects
    to~$\gamma$ and $p'$~projects to a shorter generator of~$\cycov{c}$.
  \item If $y=\trans(x)$ then $\gamma$~and~$\pi_c(\ell) = \pi_c(p)$
    are homotopic and intersect once. Since $\cycov{c}$~is a cylinder,
    this implies that $\gamma$~and~$\pi_c(\ell)$ touch but do not
    cross transversely. This in turn contradicts the regularity of~$\gamma$.
  \item Finally, if $y$~strictly separates $x$~from~$\trans(x)$ then
    the path obtained by following~$\ell$ between $x$~and~$y$ and then
    $\ell_{\gamma}$ between $y$~and~$\trans(x)$ provides, after 
    infinitesimal perturbation, a lift of a generator with \CW/
    strictly less than~$\gamma$. This contradicts the minimality of~$\gamma$.
  \end{enumerate}
\end{proof}
\begin{lemma}\label{lem:l_gamma-no-triangle}
  If a line~$\ell$ intersects~$\ell_{\gamma}$, then $\ell\cap
  \trans(\ell) = \emptyset$.
\end{lemma}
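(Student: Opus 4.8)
The plan is to argue by contradiction. Assume $\ell$ meets $\ell_{\gamma}$ and also that $\ell\cap\trans(\ell)\neq\emptyset$; I will produce a configuration of three pairwise intersecting, pairwise distinct lines $\ell,\trans(\ell),\trans^{2}(\ell)$, which is forbidden by the triangle-free Property of Proposition~\ref{prop:lines:no-bigon-triangle}. First I would record the elementary facts I need. Since $\ell_{\gamma}=\cyproj\inv(\gamma)$ is the union of all translates $\trans^{i}(\lift{\gamma})$, it is globally $\trans$-invariant; and because $\surf$ is orientable, $\trans$ restricts to an orientation-preserving translation of $\ell_{\gamma}\cong\R$, so for any $a\in\ell_{\gamma}$ the points $\trans^{i}(a)$ occur monotonically along $\ell_{\gamma}$ and $\trans$ preserves each of the two sides $A,B$ of $\ell_{\gamma}$ (Lemma~\ref{lem:l-gamma-separating}). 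Since $\trans$ is fixed-point free and, by Lemma~\ref{lem:l_gamma-line-crossing}, every line meets $\ell_{\gamma}$ exactly once, the lines $\ell,\trans(\ell),\trans^{2}(\ell)$ are pairwise distinct (an equality $\trans^{j}(\ell)=\ell$ would force $\trans^{j}(a)=a$, where $\{a\}=\ell\cap\ell_{\gamma}$). Hence $\ell\cap\trans(\ell)$ is a single point $b$, the lines $\ell$ and $\trans(\ell)$ meet $\ell_{\gamma}$ transversally at $a$ and $\trans(a)$, and $b\notin\ell_{\gamma}$ (otherwise $\ell$ or $\trans(\ell)$ would meet $\ell_{\gamma}$ twice); up to swapping $A$ and $B$ assume $b\in A$.

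Next I would build a ``triangle''. Take $P_{1}\subset\ell$ the subarc from $a$ to $b$, $P_{2}\subset\trans(\ell)$ the subarc from $b$ to $\trans(a)$, and $P_{3}\subset\ell_{\gamma}$ the subarc from $a$ to $\trans(a)$. The at-most-once crossing facts show $P_{1},P_{2},P_{3}$ pairwise meet only at the distinct shared endpoints $a,b,\trans(a)$, so $P_{1}\cup P_{2}\cup P_{3}$ is a simple closed curve bounding a disk $D$. Since $\ell_{\gamma}$ touches $\partial D$ only along $P_{3}$ while its two complementary rays from $a$ and $\trans(a)$ are unbounded, $\ell_{\gamma}$ does not enter the interior of $D$; hence $D$ lies in a single closed side of $\ell_{\gamma}$, and since $P_{1}\setminus\{a\}\subset A$ that side is $\bar A$. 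Applying $\trans$ (which fixes $\bar A$ setwise), the disk $\trans(D)\subset\bar A$ is the triangle with vertices $\trans(a)$, $\trans(b)$, $\trans^{2}(a)$ and sides $\trans(P_{1})\subset\trans(\ell)$, $\trans(P_{2})\subset\trans^{2}(\ell)$, $\trans(P_{3})\subset\ell_{\gamma}$.

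The core step compares $D$ and $\trans(D)$ at their common vertex $\trans(a)$. Both $P_{2}$ and $\trans(P_{1})$ are subarcs of $\trans(\ell)$ issuing from $\trans(a)$ with relative interiors in $A$, hence $b$ and $\trans(b)$ lie on the \emph{same} ray of $\trans(\ell)$ out of $\trans(a)$; therefore either $b\in\trans(P_{1})\subset\partial\trans(D)$, or $\trans(b)\in P_{2}\subset\partial D$. In the first case $\ell$ crosses the boundary arc $\trans(P_{1})$ transversally at the interior point $b$, so a ray of $\ell$ enters $\trans(D)$ and must exit through $\partial\trans(D)$; it cannot exit through $\trans(P_{1})\subset\trans(\ell)$ (the only crossing is $b$) nor through $\trans(P_{3})\subset\ell_{\gamma}$ (by monotonicity $a\notin[\trans(a),\trans^{2}(a)]$, so $\ell$ misses $\trans(P_{3})$), hence it exits through $\trans(P_{2})\subset\trans^{2}(\ell)$, giving $\ell\cap\trans^{2}(\ell)\neq\emptyset$. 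The second case is symmetric: $\trans^{2}(\ell)$ crosses $P_{2}\subset\partial D$ transversally at the interior point $\trans(b)$, a ray of $\trans^{2}(\ell)$ enters $D$ and must exit, not through $P_{2}\subset\trans(\ell)$ nor through $P_{3}\subset\ell_{\gamma}$ (since $\trans^{2}(a)\notin[a,\trans(a)]$), hence through $P_{1}\subset\ell$, again giving $\ell\cap\trans^{2}(\ell)\neq\emptyset$. In either case $\ell,\trans(\ell),\trans^{2}(\ell)$ are pairwise distinct and pairwise intersecting, contradicting the triangle-free Property.

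I expect the main obstacle to be the planar bookkeeping rather than any deep point: one must carefully check that $P_{1}\cup P_{2}\cup P_{3}$ is an honestly embedded circle and that the disk it bounds lies entirely on one side of $\ell_{\gamma}$, and — the truly essential claim — that $b$ and $\trans(b)$ sit on the \emph{same} ray of $\trans(\ell)$ out of $\trans(a)$, which is what forces the two disks $D$ and $\trans(D)$ to overlap in a way that creates the forbidden triple of lines. Once that is pinned down, everything else is a routine application of Lemmas~\ref{lem:l-gamma-separating} and~\ref{lem:l_gamma-line-crossing} together with the monotone ordering of the $\trans^{i}(a)$ along $\ell_{\gamma}$.
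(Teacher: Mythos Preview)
Your proof is correct and follows essentially the same approach as the paper's: both build the triangle with vertices $a=\ell\cap\ell_\gamma$, $\trans(a)$, $b=\ell\cap\trans(\ell)$, use orientation-preservation to place $\trans(b)$ on the same side of $\ell_\gamma$, and then split into two cases according to the relative position of $\trans(b)$ to derive that $\ell,\trans(\ell),\trans^2(\ell)$ pairwise intersect, contradicting the triangle-free Property. The only cosmetic difference is that the paper phrases the dichotomy as ``which side of $\ell$ does $\trans(y)$ lie on'' and in one case concludes by trapping $\trans^2(\ell)$ inside the triangle, whereas you phrase it as ``which of $b,\trans(b)$ is closer to $\trans(a)$ along $\trans(\ell)$'' and run the enter/exit argument symmetrically in both cases.
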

\begin{proof}
  Suppose that $\ell$~intersects both $\ell_{\gamma}$~and~$\trans(\ell)$.
  From Lemma~\ref{lem:l_gamma-line-crossing}, $\ell$~and~$\ell_{\gamma}$
  cross exactly once. In particular, $\ell$~and~$\trans(\ell)$
  cannot coincide, hence must cross exactly once. Let~$x:=\ell \cap
  \ell_{\gamma}$ and let~$y := \ell\cap\trans(\ell)$. Since
  $\trans$~preserve the orientation, $y$~and~$\trans(y)$ must be
  on the same side of~$\ell_{\gamma}$. The
  point~$\trans(y)$ cannot lie on the same side of~$\ell$ as~$\trans(x)$.
  Indeed, $\trans^2(\ell)$~would then enter the triangle~$(x,\trans(x),y)$
  formed by $\ell_{\gamma}$,~$\trans(\ell)$ and~$\ell$. From the preceding
  Lemma~\ref{lem:l_gamma-line-crossing}
  and from the triangle-free Property,
  $\trans^2(\ell)$~could not exit that triangle, leading to a
  contradiction. But $\trans(y)$~can neither lie on the other side of~$\ell$
  as $\trans^2(\ell)$~would have to cross both $\ell$~and~$\trans(\ell)$,
  contradicting the triangle-free Property.
\end{proof}
\begin{proposition}
   \label{prop:transversal-existence}
   Let~$\gamma$~be a minimal generator whose basepoint is not on any
   line. Let $\lift{\gamma}$~be a lift of~$\gamma$ in~$\unicov$.  Any
   line~$\ell$ crossed by~$\lift{\gamma}$ is
   $\trans$\dash/transversal. In particular, there exists a
   $\trans$-transversal line.
\end{proposition}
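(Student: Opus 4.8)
The plan is to check, for every line~$\ell$ crossed by~$\lift{\gamma}$, the criterion of Proposition~\ref{prop:lines:transversal-condition}. As in the discussion preceding Lemma~\ref{lem:l-gamma-separating}, and because $\trans$\dash/transversality of a line is a property that does not mention~$\gamma$, we may replace $\gamma$~by a \emph{simple} minimal generator crossing the same lines and still with basepoint off the lines; so we assume henceforth that $\gamma$, hence $\ell_\gamma = \cyproj\inv(\gamma)$, is simple, and that $\lift{\gamma}$~is a subpath of~$\ell_\gamma$. By Lemmas~\ref{lem:l-gamma-separating} and~\ref{lem:l_gamma-line-crossing}, $\ell_\gamma$~is a separating curve met by each line in at most one point; being the union of the translates~$\trans^i(\lift{\gamma})$ and $\trans$~being an orientation preserving automorphism sending $\lift{\gamma}(0)$~to~$\lift{\gamma}(1)$ (Lemma~\ref{lem:generator:lift-is-tau-compatible}), the curve~$\ell_\gamma$ is homeomorphic to~$\R$ and $\trans$~acts on it as a translation by one $\lift{\gamma}$\dash/period.

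Fix a line~$\ell$ crossed by~$\lift{\gamma}$ and let~$z$ be the unique point of~$\ell\cap\ell_\gamma$, which lies in the relative interior of~$\lift{\gamma}$ since the basepoint of~$\gamma$ avoids every line. Applying the automorphisms~$\trans^i$ shows that $\trans^i(\ell)$~meets~$\ell_\gamma$ exactly in~$\trans^i(z)$, and that $\dotsc,\trans\inv(z),z,\trans(z),\dotsc$ occur in this order along~$\ell_\gamma$. Since $\ell$~meets~$\ell_\gamma$, Lemma~\ref{lem:l_gamma-no-triangle} gives $\ell\cap\trans(\ell)=\emptyset$, which is the standing hypothesis of Proposition~\ref{prop:lines:transversal-condition}. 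Now choose a point~$x$ of~$\ell_\gamma$ strictly between $\trans\inv(z)$~and~$z$; then $\trans(x)$~lies strictly between $z$~and~$\trans(z)$, so the subarc~$\sigma$ of~$\ell_\gamma$ from~$x$ to~$\trans(x)$ crosses~$\ell$ exactly once, at~$z$, and is disjoint from both $\trans\inv(\ell)$~and~$\trans(\ell)$, whose only intersections with~$\ell_\gamma$ are $\trans\inv(z)$~and~$\trans(z)$ respectively. As $\ell$,~$\trans(\ell)$ and~$\trans\inv(\ell)$ are separating (Lemma~\ref{lem:lines:separating}), it follows that $\ell$~separates~$x$ from~$\trans(x)$ while neither $\trans\inv(\ell)$ nor~$\trans(\ell)$ does. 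Proposition~\ref{prop:lines:transversal-condition} then yields that $\ell$~is $\trans$\dash/transversal.

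There remains to produce one $\trans$\dash/transversal line, for which it suffices to exhibit one line crossed by the lift of some minimal generator with basepoint off the lines. A minimal generator exists since $\cyproj(\unic)$ is a generator and the \CW/ attains a minimum over the nonempty set of generators; sliding its basepoint, we may assume this basepoint lies on no line. This generator~$\gamma$ is non-contractible, being freely homotopic to~$c$; hence its \CW/ with respect to~$\cyH^*$ is at least~$1$, for a regular loop of \CW/~$0$ would sit inside a single disk face of~$\cyH^*$ and be contractible. Thus $\gamma$~crosses some edge of~$\cyH^*$; lifting that crossing to~$\lift{\gamma}$ produces a crossing with an edge of~$\uniH^*$, which lies on a line, and that line is $\trans$\dash/transversal by the first part.

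The one point requiring care is the initial reduction to a simple generator: one has to verify that excising the contractible subloops supplied by Lemma~\ref{lem:cylinder-2} changes neither the set of lines crossed --- true because, by minimality, such subloops have \CW/~$0$, hence miss~$\cyH^*$ and their lifts miss every line --- nor the fact that the basepoint avoids the lines. Once this is granted, the whole argument rests on Lemmas~\ref{lem:l_gamma-line-crossing} and~\ref{lem:l_gamma-no-triangle}; these stand in for the properties of simple lines used in~\cite{ce-tnspc-10}, which are unavailable to us because our lines may project to self-intersecting curves in~$\cycov{c}$.
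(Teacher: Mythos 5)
Your proof is correct and follows essentially the same route as the paper's: reduce to a simple generator (using the remark made before Lemma~\ref{lem:l-gamma-separating}), use Lemma~\ref{lem:l_gamma-line-crossing} to show that $\ell$~separates two successive translates of a point of~$\ell_\gamma$ while $\trans\inv(\ell)$~and~$\trans(\ell)$ do not, invoke Lemma~\ref{lem:l_gamma-no-triangle} to get $\ell\cap\trans(\ell)=\emptyset$, and then apply Proposition~\ref{prop:lines:transversal-condition}; the existence argument (non-contractibility forces at least one crossed line) is also the paper's. The only cosmetic difference is that you pick an auxiliary point~$x$ strictly between $\trans\inv(z)$~and~$z$ on~$\ell_\gamma$ where the paper simply takes $x:=\lift\gamma(0)$ --- equivalent choices, since the basepoint being off all lines forces $z$ into the relative interior of~$\lift\gamma$.
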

\begin{proof}
  Put~$x := \lift{\gamma}(0)$. By
  Lemma~\ref{lem:generator:lift-is-tau-compatible}, $\lift{\gamma}(1)
  = \trans(x)$. By Lemma~\ref{lem:l_gamma-line-crossing}, any line~$\ell$
  crossing~$\lift{\gamma}$ separates $x$~from~$\trans(x)$ while
  $\trans\inv(\ell)$~and~$\trans(\ell)$ do not. Together with
  Lemma~\ref{lem:l_gamma-no-triangle}, this allows us to apply
  Proposition~\ref{prop:lines:transversal-condition} and to conclude that
  $\ell$~is $\trans$\dash/transversal. Since $\lift{\gamma}$~must be
  crossed by at least one line, this finally prove the existence of a
  transversal line. Indeed, if $\lift{\gamma}$~was not crossed by any
  line it would lie inside a single face of~$\uniH^*$: its
  projection~$\pi(\lift{\gamma}) \subset \surf$ would also stay in a
  single face of~$H^*$ and be trivially contractible, contradicting
  the fact that $\pi(\lift{\gamma})$~is freely homotopic to~$c$.
\end{proof}

\begin{cor}
    \label{cor:transversal-unic}
    There exists a $\trans$\dash/transversal
    line that separates the endpoints of~$\unic$ from each other.
\end{cor}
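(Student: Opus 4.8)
The plan is to produce \emph{some} $\trans$\dash/transversal line using the results just proved, and then slide it by an appropriate power of~$\trans$ so that it falls between $\unic(0)$ and $\unic(1)=\trans(\unic(0))$.

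First I would invoke Proposition~\ref{prop:transversal-existence} to obtain a $\trans$\dash/transversal line~$\ell$. By definition $B_{\ell}=\mathring B_{\ell}\cup\ell$ is a fundamental domain for~$\left<\trans\right>$, so $\unicov$~is the disjoint union of the translates $\trans^i(B_{\ell})$, $i\in\Z$. Since $B_{\trans^i(\ell)}=\trans^i(B_{\ell})$ is again a fundamental domain, each translate $\trans^i(\ell)$ is itself $\trans$\dash/transversal; moreover the lines $\trans^i(\ell)$ are pairwise disjoint, and $\trans^i(\ell)$ together with $\trans^{i+1}(\ell)$ bound the open band $\trans^i(\mathring B_{\ell})$.

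Next I would locate the endpoints of~$\unic$ in this tiling. The endpoints of~$\unic$ are vertices of~$\uniH$, and since every line is a union of edges of~$\uniH^*$, which cross~$\uniH$ only at edge midpoints, no vertex of~$\uniH$ lies on any line. Hence $\unic(0)$ lies in a \emph{unique} open band $\trans^{i}(\mathring B_{\ell})$, and applying~$\trans$ yields $\unic(1)=\trans(\unic(0))\in\trans^{i+1}(\mathring B_{\ell})$. Since these two consecutive open bands lie on opposite sides of their common boundary line $\trans^{i+1}(\ell)$, this line separates $\unic(0)$ from~$\unic(1)$; and it is $\trans$\dash/transversal by the previous paragraph, which is exactly the statement.

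The one point that is not completely formal is the claim that two consecutive open bands $\trans^{i}(\mathring B_{\ell})$ and $\trans^{i+1}(\mathring B_{\ell})$ sit on \emph{opposite} sides of the line $\trans^{i+1}(\ell)$ between them rather than on the same side. This follows by inspecting the local picture at a point in the relative interior of an edge of~$\trans^{i+1}(\ell)$: such a point has a small disk split into two half-disks, each contained in the interior of a single band, and these two bands must be $\trans^{i}(\mathring B_{\ell})$ and $\trans^{i+1}(\mathring B_{\ell})$ because the translates of~$\ell$ are pairwise disjoint and the translates of~$B_{\ell}$ partition~$\unicov$ (this is where Lemma~\ref{lem:lines:separating} is used). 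I expect this to be the only step requiring a short explicit argument; everything else is a direct application of Proposition~\ref{prop:transversal-existence} and the definition of $\trans$\dash/transversality.
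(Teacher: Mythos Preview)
Your argument is correct and follows essentially the same route as the paper: obtain a $\trans$\dash/transversal line from Proposition~\ref{prop:transversal-existence}, observe that the endpoints of~$\unic$ lie in two consecutive open fundamental domains (since they are vertices of~$\uniH$ and hence off every line), and conclude that the translate of~$\ell$ between those two domains separates them. The paper's proof is the same in outline, only phrased more tersely; your extra paragraph justifying that consecutive bands lie on opposite sides of their common bounding line is a point the paper leaves implicit.
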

\begin{proof}
  By Proposition~\ref{prop:transversal-existence}, there exists a
  $\trans$\dash/transversal line~$\ell$. Since $\unic(1)$~is a
  translate of~$\unic(0)$, these two endpoints are
  included in two successive fundamental domains determined by~$\ell$
  and its translates. Since the basepoint~$s$ of~$c$ is not on the
  projection of any line, the translate of~$\ell$ that separates the
  interior of these
  two fundamental domains also separates $\unic(1)$~from~$\unic(0)$.
\end{proof}

\paragraph{$\trans$\dash/invariant line.} A line~$\ell$ such that
$\trans(\ell)=\ell$ is said \define{$\trans$\dash/invariant}. Note that
this equality does not hold pointwise, but globally.
\begin{lemma}\label{lem:1-invariant-line}
  There is at most one $\trans$\dash/invariant line.
\end{lemma}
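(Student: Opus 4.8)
The plan is to argue by contradiction: suppose $\ell_1$~and~$\ell_2$ are two distinct $\trans$\dash/invariant lines, and distinguish whether or not they meet. If $\ell_1\cap\ell_2\neq\emptyset$, then by Proposition~\ref{prop:lines:no-bigon-triangle} they cross at a single point~$z$, and applying~$\trans$ gives $\trans(z)\in\trans(\ell_1)\cap\trans(\ell_2)=\ell_1\cap\ell_2=\{z\}$; thus $z$~is a fixed point of~$\trans$. This is impossible, since $c$~being non-contractible the map~$\trans$ is a non-trivial deck transformation of~$\unicov$, hence acts freely. So it remains to rule out the case $\ell_1\cap\ell_2=\emptyset$.

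In that case I would first invoke a $\trans$\dash/transversal line~$\ell_0$, which exists by Proposition~\ref{prop:transversal-existence}; in particular $\ell_0\neq\trans(\ell_0)$ and $\ell_0\cap\trans(\ell_0)=\emptyset$, and the translates of the fundamental domain~$B_{\ell_0}$ tile~$\unicov$, the interiors $\trans^i(\mathring B_{\ell_0})$ being the connected components of $\unicov\setminus\bigcup_i\trans^i(\ell_0)$. Now a $\trans$\dash/invariant line~$\ell_j$ ($j\in\{1,2\}$) cannot be contained in one interior~$\trans^i(\mathring B_{\ell_0})$, for it would then also lie in the disjoint region~$\trans^{i-1}(\mathring B_{\ell_0})$; hence $\ell_j$~meets some translate~$\trans^i(\ell_0)$, and by $\trans$\dash/invariance it meets $\ell_0$~itself, at a single point~$p_j$ (the lines $\ell_j$~and~$\ell_0$ being distinct, this uses Proposition~\ref{prop:lines:no-bigon-triangle}). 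Applying~$\trans$ then gives $\ell_j\cap\trans(\ell_0)=\{\trans(p_j)\}$.

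To conclude I would check that the four lines $\ell_0,\trans(\ell_0),\ell_1,\ell_2$ are pairwise distinct ($\ell_0\neq\trans(\ell_0)$ as $\ell_0$~is transversal, $\ell_1\neq\ell_2$ by hypothesis, and a $\trans$\dash/invariant line equals neither $\ell_0$~nor~$\trans(\ell_0)$, else $\ell_0=\trans(\ell_0)$), and that the four crossing points $p_1,p_2,\trans(p_1),\trans(p_2)$ are pairwise distinct (using that $\trans$~is fixed-point-free, that $p_1=p_2$ would force $\ell_1\cap\ell_2\neq\emptyset$, and that $p_j\in\ell_0$ while $\trans(p_k)\in\trans(\ell_0)$ together with $\ell_0\cap\trans(\ell_0)=\emptyset$). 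Then the concatenation of the subarc of~$\ell_0$ from~$p_1$ to~$p_2$, the subarc of~$\ell_2$ from~$p_2$ to~$\trans(p_2)$, the subarc of~$\trans(\ell_0)$ from~$\trans(p_2)$ to~$\trans(p_1)$, and the subarc of~$\ell_1$ from~$\trans(p_1)$ to~$p_1$ is a possibly self-intersecting quadrilateral whose four sides lie on these four distinct lines --- which is precisely what the quad-free property of Proposition~\ref{prop:lines:no-bigon-triangle} forbids. The structural heart of the argument is the ``fundamental-domain sweeping'' step forcing a $\trans$\dash/invariant line to cross~$\ell_0$ and all its translates, and then spotting that the induced crossings assemble into exactly the prohibited $4$\dash/gon; the remaining work --- verifying that the lines and the crossing points are genuinely distinct so that we truly have a quadrilateral on four distinct lines --- is routine bookkeeping, and I expect that, rather than any subtle point, to be the only thing needing care.
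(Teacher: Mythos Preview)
Your proof is correct and follows essentially the same route as the paper's: pick a $\trans$\dash/transversal line (existence from Proposition~\ref{prop:transversal-existence}), observe that any $\trans$\dash/invariant line must cross it and its translate, and conclude that two distinct invariant lines together with $\ell_0$ and~$\trans(\ell_0)$ form a forbidden quadrilateral. The paper compresses this into three lines and skips the bookkeeping you carry out.

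One small remark: your case split on whether $\ell_1$ and~$\ell_2$ intersect is unnecessary. The quad-free Property of Proposition~\ref{prop:lines:no-bigon-triangle} explicitly forbids even \emph{self-intersecting} $4$\dash/gons on four pairwise distinct lines, so the quadrilateral argument of your Case~2 already covers Case~1. That said, your fixed-point argument in Case~1 (the unique crossing of two invariant lines would be a fixed point of the non-trivial deck transformation~$\trans$) is a perfectly valid and pleasant shortcut for that situation.
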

\begin{proof}
  Let $\ell$~and~$\ell'$ be two $\trans$\dash/invariant lines. Since a
  $\trans$\dash/invariant line cannot lie in a single fundamental domain,
  any $\trans$\dash/transversal line~$\lambda$ must intersect
  $\ell$~and~$\ell'$. If $\ell$~and~$\ell'$ where distinct then they
  would form a quadrilateral with $\lambda$~and~$\trans(\lambda)$,
  in contradiction with the quad-free Property.
\end{proof}
\begin{lemma}\label{lem:3-translates-invariant}
  A line that intersects three consecutive translates of a
  $\trans$\dash/transversal is $\trans$\dash/invariant.
\end{lemma}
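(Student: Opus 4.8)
The plan is to suppose that a line~$\ell$ meets three consecutive translates of a $\trans$\dash/transversal line and to show this is incompatible with $\ell\neq\trans(\ell)$, so that $\ell$ must be $\trans$\dash/invariant. Writing $\lambda_i$ for the $i$\dash/th translate of the $\trans$\dash/transversal line, I may relabel so that $\ell$ meets $\lambda_0$, $\lambda_1$ and~$\lambda_2$. The only fact about the transversal I would use is that consecutive translates are disjoint, $\lambda_i\cap\lambda_{i+1}=\emptyset$, which is part of the definition of $\trans$\dash/transversality. From this alone I would first record that $\ell\neq\lambda_1$ (since $\ell$ meets $\lambda_0$ and $\lambda_0\cap\lambda_1=\emptyset$) and $\ell\neq\lambda_2$ (since $\ell$ meets $\lambda_1$ and $\lambda_1\cap\lambda_2=\emptyset$); hence, by Proposition~\ref{prop:lines:no-bigon-triangle}, $\ell$ meets each of $\lambda_1$ and~$\lambda_2$ in exactly one transverse point.

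Next I would translate the configuration by~$\trans$: since $\ell$ meets $\lambda_0=\trans\inv(\lambda_1)$ and $\lambda_1=\trans\inv(\lambda_2)$, the line $\trans(\ell)$ meets both $\lambda_1$ and~$\lambda_2$, and by the same reasoning as above $\trans(\ell)\neq\lambda_1$ and $\trans(\ell)\neq\lambda_2$. Now I would argue by contradiction, assuming $\ell\neq\trans(\ell)$, and split into two cases. If $\ell\cap\trans(\ell)\neq\emptyset$, then $\ell$, $\trans(\ell)$ and~$\lambda_1$ are three pairwise distinct lines that pairwise intersect, contradicting the triangle-free Property. If $\ell\cap\trans(\ell)=\emptyset$, I would exhibit a quadrilateral carried by the four pairwise distinct lines $\ell,\lambda_1,\trans(\ell),\lambda_2$: starting from $\ell\cap\lambda_2$, follow the sub-arc of~$\ell$ to $\ell\cap\lambda_1$, then the sub-arc of~$\lambda_1$ to $\lambda_1\cap\trans(\ell)$, then the sub-arc of~$\trans(\ell)$ to $\trans(\ell)\cap\lambda_2$, then the sub-arc of~$\lambda_2$ back to $\ell\cap\lambda_2$. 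Because $\ell\cap\trans(\ell)=\emptyset$ and $\lambda_1\cap\lambda_2=\emptyset$, the four corners are pairwise distinct and the four sub-arcs close up into a (possibly self-intersecting) $4$\dash/gon on four distinct lines, contradicting the quad-free Property. Either way we reach a contradiction, hence $\ell=\trans(\ell)$.

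The step I expect to need the most care is the second case: one has to check all the distinctness conditions required to invoke the triangle-free and quad-free properties — which is exactly where the disjointness of consecutive translates ($\lambda_1\neq\lambda_2$, $\ell\neq\lambda_1$, $\ell\neq\lambda_2$, $\trans(\ell)\neq\lambda_1$, $\trans(\ell)\neq\lambda_2$) is used — and one must verify that the four chosen sub-arcs genuinely form a $4$\dash/gon rather than degenerating, the point being that a coincidence of two of its corners would force $\ell$ to meet~$\trans(\ell)$ or $\lambda_1$ to meet~$\lambda_2$, both excluded in this case. Everything else is a direct consequence of Proposition~\ref{prop:lines:no-bigon-triangle}; in particular no nesting order of the $\lambda_i$ and no metric information about~$\unicov$ is needed.
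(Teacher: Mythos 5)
Your proof is correct and follows essentially the same line as the paper's: both build a quadrilateral on $\ell$, $\trans(\ell)$, $\lambda_1=\trans(\lambda)$, $\lambda_2=\trans^2(\lambda)$ and invoke the quad-free Property of Proposition~\ref{prop:lines:no-bigon-triangle}. The only divergence is your case split on whether $\ell$ meets $\trans(\ell)$, which is not needed since the quad-free Property as stated already covers self-intersecting $4$\dash/gons; on the other hand, you make the pairwise-distinctness checks explicit, which the paper leaves implicit.
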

\begin{proof}
  Suppose that a line~$\ell$ intersects a $\trans$\dash/transversal
  line~$\lambda$ as well as $\trans(\lambda)$~and~$\trans^2(\lambda)$. If
  $\ell\neq\trans(\ell)$ then $\ell$,~$\trans(\ell)$,
  $\trans(\lambda)$ and~$\trans^2(\lambda)$ form a quadrilateral, in
  contradiction with the quad-free Property.
\end{proof}
\subsection{The canonical generator}
Since~$\surf$ is oriented we can speak of the left or
right side of a minimal generator. Our aim is to prove that the set of
minimal generators of~$\cycov{c}$ covers a bounded cylinder allowing
us to define its right boundary as a canonical representative of the
free homotopy class of~$c$. By definition, a $\trans$\dash/transversal line
projects in~$\cycov{c}$ to a simple curve. We call this projection a
\define{$c$\dash/transversal}. A $c$\dash/transversal~$\ell$ crosses exactly
once every minimal generator~$\gamma$. Indeed, if $\ell$~and~$\gamma$
had two intersections $x$~and~$y$, the subpath of~$\ell$ between $x$~and~$y$
would be homotopic to one of the two paths cut by $x$~and~$y$
along~$\gamma$ (see Lemma~\ref{lem:cylinder-1}). These two homotopic
subpaths would lift in~$\unicov$ to a closed path, implying that a
lift of~$\ell$ cuts $\ell_{\gamma}$~twice; a contradiction with
Lemma~\ref{lem:l_gamma-line-crossing}. Moreover,
Proposition~\ref{prop:transversal-existence} implies that any minimal
generator is crossed by $c$\dash/transversals only. The number of
$c$\dash/transversals in~$\cycov{c}$ is thus equal to the length of the
minimal generators which is in turn no larger than~$\len c$. Notice
that the orientation of~$\surf$ and of the minimal generators induce a
left-to-right orientation of the $c$\dash/transversals.

\begin{lemma}\label{lem:uncross-generators}
  Let $\mu$~and~$\nu$ be two minimal generators. There exist two
  disjoint and simple minimal generators $\gamma$~and~$\sigma$ such
  that the set of edges of~$\cyH^*$ crossed by~$\gamma\cup \sigma$ is
  the same as the set of edges crossed by~$\mu\cup \nu$.
\end{lemma}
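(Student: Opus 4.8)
The plan is to uncross $\mu$ and $\nu$ by a surgery argument in the cylinder $\cycov{c}$, exactly as one removes bigons between curves, while carefully tracking which edges of $\cyH^*$ get crossed. First I would put $\mu$ and $\nu$ in general position and, using Lemma~\ref{lem:cylinder-2}, assume both are simple: a self-intersecting minimal generator would contain a contractible closed subpath, and by minimality that subpath crosses no edge of $\cyH^*$, so we may cut it off without changing the crossed edge set. Next, suppose $\mu$ and $\nu$ intersect. Since both are generators of the cylinder, they are freely homotopic, and by Lemma~\ref{lem:cylinder-1} (applied to the relevant concatenations) any two of their intersection points cut off a \emph{bigon}: a disk $D$ bounded by a subpath $\mu'$ of $\mu$ and a subpath $\nu'$ of $\nu$ meeting only at their endpoints. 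Take an inclusion-minimal such bigon $D$; then the interiors of $\mu'$ and $\nu'$ are disjoint from $\mu\cup\nu$ inside $D$.

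The key step is the \emph{exchange argument}. Because $\mu'$ and $\nu'$ are homotopic (with fixed endpoints) across $D$, they cross exactly the same edges of $\cyH^*$ — indeed, a combinatorial homotopy realizing the homotopy across $D$ only modifies crossings by an even number along any edge's line, and since the interior of $D$ is a disk in which the relevant structure is controlled, one checks the crossed-edge multiset is literally preserved. Hence replacing $\mu'$ by $\nu'$ in $\mu$ (and symmetrically) yields new regular loops $\gamma$ and $\sigma$ that are still freely homotopic to $\cyproj(\unic)$, still have the same \CW/ (so are still minimal generators), and together cross exactly the edges of $\cyH^*$ that $\mu\cup\nu$ did. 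After the swap the number of intersection points of the pair strictly decreases — a standard bigon-removal count, possibly after a further application of Lemma~\ref{lem:cylinder-2} if the swap created a self-intersection, which again is excised for free. Iterating, we terminate with disjoint simple minimal generators with the prescribed crossed-edge set.

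The main obstacle is making the bigon-exchange step rigorous at the level of \emph{edges of $\cyH^*$} rather than just homotopy classes: one must verify that pushing $\mu'$ across the disk $D$ to $\nu'$ does not change the multiset of crossed dual edges, not merely the parity of crossings with each line. Here I would argue that the open disk $D$ meets $\cyH^*$ in a forest-like pattern — each component of $D\setminus \cyH^*$ being a portion of a face of $\cyH^*$ — so that the dual edges crossed by $\mu'$ and by $\nu'$ are exactly those dual edges with one endpoint's face meeting $D$ on the $\mu'$ side and the other on the $\nu'$ side; this description is symmetric and depends only on $\partial D$, giving the equality. A secondary subtlety is ensuring the process terminates: each surgery must either remove an intersection point or a self-intersection, and no new crossings of $\cyH^*$ are ever introduced, so a single monotone quantity (total number of transverse double points of $\gamma\cup\sigma$) drives the induction to its disjoint, simple conclusion.
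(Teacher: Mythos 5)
Your overall plan — uncross $\mu$ and $\nu$ by swapping arcs at a pair of intersection points, check that the new curves are still minimal generators, observe that the union crosses the same dual edges, and recurse — is exactly the paper's strategy, and your termination, simplicity-restoration, and bookkeeping remarks are in order. The one place where you went off course is what you yourself flag as the ``main obstacle'': the claim that the two sides $\mu'$, $\nu'$ of an innermost bigon cross exactly the same multiset of edges of $\cyH^*$.

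That claim is both unjustified and, more importantly, unnecessary, and building your minimality argument on it is a genuine gap. Two paths that are homotopic rel endpoints across a disk need not cross the same dual edges: a line of $\cyH^*$ can enter and leave the disk through the same side (and in $\cycov{c}$ the projections of lines are not even guaranteed to be simple, so the ``forest-like'' picture inside $D$ is not controlled the way you suggest). The parity argument you sketch controls only the number of crossings with each \emph{line} modulo $2$, not the edge set. What you actually need is only $\len{\mu'}=\len{\nu'}$, and that follows from \emph{minimality} with no geometry at all: replacing $\mu'$ by $\nu'$ in $\mu$ yields a loop homotopic to $\mu$ (this is where the bigon, or Lemma~\ref{lem:cylinder-2} as the paper uses it, is genuinely invoked), hence a generator, hence of crossing weight at least $\len{\mu}$; symmetrically with the roles of $\mu$ and $\nu$ swapped, giving $\len{\mu'}\le\len{\nu'}\le\len{\mu'}$. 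Separately, the fact that $\gamma\cup\sigma$ crosses the same edge set as $\mu\cup\nu$ needs no argument whatsoever: the four arcs are simply reattached differently, so the union of their images is unchanged. The paper takes exactly this cleaner route (together with an explicit count showing $|\gamma\cap\sigma|\le|\mu\cap\nu|-2$), and also does not require an innermost bigon — any two intersection points suffice, since $a$ homotopic to $c$ comes out of the cylinder structure rather than an inclusion-minimal-disk hypothesis.
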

\begin{proof}
  As remarked above Lemma~\ref{lem:l-gamma-separating}, we can assume
  that both $\mu$~and~$\nu$ are simple. Suppose that they
  intersect. If necessary, we can modify $\mu$~and~$\nu$ inside the
  interior of each face and edge so that they cross only inside faces
  and only transversely. Since $\mu$~is separating, it must intersect~$\nu$
  in at least two points. These two points, say $x$~and~$y$, cut
  each generator in two pieces, say $\mu = a\cdot b$~and $\nu = c\cdot
  d$. By Lemma~\ref{lem:cylinder-2} we may assume without loss of
  generality, that $a$~is homotopic to~$c$ and that $a\cdot d$~is a
  generator (see Figure~\ref{fig:separates-generators}). 
  \begin{figure}
    \centering
    \includesvg[.3\linewidth]{separates-generators}
    \caption{The intersecting generators $\mu$~and~$\nu$ (thick plain
      line) may be replaced by the disjoint generators $\gamma$~and~$\sigma$
      (thick dashed line).}
    \label{fig:separates-generators}
  \end{figure}
  By minimality
  of $\mu$~and~$\nu$, the paths $a$~and~$c$ must have the same \CW/
  and similarly for $b$~and~$d$. We can thus replace $\mu$~and~$\nu$
  by the minimal generators obtained by slightly perturbing the
  concatenations $a\cdot d$~and $c\cdot b$ so as to remove the two
  intersections at $x$~and~$y$. We claim that $a\cdot d$~and $c\cdot b$
  have fewer intersections than $\mu$~and~$\nu$. Indeed, denoting
  by~$\mathring e$ the relative interior of a path~$e$, we have
  \[
  |\mu\cap\nu| = |\mathring a\cap \mathring c| + |\mathring a\cap
  \mathring d| + |\mathring b\cap \mathring c| + |\mathring b\cap
  \mathring d| + 2
  \]
  while
  \[ 
  |a\cdot d \cap c\cdot b| = |\mathring a\cap \mathring c| +
  |\mathring a \cap \mathring b| + |\mathring d\cap \mathring c| +
  |\mathring d\cap \mathring b| = |\mathring a\cap \mathring c| +
  |\mathring d\cap \mathring b|,
  \]
  since, by simplicity of $\mu$~and~$\nu$: $|\mathring a \cap
  \mathring b| = |\mathring d\cap \mathring c| = 0$. The new minimal
  generators $a\cdot d$~and $c\cdot b$ obviously cross the same edges
  as $\mu$~and~$\nu$. They may now self-intersect, but as initially
  noted, we can remove contractible loops until they become simple. We
  conclude the proof with a
  simple recursion on $|\mu\cap\nu|$.
\end{proof}
We now consider two disjoint and simple minimal generators
$\gamma$~and~$\sigma$. They bound an annulus~${\mathcal A}$ in~$\cycov{c}$.
Since $\gamma$~and~$\sigma$ are crossed by $c$\dash/transversal curves only, a
line~$\ell$ of~$\unicov$ whose projection~$\cyproj(\ell)$
intersects~$\mathcal A$ is either $\trans$\dash/transversal or
$\trans$\dash/invariant.
Indeed, if $\ell$~is not $\trans$\dash/transversal, $\cyproj(\ell)$~must
stay in the finite subgraph of~$\cyH^*$ interior to~$\mathcal A$; it
follows that $\cyproj(\ell)$~uses some edge twice, which can only happen
if $\ell$~is $\trans$\dash/invariant by
Proposition~\ref{prop:lines:no-bigon-triangle}. In this latter case,
$\cyproj(\ell)$~is a simple generator and by
Lemma~\ref{lem:1-invariant-line}, there is only one such curve in
$\cycov{c}$. Note that this curve is crossed once by every
$c$\dash/transversal, hence is composed of $|\gamma|$~edges. We first
bound the complexity of~${\mathcal A}$.
\begin{lemma}\label{lem:short-bound}
  Let $V_I$,~$E_I$ and~$F$ be the respective numbers of vertices,
  edges and faces of~$\cyH^*$ intersected by~$\mathcal A$.
  Then $V_I \leq |\gamma|$, $E_I \leq 3|\gamma|$~and $F \leq 2|\gamma|$.
\end{lemma}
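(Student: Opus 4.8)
The plan is to count the combinatorial data of $\mathcal{A}$ via Euler's formula, using the $c$\dash/transversals as the organizing structure. First I would recall that every $c$\dash/transversal crosses $\gamma$ exactly once and crosses $\sigma$ exactly once, so each $c$\dash/transversal contributes to $\mathcal{A}$ a simple arc from $\gamma$ to $\sigma$; there are exactly $|\gamma|$ such arcs (the length of a minimal generator equals the number of $c$\dash/transversals, as established just before Lemma~\ref{lem:uncross-generators}). The key structural observation is that these arcs, being sub-arcs of distinct $c$\dash/transversals, are pairwise disjoint inside $\mathcal{A}$: two distinct $c$\dash/transversals lift to lines lying in distinct fundamental domains up to translation, and by Lemma~\ref{lem:l_gamma-line-crossing} neither meets $\ell_\gamma$ nor $\ell_\sigma$ more than once, which forces disjointness of the corresponding arcs in the annulus. (The possible $\trans$\dash/invariant line projects to a single simple generator in the interior, but as noted it is composed of $|\gamma|$ edges and can be treated as one more generator parallel to $\gamma$ and $\sigma$; I would either include it or verify it does not affect the bounds.)

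Next I would set up the count. The vertices of $\cyH^*$ interior to $\mathcal{A}$ (strictly inside, not on $\gamma$ or $\sigma$) all have degree $4$ in $\cyH^*$. Each such interior vertex lies on exactly one of the disjoint transversal arcs, and conversely each transversal arc, being a simple arc running monotonically from one boundary generator to the other, passes through some number of interior vertices; since the arcs are disjoint and there are $|\gamma|$ of them, and each interior vertex is accounted for by exactly one arc, I get $V_I \le |\gamma|$ provided each arc passes through at most one interior vertex. That last point is where the real work lies: I would argue that a $c$\dash/transversal arc crossing $\mathcal{A}$ from $\gamma$ to $\sigma$ can contain at most one vertex of $\cyH^*$ in the interior of $\mathcal{A}$ — otherwise the arc would revisit the region in a way that, lifted to $\unicov$, produces a line meeting $\ell_\gamma$ or $\ell_\sigma$ twice, or two lines meeting three consecutive translates of a transversal (Lemma~\ref{lem:3-translates-invariant}), contradicting Proposition~\ref{prop:lines:no-bigon-triangle} or Lemma~\ref{lem:l_gamma-line-crossing}. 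With $V_I \le |\gamma|$ in hand, the edge and face bounds follow by double counting: every interior edge has both endpoints among the $V_I$ interior vertices or on the boundary; organizing edges as "vertical" (along transversal arcs, one segment incident to each interior vertex on each side, giving $\le 2|\gamma|$) plus "horizontal" (the boundary-parallel edges separating consecutive transversal arcs, $\le |\gamma|$ of them since there are $|\gamma|$ gaps between the $|\gamma|$ cyclically arranged arcs) yields $E_I \le 3|\gamma|$. Then a face count — each face of $\cyH^*$ is an $r$\dash/gon but the faces meeting $\mathcal{A}$ are organized into at most $|\gamma|$ columns of at most $2$ faces each (one touching $\gamma$, one touching $\sigma$), or more cleanly via Euler's formula $F \le E_I - V_I + O(1)$ combined with the annulus having Euler characteristic $0$ — gives $F \le 2|\gamma|$.

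The cleanest route for the face bound is probably Euler's characteristic of the annulus: if $V, E, F$ count all cells of $\cyH^*$ meeting the closed annulus $\mathcal{A}$, then $V - E + F = 0$ (annulus), and separating boundary contributions ($2|\gamma|$ boundary vertices and $2|\gamma|$ boundary edges on $\gamma \cup \sigma$, since each is crossed by $|\gamma|$ transversals) from interior contributions lets me solve $F = E - V \le (3|\gamma| + 2|\gamma|) - (|\gamma| + 2|\gamma|) = 2|\gamma|$. The main obstacle, as flagged above, is rigorously establishing $V_I \le |\gamma|$: I must rule out a $c$\dash/transversal arc "wiggling" so as to pass through several interior vertices of $\cyH^*$. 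I would handle this by lifting to $\unicov$ and invoking the triangle-free and quad-free properties together with Lemmas~\ref{lem:l_gamma-line-crossing}, \ref{lem:l_gamma-no-triangle} and~\ref{lem:3-translates-invariant} — any such wiggle creates a forbidden configuration of lines among $\ell_\gamma$, $\ell_\sigma$, and translates of the transversal's line. Once that combinatorial rigidity is nailed down, the rest is bookkeeping with Euler's formula.
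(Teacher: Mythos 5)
There is a genuine gap, and it is at the heart of the argument. Your key structural claim — that the $c$\dash/transversal arcs are pairwise disjoint inside $\mathcal{A}$ — is false. In $\cyH^*$ every vertex is $4$\dash/valent and is precisely a crossing of two lines, so every interior vertex of $\mathcal{A}$ \emph{is} an intersection of two $c$\dash/transversals; if the arcs were disjoint there would be no interior vertices at all and the lemma would be vacuous. Lemma~\ref{lem:l_gamma-line-crossing} only prevents a line from meeting $\ell_\gamma$ (or $\ell_\sigma$) twice, which makes each arc a single segment from $\gamma$ to $\sigma$, but says nothing about two different arcs crossing each other in the middle of the annulus — and Lemma~\ref{lem:3-translates-invariant} in fact allows a pair of $c$\dash/transversals to cross \emph{twice} in $\mathcal{A}$, a case the paper treats separately. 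Your subsequent accounting is then inconsistent with the combinatorics: each interior vertex lies on two arcs, not one, and a single transversal arc can pass through several interior vertices (it may cross several distinct transversals, only pairwise triangles being forbidden), so ``each arc has at most one interior vertex'' is not something the triangle/quad-free properties give you directly.

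The paper proves the bounds in the opposite order. It first shows $F\le 2|\gamma|$ by establishing a bijection between interior vertices and ``left faces'' (faces with a side on $\gamma$ but not $\sigma$), and symmetrically ``right faces'', so that every face of the subdivision has a side on $\gamma$ or $\sigma$; since $\gamma$ and $\sigma$ each have $|\gamma|$ edges, $F\le 2|\gamma|$. Only then does it plug into Euler's formula $V_I - E_I + F = 0$ and the degree count $E_I = 2V_I + |\gamma|$ (interior vertices degree $4$, boundary vertices degree $3$) to deduce $V_I\le|\gamma|$ and $E_I\le 3|\gamma|$. Your closing Euler/degree computation is essentially this same algebra and is fine \emph{once} $V_I\le|\gamma|$ is known, but the direct argument you propose for $V_I\le|\gamma|$ does not work; you would need to supply something like the paper's face--vertex correspondence (or an equally careful case analysis for the one-crossing and two-crossing configurations) to close the gap.
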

\begin{proof}
  If ${\mathcal A}$~contains the projection~$\cyproj(\ell)$ of a
  $\trans$\dash/invariant line~$\ell$, we consider two minimal generators
  $\lambda$~and~$\rho$ running parallel to~$\cyproj(\ell)$ and
  respectively to the left and to the right of~$\cyproj(\ell)$.
  Assuming that $\gamma$~is to the left of~$\cyproj(\ell)$,
  we can bound the complexity of~$\cyH^*$ in~$\mathcal A$
  by the complexity of~$\cyproj(\ell)$ plus the complexity of~$\cyH^*$
  inside the two annuli bounded by $\gamma$~and~$\lambda$, and by
  $\rho$~and~$\sigma$ respectively.

  We can now assume that $\mathcal A$~is crossed by $c$\dash/transversals
  only. These $c$\dash/transversals form an arrangement of~$|\gamma|$
  curves in~$\mathcal A$ where two $c$\dash/transversals can cross at most
  twice by Lemma~\ref{lem:3-translates-invariant}. Together with
  $\gamma$~and~$\sigma$, this arrangement defines a subdivision
  of~$\mathcal A$ whose number of boundary vertices is~$2|\gamma|$. We
  distinguish two cases according to whether $c$\dash/transversals pairwise
  intersect at most once or twice.
  \paragraph{Case where each pair of $c$\dash/transversals intersects at most
    once.}
  If the $c$\dash/transversals are pairwise disjoint in~$\mathcal A$, the lemma
  is trivial: $\mathcal A$~contains no vertex and intersects $|\gamma|$~edges
  and as many faces. Otherwise, we first establish a
  correspondence between faces and interior vertices.

  Consider an interior vertex~$x$ of the subdivision; it is the
  intersection of two $c$\dash/transversals $u$~and~$v$ (refer to
  Figure~\ref{fig:bounded-belt}.A). Call~$t_{\gamma}(x)$~the triangle
  formed by $u$,~$v$ and~$\gamma$. No $c$\dash/transversal can join
  $u$~and~$v$ inside~$t_{\gamma}(x)$. Otherwise $u$,~$v$ and this
  $c$\dash/transversal would form a triangle that would lift into a triangle
  in~$\unicov$, in contradiction with the triangle-free Property. Moreover, a
  $c$\dash/transversal~$w$ that crosses the $u$\dash/side of~$t_{\gamma}(x)$
  cannot be crossed inside~$t_{\gamma}(x)$ by any~$w'$, as $w$,~$w'$, $u$
  and~$v$ would then form a quadrilateral. If the $u$\dash/side
  of~$t_{\gamma}(x)$ is indeed crossed, we let $w_u$~be the crossing
  curve closer to~$x$ along the $u$\dash/side. We define $w_v$~similarly
  for the $v$\dash/side of~$t_{\gamma}(x)$. The $c$\dash/transversal curves
  $w_u$~and~$w_v$, if~any, together with $u$,~$v$ and~$\gamma$ bound a
  face of the subdivision of~$\mathcal A$. This is the only face incident
  to~$x$ in~$t_{\gamma}(x)$. We call it the \emph{left face} of~$x$;
  it has one side along~$\gamma$ and no side along~$\sigma$.
\begin{figure}
    \centering
    \includesvg[.85\linewidth]{bounded-belt}
    \caption{(A)~$f_l$~is the left face of~$x$ in the annulus~$\mathcal
      A$.\allowbreak\ (B)~The three supporting line of~$f$ passing through
      $x$~or~$y$.\allowbreak\ (C)~The lines $u$~and~$v$ cut~$\mathcal A$
      into simply connected faces.}
    \label{fig:bounded-belt}
  \end{figure}
  Conversely, we claim that every face~$f$ of the
  subdivision of~$\mathcal A$ with no side along~$\sigma$ is the left-face
  of a unique interior vertex. The unicity is clear since a left face
  has a unique side on~$\gamma$. In fact, by minimality, $\gamma$~crosses
  every face of~$\cyH^*$ at most once; so that $f$~has at most
  two (consecutive) vertices on~$\gamma$. If $f$~has a unique interior
  vertex on its boundary, then $f$~is precisely the triangle~$t_{\gamma}$
  for that vertex. Otherwise, consider two interior
  vertices $x$~and~$y$ that are consecutive along the boundary of~$f$
  as on Figure~\ref{fig:bounded-belt}.B. By
  considering the arrangement made by the three $c$\dash/transversals
  defining $x$~and~$y$, it is easily seen that $f$~is included in
  exactly one of $t_{\gamma}(x)$~or~$t_{\gamma}(y)$, thus proving the
  claim. We define \emph{right faces} analogously and remark that a face
  that is neither a left nor a right face must have one side along~$\gamma$
  and one side along~$\sigma$. 

  We can now determine the complexity of the subdivision of~$\mathcal
  A$. Denote by $V$~and~$E$ its respective numbers of vertices and
  edges. With the notations in the lemma, we have $V = V_I + 2|\gamma|$ and
  $E=E_I+2|\gamma|$. By the preceding remark, every face has a side on either
  $\gamma$~or~$\sigma$. It ensues that $F\leq 2|\gamma|$.  Euler's
  formula then implies $0=V-E+F= V_I - E_I + F$, whence $E_I\leq V_I +
  2|\gamma|$. Since interior and boundary vertices have respective
  degree $4$~and~$3$, we get $E_I = 2V_I + |\gamma|$ by
  double-counting of the vertex-edge incidences. Combining with the
  previous inequality we obtain $V_I\leq |\gamma|$, and finally
  conclude that $E_I\leq 3|\gamma|$.
  \paragraph{Case where at least two $c$\dash/transversals intersect
    twice.}
  We now suppose that two $c$\dash/transversals $u$~and~$v$ intersect twice
  in~$\mathcal A$. The curves $\gamma$,~$\sigma$, $u$ and~$v$ induce a
  subdivision of~$\mathcal A$ where $u$~and~$v$ are each cut into three
  pieces, say $u_1$,~$u_2$ and~$u_3$ for~$u$ and $v_1$,~$v_2$ and~$v_3$
  for~$v$ and the two generators are each cut into two pieces,
  say $\gamma_1, \gamma_2$ for~$\gamma$ and $\sigma_1,\sigma_2$ for~$\sigma$
  (see Figure~\ref{fig:bounded-belt}.C). A $c$\dash/transversal~$w$
  crossing~$\gamma_1$ would have to cross $u_1$~or~$v_1$ to enter
  the face bounded by $u_1\cdot v_2\cdot u_2\inv\cdot v_1\inv\cdot
  \gamma_2\inv$ in the subdivision induced by $\gamma$,~$u$ and~$v$.
  It is easily seen that no matter how this face is crossed, it
  will be cut into subfaces, one of which must have three or four
  sides bounded by $u$,~$v$ and~$w$. This contradicts
  Proposition~\ref{prop:lines:no-bigon-triangle}. Likewise no
  $c$\dash/transversal can cross~$\sigma_1$. Hence, any $c$\dash/transversal
  distinct from $u$~and~$v$ must extend between $\gamma_2$~and~$\sigma_2$
  and cut either $u_2$~or~$v_2$.  It is easily seen that
  any two such $c$\dash/transversals cannot cross without creating a
  triangle or a quadrilateral bounded by $c$\dash/transversals, which is
  again impossible. It follows that apart from $u$~and~$v$ all
  $c$\dash/transversals are pairwise disjoint inside~$\mathcal A$. We deduce
  that all faces have one side on $\gamma$~or~$\sigma$ (but not both),
  whence $F= 2|\gamma|$. Since any $c$\dash/transversal distinct from
  $u$~and~$v$ is cut into two pieces we also get $V_I=|\gamma|$ and
  $E_I=3|\gamma|$. We note that the left triangle $\gamma_1\cdot
  u_1\cdot v_1\inv$ is not cut by any $c$\dash/transversal.
\end{proof}
\paragraph{The \short/ edges.} An edge of~$\cyH^*$ whose
relative interior is crossed by a minimal generator is said
\define{\short/}.
\begin{lemma}\label{lem:short-connectivity}
  If~$\mathcal A$~is crossed by $c$\dash/transversal curves only, then every
  edge of~$\cyH^*$ in~$\mathcal A$ is \short/.
\end{lemma}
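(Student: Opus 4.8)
The plan is to localize the statement to one $c$\dash/transversal at a time and then to propagate the property along it, starting from the two boundary generators. First I would note that under the hypothesis every edge of $\cyH^*$ meeting $\mathcal{A}$ lies on a $c$\dash/transversal: a lift of such an edge belongs to some line of $\uniH^*$, whose projection to $\cycov{c}$ is a curve meeting $\mathcal{A}$; by the dichotomy recalled just before Lemma~\ref{lem:short-bound} that line is then $\trans$\dash/transversal, so its projection is a $c$\dash/transversal containing the edge. It therefore suffices to show: for every $c$\dash/transversal $T$, every edge of $\cyH^*$ that lies on $T$ and meets $\mathcal{A}$ is \short/.

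Fix such a $T$. It is a simple bi-infinite arc crossing each of the simple minimal generators $\gamma$ and $\sigma$ exactly once, so $T\cap\overline{\mathcal{A}}$ is a single sub-arc and the edges of $\cyH^*$ it meets form a contiguous run $e_1,\dots,e_t$ along $T$; after exchanging $\gamma$ and $\sigma$ if necessary, $\gamma\cap T$ lies in the interior of $e_1$ and $\sigma\cap T$ lies in the interior of $e_t$. In particular $e_1$ and $e_t$ are \short/, being crossed in their relative interiors by the minimal generators $\gamma$ and $\sigma$. I would then prove, by induction on $i$ from $1$ to $t-1$, that if $e_i$ is \short/ then so is $e_{i+1}$; this yields the claim, since every edge of $\cyH^*$ on $T$ meeting $\mathcal{A}$ is one of the $e_i$.

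For the inductive step, let $\mu$ be a minimal generator crossing $e_i$. By Proposition~\ref{prop:transversal-existence}, $\mu$ meets every $c$\dash/transversal exactly once and meets no other line, so $\mu\cap\cyH^*$ consists of one transverse point per $c$\dash/transversal; moreover, by minimality, $\mu$ crosses each face of $\cyH^*$ at most once (as in the proof of Lemma~\ref{lem:short-bound}). Let $v$ be the vertex common to $e_i$ and $e_{i+1}$; the other two edges of $\cyH^*$ at $v$, say $a$ and $b$, lie on a common $c$\dash/transversal $T'$, with $e_i$ facing $e_{i+1}$ and $a$ facing $b$, and the four faces at $v$ are, cyclically, $F_{ia},F_{a,i+1},F_{i+1,b},F_{b,i}$. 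First slide the crossing point $\mu\cap e_i$ along $e_i$ toward $v$ by a homotopy supported in the two faces incident to $e_i$ — combinatorially harmless, since $\mu$ crosses $T$ only there. Then push $\mu$ past $v$: reroute the short arc of $\mu$ near $v$ so that it crosses $e_{i+1}$ instead of $e_i$, running through $F_{b,i},F_{i+1,b},F_{a,i+1}$ and moving its $T'$-crossing from one of $a,b$ to the other. A local analysis around $v$ — using the triangle\dash/free and quad\dash/free properties of Proposition~\ref{prop:lines:no-bigon-triangle} to pin down how $T$, $T'$ and $\mu$ can lie near $v$ — shows the reroute can be arranged so that its only net effect is to move the $T$-crossing of $\mu$ from $e_i$ to $e_{i+1}$ and, possibly, the $T'$-crossing between $a$ and $b$. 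Thus the rerouted curve $\mu'$ is homotopic to $\mu$, still meets each $c$\dash/transversal exactly once, hence is a minimal generator, and it crosses $e_{i+1}$; so $e_{i+1}$ is \short/.

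The delicate point, and the step I expect to be the real obstacle, is this last push. When $\mu$ does not already round a corner at $v$ — its single $T'$-crossing lying nowhere near $v$ — naively dragging $\mu$ around $v$ would cross $T'$ twice and raise its \CW/; one must first bring $\mu$ into a position where it rounds the corner $F_{ia}$ or $F_{i+1,b}$ at $v$ without changing its \CW/, which is exactly the kind of local rerouting the no\dash/triangle and no\dash/quad constraints make possible. The configuration in which $T$ and $T'$ meet $v$ as one of their at most two mutual crossings, permitted by Lemma~\ref{lem:3-translates-invariant}, has to be handled separately. One also checks the push stays inside $\mathcal{A}$, which holds because $v$ is interior to $\mathcal{A}$ and $e_{i+1}\subset\mathcal{A}$.
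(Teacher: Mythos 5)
Your reduction to an edge-by-edge induction along a fixed $c$\dash/transversal $T$ has a genuine gap at exactly the spot you flagged, and it is not a technicality but the entire content of the lemma. The local reroute around $v$ preserves the crossing weight only when the minimal generator $\mu$ already crosses $T'$ at $a$ or $b$, \ie/ when $\mu$ rounds the corner at $v$. When $\mu$'s single $T'$\dash/crossing lies farther away there is no purely local move, supported near $v$, that brings $\mu$ into such a position: the arc of $\mu$ between its $T$\dash/crossing and its $T'$\dash/crossing may run through a region cut by several other $c$\dash/transversals, and pushing $\mu$ toward $v$ past any one of them is again a sweep of the same kind. The triangle\dash/free and quad\dash/free properties constrain the arrangement near $v$ but they do not by themselves supply a finite sequence of weight\dash/preserving moves that lands $\mu$ on the corner; establishing that they do is tantamount to re\dash/proving the whole statement. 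The appeal to ``the kind of local rerouting the no\dash/triangle and no\dash/quad constraints make possible'' is therefore a placeholder, not an argument.

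The paper sidesteps this by choosing a different induction variable. Using the left\dash/face combinatorics of the proof of Lemma~\ref{lem:short-bound}, it shows (either by induction on $|\gamma|$, or directly when two $c$\dash/transversals cross twice) that the subdivision of $\mathcal A$ always contains a \emph{triangular} face $t$ with one side on $\gamma$. Its apex $x$ is a vertex where $\gamma$ itself rounds the corner, so $\gamma$ can be swept across $x$ to give a new minimal generator $\gamma'$, a smaller annulus $\mathcal A'$ with one fewer interior vertex, and two newly certified \short/ edges; iterating on the number of interior vertices of $\mathcal A$ finishes the proof. The key difference is that the paper never needs to make a \emph{given} $\mu$ round a \emph{prescribed} corner; it picks the corner where $\gamma$ already does so, and keeps a concrete generator to continue with at every step. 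To salvage your transversal\dash/by\dash/transversal decomposition you would have to prove first, as a standalone lemma, that for every interior vertex $v$ of $\mathcal A$ some minimal generator rounds a corner at $v$ --- and proving that would require essentially the same global sweep.
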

\begin{proof}
  If~$\mathcal A$~contains no vertex of~$\cyH^*$ then $\mathcal A$~crosses
  the same edges as $\gamma$~and~$\sigma$ and the lemma is trivial.
  Otherwise, we show how to sweep the entire arrangement inside~$\mathcal A$
  with a minimal generator from the left boundary~$\gamma$
  to the right boundary~$\sigma$. We claim that the subdivision of~$\mathcal
  A$ induced by $\gamma$,~$\sigma$ and~$\cyH^*$ contains a triangle
  face~$t$ with one side along~$\gamma$ (a left triangle in the above
  terminology). This is clear in the case where two $c$\dash/transversals
  cross twice as noted at the end of the above proof. Otherwise, we
  consider two $c$\dash/transversals crossing in~$\mathcal A$; they obviously
  form a left triangle with~$\gamma$. Adding the other
  $c$\dash/transversals one by one we see that this left triangle is
  either not cut or that a new left triangle is cut out of it. We conclude
  the claim by a simple induction on~$\len\gamma$. The left triangle~$t$
  has one vertex~$x$ interior to~$\mathcal A$ and incident to four edges
  $u_1,v_1,u_2,v_2$ where $u_1,v_1$ bound~$t$. We can now sweep~$x$
  with~$\gamma$ by crossing~$v_2,u_2$ instead of~$u_1,v_1$ to obtain a
  new minimal generator. This new generator bounds with~$\sigma$ a new
  annulus~${\mathcal A}'\subset{\mathcal A}$. Note that ${\mathcal A}'$~crosses
  the same set of edges as~$\mathcal A$ except for~$u_1, v_1$ that were
  crossed by~$\gamma$. Moreover, the number of interior vertices is
  one less in~${\mathcal A}'$ than in~${\mathcal A}$. We conclude the proof of
  the lemma with a simple recursion on this number.
\end{proof}
\begin{lemma}\label{lem:segment-short}
  The set of \short/ edges of a $c$\dash/transversal~$u$ is a finite
  segment of~$u$.
\end{lemma}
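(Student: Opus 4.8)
The plan is to establish two facts about the set~$S$ of \short/ edges of~$u$: that $S$~is nonempty, and that for any two \short/ edges of~$u$ the subarc of~$u$ joining them consists of finitely many edges, all of them \short/. The lemma then follows: $S$~is nonempty, it is a contiguous run of consecutive edges of~$u$ by the second fact (if $e$~and~$e'$ are \short/ so is every edge between them), and it is bounded since any two of its edges lie within finitely many edges of one another along~$u$; hence $S$~is a finite segment.

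Nonemptiness is quick. The free homotopy class of~$c$ contains regular loops, and among the corresponding generators of~$\cycov{c}$ one has minimal \CW/; such a minimal generator is crossed by $c$\dash/transversals only and by at least one of them, since otherwise its projection would lie in a single face of~$H^*$ and be contractible. Conversely every $c$\dash/transversal crosses a fixed minimal generator exactly once, so $u$~meets it in the relative interior of a single edge, which is therefore \short/.

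For the main point, let $e$~and~$e'$ be \short/ edges of~$u$, witnessed by minimal generators $\mu$~and~$\mu'$. I would apply Lemma~\ref{lem:uncross-generators} to $\mu$~and~$\mu'$ to obtain disjoint simple minimal generators $\gamma$~and~$\sigma$ crossing exactly the same set of edges of~$\cyH^*$ as $\mu\cup\mu'$. Each of $\gamma$,~$\sigma$ meets the curve~$u$ exactly once, in the interior of one edge; since the edges of~$u$ crossed by $\gamma\cup\sigma$ are the same as those crossed by $\mu\cup\mu'$, namely $\{e,e'\}$, we may assume $\gamma$~crosses~$u$ in~$e$ and $\sigma$~crosses~$u$ in~$e'$. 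Let ${\mathcal A}$ be the annulus they bound in~$\cycov{c}$. Since $u$~crosses each of the core curves $\gamma$~and~$\sigma$ exactly once, $u\cap{\mathcal A}$ is precisely the subarc of~$u$ between $e$~and~$e'$; by Lemma~\ref{lem:short-bound} the annulus~${\mathcal A}$ meets only finitely many edges of~$\cyH^*$, so this subarc is finite. There remains to see that all of its edges are \short/. If ${\mathcal A}$ is crossed by $c$\dash/transversals only this is exactly Lemma~\ref{lem:short-connectivity}. Otherwise ${\mathcal A}$ contains the projection~$\lambda_0$ of the unique $\trans$\dash/invariant line (Lemma~\ref{lem:1-invariant-line}); arguing as in the proof of Lemma~\ref{lem:short-bound}, I would flank~$\lambda_0$ by minimal generators $\lambda$~and~$\rho$ taken adjacent to it on its two sides, so that every edge of~$u$ lying between $e$~and~$e'$ either lies in the annulus bounded by $\gamma$~and~$\lambda$, or in the one bounded by $\rho$~and~$\sigma$ — both crossed by $c$\dash/transversals only, hence covered by Lemma~\ref{lem:short-connectivity} — or is one of the two edges of~$u$ incident to the crossing point $u\cap\lambda_0$, which are crossed by $\lambda$~and~$\rho$ respectively. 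In every case every edge of~$u$ between $e$~and~$e'$ is \short/.

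The step I expect to be the crux is the $\trans$\dash/invariant case: one must check that $\lambda$~and~$\rho$ can be chosen adjacent to~$\lambda_0$ — sweeping a minimal generator across a vertex of~$\cyH^*$ preserves its \CW/, these vertices being $4$\dash/valent — and that the thin region they enclose around~$\lambda_0$ contributes no further edge of~$u$. This, together with the routine general\dash/position care needed so that ``meets~$u$ exactly once'' designates a well-defined edge of~$u$, is of the same nature as the arguments already carried out in the proofs of Lemmas~\ref{lem:short-bound} and~\ref{lem:short-connectivity}.
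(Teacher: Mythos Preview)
Your proof is correct and follows essentially the same approach as the paper: uncross the witnessing minimal generators via Lemma~\ref{lem:uncross-generators}, apply Lemma~\ref{lem:short-connectivity} to the resulting annulus (splitting along the $\trans$\dash/invariant line if one is present, as in the proof of Lemma~\ref{lem:short-bound}), and invoke Lemma~\ref{lem:short-bound} for finiteness. You add a nonemptiness argument the paper omits and spell out the invariant-line case in more detail, but the strategy is identical.
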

\begin{proof}
  Let $a$~and~$b$ be two \short/ edges along~$u$. Let $\gamma$~and~$\sigma$
  be minimal generators crossing $a$~and~$b$ respectively. By
  Lemma~\ref{lem:uncross-generators}, we can assume that $\gamma$~and~$\sigma$
  are disjoint and still cross $a$~and~$b$ (a single
  generator cannot cross a $c$\dash/transversal twice). If the annulus
  bounded by $\gamma$ and $\sigma$ is crossed by $c$\dash/transversals only
  then we can apply Lemma~\ref{lem:short-connectivity} to conclude that
  the edges between $a$~and~$b$ along~$u$ are \short/. Otherwise, we can
  cut this annulus into two parts as in the proof of
  Lemma~\ref{lem:short-bound} and reach the same conclusion. This
  latter lemma also implies that there are $O(|\gamma|)$~\short/ edges
  between $a$~and~$b$. It follows that set of \short/ edges along~$u$
  is connected and bounded.
\end{proof}
Consider two minimal generators $\mu$~and~$\nu$. We claim that there
exists a minimal generator that crosses the rightmost of the \short/
edges crossed by $\mu$~and~$\nu$ along each
$c$\dash/transversal. Indeed, the two disjoint minimal generators
$\gamma$~and~$\sigma$ returned by Lemma~\ref{lem:uncross-generators}
cannot invert their order of crossings along
$c$\dash/transversals. Hence one of them uses all the leftmost \short/
edges, while the other uses all the rightmost \short/ edges. By a
simple induction on the number~$|\gamma|$ of $c$\dash/transversals,
this implies in turn that there exists a minimal generator~$\gamma_R$
that crosses the rightmost \short/ edge of each $c$\dash/transversal.
We define the \define{canonical generator} with respect to~$c$ as the
cycle in~$\cyH$ dual to the sequence of \short/ edges crossed
by~$\gamma_R$.

\paragraph{The canonical belt}
As for~$\gamma_R$, we can show the existence of a minimal generator~$\gamma_L$
crossing the leftmost \short/ edges. We define the
\define{canonical belt}~${\mathcal B}_c$ as the union of the vertices,
edges and faces crossed by the annulus bounded by $\gamma_L$~and~$\gamma_R$.
By Lemma~\ref{lem:short-connectivity}, the edges in~${\mathcal
  B}_c$ are the \short/ edges and the edges of the projection of the
$\trans$\dash/invariant line, if any. In particular, any minimal generator
is included in the canonical belt.

\begin{lemma}
  Any intersection of $c$\dash/transversals is interior to the canonical
  belt.
\end{lemma}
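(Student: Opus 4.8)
The plan is to argue by contradiction, staying close to the style of Lemma~\ref{lem:l_gamma-line-crossing}. First I would note that the interior of the canonical belt is exactly the open annulus $\mathcal A$ bounded by $\gamma_L$ and $\gamma_R$: a vertex of $\cyH^*$ is interior to ${\mathcal B}_c$ if and only if each of its incident faces and edges meets $\mathcal A$, and since a non-contractible loop cannot wind around a single vertex of $\cyH^*$, this is equivalent to the vertex lying in $\mathcal A$. So suppose two $c$-transversals $u$ and $v$ cross at a point $x\notin\mathcal A$. Since $u$ meets $\mathcal A$ exactly in the arc between its crossings with $\gamma_L$ and $\gamma_R$ (and likewise for $v$), the point $x$ lies in a component of $\cycov{c}\setminus\mathcal A$; by the left-to-right orientation of $c$-transversals and the defining property of $\gamma_L,\gamma_R$ (crossing the leftmost, resp.\ rightmost, short edge of every $c$-transversal), we may assume $x$ lies in the component $C_L$ to the left of $\gamma_L$, hence strictly to the left, along both $u$ and $v$, of $a:=u\cap\gamma_L$ and $b:=v\cap\gamma_L$.

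Next I would isolate a disk. As $u$, $v$ and $\gamma_L$ are simple and pairwise meet transversally in the minimum number of points, $u[a,x]\cup v[x,b]$ together with either arc of $\gamma_L$ between $a$ and $b$ forms a simple closed curve; the two resulting curves have winding numbers around the cylinder differing by one, so (a simple closed curve on a cylinder having winding number $0$ or $\pm1$) one of them is contractible and bounds a closed disk $T\subset C_L$. Among all crossings of $c$-transversals in $C_L$ I would pick one for which such a $T$ has fewest faces of $\cyH^*$; a short argument should then show $T$ is \emph{clean}, in the sense that $\cyH^*$ has no vertex in the interior of $T$ and no edge with both endpoints on $u[a,x]\cup v[x,b]$ --- the absence of bigons in the $\{r,4\}$\dash/tiling, the minimality of $T$, and the fact that two $c$-transversals cross at most twice (Lemma~\ref{lem:3-translates-invariant}) ruling out the alternatives, possibly after passing to a strictly smaller disk or, in the double-crossing case, to the generator cut out of $u$ and $v$ by the two crossings.

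The last step is the modification. Replacing the arc of $\gamma_L$ bounding $T$ by a regular path $\rho$ hugging $u[a,x]\cdot v[x,b]$ just inside $T$ produces a loop $\gamma_L'$ freely homotopic to $\gamma_L$, hence again a generator, which crosses $u$ at an edge strictly to the left of its leftmost short edge. Because $T$ is clean, the edges of $\cyH^*$ crossed by $\rho$ are in bijection with those crossed by the replaced arc of $\gamma_L$ (each being the unique edge of $\cyH^*$ pointing from a given vertex of $u[a,x]\cup v[x,b]$ into $T$), so $|\gamma_L'|=|\gamma_L|$ and $\gamma_L'$ is again a \emph{minimal} generator --- contradicting the choice of $\gamma_L$ as a minimal generator crossing the leftmost short edge of $u$. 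The symmetric argument rules out $x\in C_R$, so $x\in\mathcal A$, i.e., $x$ is interior to ${\mathcal B}_c$.

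I expect the main obstacle to be the middle step: extracting an inclusion-minimal disk $T$ while controlling the portion of $\cyH^*$ it contains (and handling the case in which $u$ and $v$ cross $C_L$ twice), together with the crossing-weight bookkeeping that certifies $|\gamma_L'|=|\gamma_L|$. The reduction to a crossing inside $C_L$ and the final contradiction are routine in the spirit of the earlier line lemmas.
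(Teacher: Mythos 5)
The paper's proof is much shorter and direct, not by contradiction: take \emph{any} minimal generator $\gamma$; since $u$ and $v$ each cross $\gamma$ exactly once and cross each other at $x$, the three curves bound a triangle $t$ with apex $x$; replace the $\gamma$\dash/side of $t$ by the complementary $u$\dash/$v$\dash/side pushed \emph{out} of $t$ (i.e., past $x$, to the far side of $u\cup v$) to obtain a second generator $\lambda$; the triangle\dash/free Property shows every line crossing the $u$\dash/ or $v$\dash/side must exit through the $\gamma$\dash/side, so $|\lambda|=|\gamma|$ and $\lambda$ is minimal; hence $x$ lies in the open annulus between the two minimal generators $\gamma$ and $\lambda$, which is inside the canonical belt.

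Compared to this, your plan has two genuine problems.

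First, and most seriously, the direction of the perturbation is wrong. You replace the $\gamma_L$\dash/arc of $T$ by a path $\rho$ hugging $u[a,x]\cdot v[x,b]$ \emph{just inside} $T$. But then $\gamma_L'$ still crosses $u$ at the join with $\gamma_L[b,a]$, which happens infinitesimally close to $a$, on the very same edge $e_u$ as $\gamma_L$; it does \emph{not} cross $u$ ``at an edge strictly to the left of its leftmost short edge.'' Worse, $x$ remains on the left of $\gamma_L'$ as well (you have pushed $\gamma_L'$ \emph{toward} $\gamma_L$'s side of $x$, not past $x$), so you neither obtain a contradiction nor a generator on the far side of $x$. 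The correct move, which is what the paper does, is to push \emph{outside} $T$ (past $x$): then the new generator crosses the edges of $u$ and $v$ on the far side of $x$, and you can either conclude directly that $x$ is sandwiched between two minimal generators, or, in your indirect phrasing, contradict the leftmost\dash/edge property of $\gamma_L$.

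Second, the ``clean disk'' middle step that you flag as the main obstacle is both the most delicate and, in fact, an unnecessary detour. The paper controls the crossing weight of the perturbed generator with a one\dash/line application of the triangle\dash/free Property (a line crossing the $u$\dash/ or $v$\dash/side of $t$ cannot also cross the other of these two sides, so it must exit through the $\gamma$\dash/side), valid regardless of what $\cyH^*$ does inside $t$. You do not need inclusion\dash/minimality of $T$, absence of interior vertices, absence of chord edges, or a case split on whether $u$ and $v$ cross once or twice. Similarly, by working with an arbitrary minimal generator rather than $\gamma_L$, the paper avoids the initial reduction to the component $C_L$ altogether. With the perturbation direction fixed and the clean\dash/disk machinery replaced by the triangle\dash/free argument, your contradiction variant becomes a correct (if slightly roundabout) reformulation of the paper's direct proof; as written, though, the argument does not go through.
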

\begin{proof}
  Consider two $c$\dash/transversals $u$~and~$v$ crossing at a vertex~$x$.
  Let~$\gamma$~be a minimal generator; it bounds, together with
  $u$~and~$v$, a triangle~$t$ in~$\cycov{c}$. Let~$\lambda$~be the
  generator obtained from~$\gamma$ by substitution of the $\gamma$\dash/side
  of~$t$ with the complementary part of the boundary of~$t$,
  slightly pushed out of~$t$. By
  the triangle-free Property, the projection of a
  line of~$\unicov$ that crosses the $u$\dash/side or $v$\dash/side of~$t$ must
  exit~$t$ through its $\gamma$\dash/side. It follows that $\lambda$~has the
  same \CW/ as~$\gamma$. Hence, $\lambda$~is minimal and the
  vertex~$x$, which is interior to the annulus bounded by
  $\gamma$~and~$\lambda$, is also interior to the canonical belt.
\end{proof}

We consider the subgraph~$\lift{K}^*$ of~$\lift{H}^*$ induced by the lines
in~$\lift{H}^*$ that are neither $\trans$\dash/transversal nor
$\trans$\dash/invariant.
The projection~$\cyproj(\lift{K}^*)$ of~$\lift{K}^*$ in~$\cycov{c}$ is
denoted by~$K_c^*$; it is the union of the projections of the lines that
are neither $\trans$\dash/transversal nor
$\trans$\dash/invariant. The following lemma gives a simple
characterisation of the canonical belt.
\begin{lemma}\label{lem:belt-characterisation}
  ${\mathcal B}_c$~is the unique component of~$\cycov{c}\setminus
  K_c^*$ that contains a generator.
\end{lemma}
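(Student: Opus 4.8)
The plan is to identify the canonical belt ${\mathcal B}_c$ with the component~$C_0$ of $\cycov{c}\setminus K_c^*$ that meets the interior of~${\mathcal B}_c$, and to prove that $C_0$ contains a generator while no other component does.

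First I would check that $K_c^*$ avoids $\mathrm{int}({\mathcal B}_c)$. The edges of~$\cyH^*$ lying in~${\mathcal B}_c$ are precisely the \short/ edges and the edges of the projection of the $\trans$\dash/invariant line; a \short/ edge lies on a $c$\dash/transversal by Proposition~\ref{prop:transversal-existence}, so neither kind of edge sits on a line of~$K_c^*$, and an interior vertex of~${\mathcal B}_c$ has its four incident faces in~${\mathcal B}_c$, hence all four incident edges of this kind, so both lines through it are $\trans$\dash/transversal or $\trans$\dash/invariant and it is not on~$K_c^*$ either. Since $K_c^*$ is a union of edges and vertices of~$\cyH^*$ and none of these lies in~$\mathrm{int}({\mathcal B}_c)$, we get $K_c^*\cap\mathrm{int}({\mathcal B}_c)=\emptyset$, so the connected set $\mathrm{int}({\mathcal B}_c)$ sits inside a single component~$C_0$. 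Any minimal generator is contained in~${\mathcal B}_c$, crosses $c$\dash/transversals at \short/ edges and hence meets $\mathrm{int}({\mathcal B}_c)$, and by Proposition~\ref{prop:transversal-existence} is disjoint from~$K_c^*$; it therefore lies in~$C_0$, which thus contains a generator.

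For uniqueness, let $C$ be a component of $\cycov{c}\setminus K_c^*$ containing a generator~$g$ (whose basepoint we may take off every line). Since $g$ is disjoint from~$K_c^*$, the only edges of~$\cyH^*$ it crosses lie on $c$\dash/transversals or on the projection of the $\trans$\dash/invariant line; equivalently, lifting $g$ to a path~$\lift g$ from~$x$ to~$\trans(x)$ in~$\unicov$ by Lemma~\ref{lem:generator:lift-is-tau-compatible}, the path~$\lift g$ crosses only $\trans$\dash/transversal lines and possibly the $\trans$\dash/invariant line. I would then reduce $g$ to a minimal generator without leaving~$C$, by repeatedly cancelling innermost bigons formed by~$\lift g$ with these lines. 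The crucial point is that an innermost bigon~$D$, bounded by a subarc of~$\lift g$ and a subarc of a line~$\lambda$, contains no line of~$K_c^*$: such a line cannot cross the $\lift g$\dash/side of~$\partial D$, and by Proposition~\ref{prop:lines:no-bigon-triangle} it meets~$\lambda$ at most once, so it cannot pass from the bounded disk~$D$ to infinity. Consequently the bigon move, which sweeps $\lift g$ across~$D$, never touches~$K_c^*$; it keeps $g$ inside~$C$, introduces no new line crossing, and strictly lowers the \CW/. When no bigon remains, $\lift g$ crosses each $\trans$\dash/orbit of $\trans$\dash/transversal lines exactly once --- at least once because $x$ and~$\trans(x)$ lie in consecutive translates of the fundamental band, at most once by tautness --- and crosses the $\trans$\dash/invariant line, if any, zero times (it crosses it an even number of times, since $\trans$ preserves orientation and fixes that line setwise, so $x$ and~$\trans(x)$ lie on the same side of it). The resulting loop~$g'$ is then regular, freely homotopic to~$c$, and of \CW/ equal to the number of $c$\dash/transversals, i.e.\ the minimal one; so $g'$ is a minimal generator contained in~$C$, whence $C=C_0$ by the first part.

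The hard part will be making this reduction step rigorous: verifying that every innermost bigon is free of lines of~$K_c^*$, that sweeping $\lift g$ across it introduces no new crossing, and that the procedure terminates precisely on a minimal generator. A secondary point is confirming that $C_0$ is no larger than~$\mathrm{int}({\mathcal B}_c)$; this holds because each component other than~$C_0$ is disjoint from~${\mathcal B}_c$ and, by the uniqueness argument just sketched, contains no generator at all.
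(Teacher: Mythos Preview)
Your approach is correct in outline but takes a markedly different and heavier route than the paper. For the opening part you and the paper agree: the interior of~$\mathcal B_c$ meets no edge or vertex of~$K_c^*$ (its edges are \short/ or on the invariant line, its interior vertices lie on two such lines), so it sits in a single component~$C_0$ which contains every minimal generator. Your closing remark on why $C_0$ cannot exceed~$\mathcal B_c$ is not quite the right reason, though: what you actually need is that every edge on~$\partial\mathcal B_c$ lies in~$K_c^*$, which holds because such an edge is neither \short/ (both incident faces would then lie in~$\mathcal B_c$) nor on the invariant line (which is interior to the belt).

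Where you diverge is in the uniqueness argument. You propose to shorten an arbitrary generator $g\subset\cycov c\setminus K_c^*$ down to a minimal one by innermost-bigon moves in~$\unicov$, checking at each step that the bigon disk is free of~$\lift K^*$. This can be made to work, but it carries real overhead: ``innermost'' must mean that the disk contains no crossing of~$\lift g$ with \emph{any} line (so that the move introduces no new crossing at all), one has to cope with a possibly non-simple~$\lift g$, and one must verify that the terminal curve hits each $\trans$-orbit of transversals exactly once and the invariant line not at all. The paper bypasses all of this with a one-step topological observation: make~$\sigma$ simple by excising contractible subloops (this only deletes pieces of~$\sigma$, so it stays off~$K_c^*$), pick a simple minimal generator~$\gamma\subset\mathcal B_c$, and look at the compact annulus~$\mathcal A$ they bound. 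No line of~$\lift K^*$ can project into~$\mathcal A$, since such a projection cannot cross~$\gamma$ (only $\trans$-transversals cross minimal generators, by Proposition~\ref{prop:transversal-existence}) nor~$\sigma$ (by hypothesis), and if it stayed inside the compact set~$\mathcal A$ it would repeat an edge and hence be $\trans$-invariant --- a fact already recorded just before Lemma~\ref{lem:short-bound}. Thus $\mathcal A\subset\cycov c\setminus K_c^*$, so $\sigma$ and~$\gamma$ lie in the same component. Your method is more constructive, but the paper's is far shorter and avoids the bigon machinery entirely.
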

\begin{proof}
  It is easily seen from its definition and from the previous lemma
  that ${\mathcal B}_c$~is a component of~$\cycov{c}\setminus K_c^*$.
  Let~$\sigma$~be any generator contained in~$\cycov{c}\setminus K_c^*$
  and let $\gamma$~be a (simple) minimal generator. Since $\gamma$~is
  included in~${\mathcal B}_c$, we just need to show that $\sigma$~and~$\gamma$
  are contained in the same component of~$\cycov{c}\setminus K_c^*$.
  If~$\sigma$~is not a simple curve, we
  first remove its contractible loops to make it simple. If
  $\sigma$~and~$\gamma$ intersect then there is nothing to show. Otherwise,
  $\sigma$~and~$\gamma$ bound a compact annulus~$\mathcal A$ in~$\cycov{c}$.
  As already observed, any line whose projection stays in~$\mathcal A$
  must be $\trans$\dash/invariant. It follows that $\mathcal A$~is
  entirely contained in a single component of~$\cycov{c}\setminus
  K_c^*$.
\end{proof}
% We end this section with a useful property for the
% validity of our construction. \note{F}{not used. To remove.}
% \begin{lemma}\label{lem:K-transversal-intersection}
%   A line in $\lift{K}^*$ cannot intersect two (consecutive) translates
%   of a $\trans$-transversal line.
% \end{lemma}
% \begin{proof}
%   For the sake of a contradiction, suppose that a line $\lambda$ in
%   $\lift{K}^*$ intersects a $\trans$-transversal line $\ell$ at a
%   point $x$ and intersects its translate $\trans(\ell)$ at a point
%   $y$. Note that $y\neq\trans(x)$ since $\lambda$ is not
%   $\trans$-invariant.  Consider a minimal generator $\gamma$ and its
%   reciprocal image $\ell_{\gamma}$ in $\unicov$. By
%   Proposition~\ref{prop:transversal-existence}, $\lambda$ and
%   $\ell_{\gamma}$ are disjoint. It follows that the lines
%   $\ell_{\gamma}$, $\lambda$, $\ell$ and $\ell_{\gamma}$ form a simple
%   quadrilateral $Q$. If $\trans(x)$ belongs to the $\trans(\ell)$ side
%   of $Q$, then $\trans(\lambda)$ enters $Q$ at $\trans(x)$. But
%   $\trans(\lambda)$ cannot exit $Q$ without forming a triangle or a
%   quadrilateral with $\lambda$, $\ell$, and $\trans(\ell)$; a
%   contradiction with
%   Proposition~\ref{prop:lines:no-bigon-triangle}. If $\trans(x)$ does
%   not belong to the $\trans(\ell)$ side of $Q$, then $\trans\inv(y)$
%   belongs to the $\ell$ side of $Q$ and a similar argument again leads
%   to a contradiction.
% \end{proof}

\subsection{Computing the canonical generator}
We now explain how to compute the canonical generator associated with
the loop~$c$ in time proportional to~$|c|$.  Let~$\region{}$~be the
relevant region of the loop~$c^6$ obtained by six concatenations of~$c$.
According to Section~\ref{subsec:algo-contractibility}, we can
build the adjacency graph~$\Gamma$ of the faces of~$\region{}$ in
$O(|c|)$~time. The edges dual to the edges of~$\Gamma$ are the edges
of~$\uniH^*$ interior to~$\region{}$. They induce a subgraph of~$\uniH^*$
which we denote by~$\Gamma^*$. As opposed to~$\Gamma$, the
graph~$\Gamma^*$ may have multiple components (see
Figure~\ref{fig:poincare}). A vertex of~$\Gamma^*$ may have degree one
or four depending on whether it is a flat boundary vertex or an
interior vertex of~$\Pi$. As explained in the mirror sub-procedure of
Section~\ref{subsec:algo-contractibility}, if $\lift{e}^*\in\Gamma^*$~is
an edge dual to~$\lift{e}$ then the two siblings of~$\lift{e}$
correspond to the facing edges of~$\lift{e}^*$ and they can be
computed in constant time (if they belong to~$\Gamma^*$). Similarly,
we can easily compute in constant time the circular list of (one or
four) edges sharing a same vertex of~$\Gamma^*$.

\paragraph{Identifying and classifying lines in~$\Gamma^*$.}
From the preceding discussion we can traverse~$\Gamma^*$ to give a
distinct tag to each maximal component of facing edges in constant
time per edge. Lemma~\ref{lem:region:connected-inter-with-line}
ensures that each such component is supported by a distinct line. We
denote by~$\ell(\lift{e})$ the identifying tag of the line supporting
the edge dual to~$\lift{e}$. With a little abuse of notation we will
identify a line with its tag.

Let~$c_1,\ldots,c_6$~be the successive lifts of~$c$ in the lift of~$c^6$
and let $x_0, x_1,\ldots,x_6$~be the successive lifts of~$c(0)$
in the lift of~$c^6$.  Let~$\lift{e}_{i,j}$~be the
$j$\dash/th edge of~$c_i$.  Since by construction
$\trans(c_i) = c_{i+1}$, we have $\trans(\ell(\lift{e}_{i,j})) =
\ell(\lift{e}_{i+1,j})$. This allows us to compute the translate of any line
crossing one of $c_1,\ldots,c_5$ in constant time per line. Notice
that the interior of~$\region{}$ is crossed by a $\trans$\dash/invariant line if
and only~if $\trans(\ell)=\ell$ for some line~$\ell$ crossing one
(thus any) of~$c_1,\ldots,c_5$. We can now fill a table~$C[\ell]$
whose Boolean value is true if~$\ell$ intersects~$\trans(\ell)$ in~$\region{}$
and false otherwise. This clearly takes $O(|c|)$~time.  In order
to identify the $\trans$\dash/transversals separating $x_i$~from~$x_{i+1}$,
we need the following property.
\begin{lemma}\label{lem:relevant-crossing}
    Two intersecting lines crossing the interior of~$\region{}$ intersect
  in the interior of~$\region{}$.
\end{lemma}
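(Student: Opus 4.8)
The plan is a short proof by contradiction that leans on two features of relevant regions: their ``combinatorial convexity'' (each is an intersection of the half-planes $\ell\cup\ell^{+}$ over the lines $\ell$ it does not cross) and the triangle-free Property of Proposition~\ref{prop:lines:no-bigon-triangle}. Write $p$ for the lift of $c^6$, so that $\region{}=\region{p}$, let $\ell_1\neq\ell_2$ be the two intersecting lines, and let $z$ be their unique common point. I would first record two elementary observations. (i) Any line meeting the interior of $\region{p}$ is crossed by $p$: if $\ell$ is not crossed by $p$ then $\region{p}\subseteq\ell\cup\ell^{+}$, and a point of $\ell$ interior to $\region{p}$ would have a neighbourhood meeting the opposite side of $\ell$ --- impossible. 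In particular $\ell_1$ and $\ell_2$ are both crossed by $p$. (ii) Since a line is the smallest facing-closed set of edges of $\uniH^{*}$ through any one of its edges, two distinct lines share no edge; hence $z$, lying on both $\ell_1$ and $\ell_2$, is a vertex of $\uniH^{*}$.

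Now assume $z\notin\operatorname{int}\region{}$. As $\region{}$ is closed, either $z$ lies strictly outside $\region{}$ or $z\in\partial\region{}$. In the first case the description of $\region{}$ as an intersection of half-planes gives a line $\lambda$, not crossed by $p$, with $z$ on the component of $\unicov\setminus\lambda$ not containing $p(0)$. Each of $\ell_1,\ell_2$ has a point in $\operatorname{int}\region{}\subseteq\lambda^{+}$ and the point $z$ on the other side of $\lambda$, so by Lemma~\ref{lem:lines:separating} each of them crosses $\lambda$; and $\lambda\neq\ell_1,\ell_2$ by observation (i). Thus $\ell_1,\ell_2,\lambda$ are three pairwise distinct, pairwise intersecting lines, contradicting the triangle-free Property. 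In the second case $z$ is a vertex of $\uniH^{*}$ on $\partial\region{}$, hence a flat or convex vertex in the sense of the proof of Lemma~\ref{lem:region:small-number-of-faces}. If $z$ is convex, no line through $z$ is crossed by $p$, contradicting that $\ell_1$ is; if $z$ is flat, exactly one of the two lines through $z$ is crossed by $p$, contradicting that both $\ell_1$ and $\ell_2$ are. Either way we reach a contradiction, so $z\in\operatorname{int}\region{}$.

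I do not expect a genuine obstacle: the whole argument is bookkeeping of which lines $p$ crosses, closed off by the triangle-free Property. The one step that deserves care is the case $z\in\partial\region{}$, where I use that a relevant region is combinatorially convex, so that its boundary carries only flat and convex vertices (no reflex or pinch vertex) --- this is implicit in the proof of Lemma~\ref{lem:region:small-number-of-faces} and is worth making explicit. Alternatively one can avoid the flat/convex classification by invoking Lemma~\ref{lem:region:connected-inter-with-line}: since $\ell_1$ meets the interior of $\region{}$, the relative interior of the segment $\ell_1\cap\region{}$ lies in $\operatorname{int}\region{}$, so the only boundary points of $\ell_1\cap\region{}$ are its two endpoints, and likewise for $\ell_2$; a brief check then shows that $z$ being a common endpoint of these two segments would again force a forbidden triangle with a line supporting $\partial\region{}$ at $z$.
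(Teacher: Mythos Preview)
Your proof is correct and follows essentially the same route as the paper: if the intersection point $z$ were not interior to $\region{}$, some line $\lambda$ not crossed by $p$ would separate it from $\region{}$, and then $\ell_1,\ell_2,\lambda$ would pairwise intersect, violating the triangle-free Property. The paper compresses this into two sentences and does not separately treat the case $z\in\partial\region{}$; your case split and the flat/convex analysis make that boundary case explicit, which is a welcome clarification but not a different argument.
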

\begin{proof}
  Otherwise there would be a line~$\lambda$ separating~$\region{}$ from the
  intersection of the two intersecting lines. These two lines would
  form a triangle with~$\lambda$, in contradiction with
  the triangle-free Property.
\end{proof}
We first identify and orient the $\trans$\dash/transversals separating
$x_2$~from~$x_{3}$. We start by filling a table~$P[\ell,i]$ counting
the parity of the number of intersections of each line~$\ell$ with~$c_i$
for~$i\in\{2,3,4\}$. This can be done in $O(|c|)$~time: we
initialize all the entries of the table~$P$ to~$0$ and, for~each~$i\in\{2,3,4\}$
and each edge~$\lift{e}$ of~$c_i$, we invert the
current parity of~$P[\ell(\lift{e}),i]$. By
Proposition~\ref{prop:lines:transversal-condition}, the
$\trans$\dash/transversals separating $x_2$~from~$x_{3}$ are exactly
those~$\ell$ for which $C[\ell]$~is false, $P[\ell,3]$~is odd, and
$P[\ell,2]$~and~$P[\ell,4]$ are even. We then orient these
transversals from left to right as follows. We traverse in order the
oriented edges of the lift~$c_3$. As we traverse an oriented edge~$\lift{e}$
of~$c_3$, we check whether its dual belongs to an identified transversal
that was not already oriented. In the affirmative, we use the
left-to-right orientation of the dual edge of~$\lift{e}$ --- given by the
correspondence between the oriented cellular embedding and its dual --- to
orient this transversal.

We can now identify and orient all the $\trans$\dash/transversals
separating $x_i$~from~$x_{i+1}$, for~$i\in[0,5]$, by translation of
those separating $x_2$~from~$x_{3}$. Corollary~\ref{cor:transversal-unic}
ensures the existence of at least one transversal separating
$x_2$~from~$x_{3}$; we choose one
and denote it by~$\ell$ in the sequel. We shall concentrate on the
part of~$\region{}$ contained in the closure~$\bar{B}_{\ell}$ of the
fundamental domain of~$\langle\trans\rangle$ comprised between
$\ell$~and~$\trans(\ell)$. We put ${\mathcal C} :=
\region{}\cap\bar{B}_{\ell}$.

\paragraph{Finding a lift of the canonical generator.}
We want to show that ${\mathcal C}$~contains either a whole lift of the
canonical belt or half of it. In this latter case, $\region{}$~is bounded by
a $\trans$\dash/invariant line.
\begin{lemma}\label{lem:two-situations}
  Exactly one of the two following situations occurs
  \begin{enumerate}
    \item $\region{}$~contains the intersection~$\bar{B}_{\ell}\cap
    \cyproj\inv({\mathcal B}_c)$.
  \item There exists a $\trans$\dash/invariant line~$\lambda$, whose
    projection~$\cyproj(\lambda)$ cuts the canonical belt into two
    open parts ${\mathcal B}_L$~and~${\mathcal B}_R$, each one
    containing a generator and intersecting exactly one edge of each
    $c$\dash/transversal. The relevant region~$\region{}$ contains
    either (i) the intersection $\bar{B}_{\ell}\cap
    \cyproj\inv({\mathcal B}_L)$ or (ii) the intersection
    $\bar{B}_{\ell}\cap \cyproj\inv({\mathcal B}_R)$ and exclude the
    other one.
  \end{enumerate}
\end{lemma}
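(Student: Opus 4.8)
The plan is to compare the relevant region $\region{}$ with the $\trans$\dash/invariant lift $\lift{\mathcal B}:=\cyproj\inv({\mathcal B}_c)$ of the canonical belt inside the central fundamental slab $\bar{B}_{\ell}$, where ${\mathcal C}=\region{}\cap\bar{B}_{\ell}$ as in the text. First I would pin down the shape of $\lift{\mathcal B}\cap\bar{B}_{\ell}$. By Lemma~\ref{lem:belt-characterisation} the belt ${\mathcal B}_c$ is disjoint from $K_c^*$, and its boundary generators $\gamma_L,\gamma_R$ are minimal, hence crossed by $c$\dash/transversals only (Proposition~\ref{prop:transversal-existence}); so every line meeting $\lift{\mathcal B}$ is $\trans$\dash/transversal or is the unique $\trans$\dash/invariant line $\lambda$ (Lemma~\ref{lem:1-invariant-line}), when the latter exists. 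Since two lines cross at most once (Proposition~\ref{prop:lines:no-bigon-triangle}) and two $c$\dash/transversals at most twice (Lemma~\ref{lem:3-translates-invariant}), a $\trans$\dash/transversal meeting the open slab $\mathring{B}_{\ell}$ either runs longitudinally inside it or crosses each of $\ell$ and $\trans(\ell)$ once, and the $\trans$\dash/invariant boundaries $\ell_{\gamma_L},\ell_{\gamma_R}$ of $\lift{\mathcal B}$ cross each of $\ell$ and $\trans(\ell)$ once as well (Lemma~\ref{lem:l_gamma-line-crossing}); hence $\lift{\mathcal B}\cap\bar{B}_{\ell}$ is a compact disk bounded by one arc of each of $\ell,\trans(\ell),\ell_{\gamma_L},\ell_{\gamma_R}$. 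If $\lambda$ exists, then $\cyproj(\lambda)$ is a generator crossing each $c$\dash/transversal once, hence of minimal \CW/, hence a minimal generator contained in ${\mathcal B}_c$; it therefore cuts ${\mathcal B}_c$ into two parts ${\mathcal B}_L,{\mathcal B}_R$, and a short\dash/edge count along each $c$\dash/transversal in the spirit of Lemma~\ref{lem:short-bound} shows that each part contains a generator and meets exactly one edge of every $c$\dash/transversal; moreover, being $\trans$\dash/invariant, $\lambda$ crosses $\ell$ and $\trans(\ell)$, and $\lambda\cap\bar{B}_{\ell}$ splits $\lift{\mathcal B}\cap\bar{B}_{\ell}$ into the two sub\dash/disks $\bar{B}_{\ell}\cap\cyproj\inv({\mathcal B}_L)$ and $\bar{B}_{\ell}\cap\cyproj\inv({\mathcal B}_R)$.

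Second, I would exhibit a \emph{core} of ${\mathcal C}$: an arc inside $\region{}\cap\lift{\mathcal B}\cap\bar{B}_{\ell}$ joining the $\ell$\dash/side to the $\trans(\ell)$\dash/side. The relevant region contains $\lift{c^6}$, and since tightening a path only removes line crossings it also contains the cross\dash/metric geodesic homotopic to $\lift{c^6}$ with the same endpoints, \ie/ a lift of $u\cdot\gamma^6\cdot u\inv$ for a minimal generator $\gamma$ and a shortest connecting arc $u$ --- this is precisely the convex region of~\cite{ce-tnspc-10}. That lift contains six consecutive lifts $\rho_0,\trans(\rho_0),\dots,\trans^5(\rho_0)$ of $\gamma$, all lying in $\lift{\mathcal B}$ since $\gamma\subset{\mathcal B}_c$; as $\gamma$ meets each $c$\dash/transversal once, each $\trans^i(\rho_0)$ crosses exactly one translate of $\ell$, and the appropriate consecutive pair among them has union meeting $\bar{B}_{\ell}$ in an arc from $\ell$ to $\trans(\ell)$, which is the desired core.

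Third, I would grow from the core. As $\region{}$ is connected, the core is reachable from $x_0$ across crossed lines, and as $\lift{\mathcal B}\cap\bar{B}_{\ell}$ is a disk, each of its faces is reachable from the core by successively crossing lines of $\lift{\mathcal B}$ meeting $\bar{B}_{\ell}$, each such line being $\trans$\dash/transversal or equal to $\lambda$. The crux is: every $\trans$\dash/transversal $m$ meeting $\lift{\mathcal B}\cap\bar{B}_{\ell}$ is crossed by $\lift{c^6}$. Indeed, by the crossing bounds above such an $m$ spans at most the three slabs $\bar{B}_{\trans^{-1}(\ell)},\bar{B}_{\ell},\bar{B}_{\trans(\ell)}$, whereas $\lift{c^6}$ runs from $x_0\in\bar{B}_{\trans^{-3}(\ell)}$ to $x_6\in\bar{B}_{\trans^{3}(\ell)}$, so $x_0$ and $x_6$ lie at least two slabs to either side of $m$ and hence on opposite sides of it (a short case check on the position of $m$ inside its three slabs is needed here, cf.~Corollary~\ref{cor:transversal-unic}); therefore $\lift{c^6}$ crosses $m$. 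Granting this, the growth proceeds freely across all $\trans$\dash/transversals. If there is no $\trans$\dash/invariant line, or if $\lift{c^6}$ crosses $\lambda$, the growth fills the whole disk $\lift{\mathcal B}\cap\bar{B}_{\ell}$ and situation~1 holds. If $\lambda$ exists but is not crossed by $\lift{c^6}$, then $\lambda$ is an uncrossed line, so $\region{}$ lies on the side of $\lambda$ containing $x_0$ (together with $\lambda$ itself); the core lies on that side (or on $\lambda$), say the one carrying the lift of ${\mathcal B}_L$, the growth fills $\bar{B}_{\ell}\cap\cyproj\inv({\mathcal B}_L)$, and every face of $\bar{B}_{\ell}\cap\cyproj\inv({\mathcal B}_R)$ lies strictly across $\lambda$ and is excluded from $\region{}$; this is situation~2(i), and the ${\mathcal B}_R$ side yields 2(ii) symmetrically. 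The two situations exclude one another, so exactly one occurs.

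The main obstacle is the crux of the third step, together with the ``no leakage'' claim in situation~2: one must show that $\region{}$ restricted to the central slab reaches every $\trans$\dash/transversal edge of the belt but, beyond a possible $\trans$\dash/invariant wall, nothing more. Carrying this out rests on the bookkeeping of how many translates of $\ell$ a $\trans$\dash/transversal or a curve $\ell_{\gamma_L}$ can cross and on the slab positions of the $x_i$; this is exactly what dictates taking the sixth power of $c$ and choosing a $\trans$\dash/transversal separating $x_2$ from $x_3$, so that two spare copies of $c$ sit on either side of $\bar{B}_{\ell}$. The short\dash/edge count describing ${\mathcal B}_L$ and ${\mathcal B}_R$ is routine and follows the proof scheme of Lemma~\ref{lem:short-bound}.
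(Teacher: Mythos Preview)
Your third step---the ``crux'' that every $\trans$\dash/transversal meeting $\lift{\mathcal B}\cap\bar{B}_\ell$ is crossed by $\lift{c^6}$, via the three\dash/slab bound coming from Lemma~\ref{lem:3-translates-invariant}---is right and is exactly what the paper establishes as well. The handling of situation~2 (if $\lambda$ is not crossed then $\region{}$ lies on one closed side of $\lambda$) is also fine.

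The genuine gap is step~2. The claim that the cross\dash/metric geodesic from $x_0$ to $x_6$ equals a lift of $u\cdot\gamma^6\cdot u^{-1}$ is false: a geodesic segment from $x_0$ to $\trans^6(x_0)$ need not touch $\ell_\gamma$ at all (in the hyperbolic picture it stays at positive distance from the translation axis unless $x_0$ is already on it). The weaker statement you actually need---that some path of the form $u\cdot\lift{\gamma^6}\cdot\trans^6(u)^{-1}$ lies in $\region{}$---also fails: the arc $u$ may well cross lines not crossed by $\lift{c^6}$, and even $\lift{\gamma^6}$ itself may cross $\trans$\dash/transversals whose ``separating index'' with respect to the $x_i$'s lies outside $\{0,\dots,5\}$, since $x_0$ and the $\gamma$\dash/basepoint $y_0$ need not sit in the same translate of $B_m$. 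So you have no seed inside $\region{}\cap\lift{\mathcal B}\cap\bar{B}_\ell$, and the growth in step~3 has nowhere to start.

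The paper avoids a core altogether by invoking Lemma~\ref{lem:relevant-crossing}. Once your crux gives that both lines through any endpoint of a belt edge in $\bar{B}_\ell$ cross the interior of $\region{}$, that lemma places the endpoint itself in the interior of $\region{}$, hence the edge is interior to $\region{}$; no connectedness or seed argument is needed. Concretely: for an edge $\lift{e}^*$ supported by $\ell'$, either an endpoint of $\lift{e}^*$ is the crossing of two $\trans$\dash/transversals (both crossed by $\lift{c^6}$ by your three\dash/slab bound), or $\ell'$ meets no other transversal and one argues with the second line $\ell''$ through the endpoint via $\ell_\gamma$ and Lemma~\ref{lem:3-translates-invariant} that $\ell''$ also crosses $\lift{c^6}$. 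Replacing your step~2 by this appeal to Lemma~\ref{lem:relevant-crossing} makes the proof go through and collapses it to the paper's argument.
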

\begin{proof}
  We first consider the case where there is no $\trans$\dash/invariant
  line. Let~$\lift{e}^*$ be an edge of~$\uniH^*\cap\bar{B}_{\ell}$
  projecting to an edge~$e^*$ of the canonical belt. We want to prove
  that $\lift{e}^*$~is interior to~$\region{}$. Denote by~$\ell'$ the
  supporting line of~$\lift{e}^*$; this is a $\trans$\dash/transversal by
  Proposition~\ref{prop:transversal-existence}. On the one hand, if
  $\ell'$~does not cut any other transversal then it stays
  inside~$\bar{B}_{\ell}$ and must separate $x_3$~from either $x_2$~or~$x_4$.
  In particular, $\ell'$~intersects the concatenation~$c_3\cdot
  c_4$, hence~$\region{}$. If one of the~$c_i$'s intersect~$\lift{e}^*$,
  then $\lift{e}^*$~is inside~$\region{}$ by definition. Otherwise, we
  consider a minimal generator~$\gamma$ crossing~$e^*$ and its
  reciprocal image~$\ell_{\gamma}$. Let~$y\in\ell'\cap (c_3\cdot c_4)$
  and let $z$~be the endpoint of~$\lift{e}^*$ between $y$~and the
  point~$\ell_{\gamma}\cap \lift{e}^*$. See
  Figure~\ref{fig:gamma_R-in-Pi}.
  \begin{figure}
    \centering
    \includesvg[.5\linewidth]{gamma_R-in-Pi}
    \caption{The grey region~$\mathcal C$ is the intersection of the
      relevant region~$\region{}$ of the lift of~$c^6$ with the fundamental
      domain bounded by $\ell$~and~$\trans(\ell)$.}
    \label{fig:gamma_R-in-Pi}
  \end{figure}
  The line~$\ell''$ crossing~$\ell'$ at~$z$ does not cut~$\ell_{\gamma}$
  by Proposition~\ref{prop:transversal-existence}.
  Lemma~\ref{lem:3-translates-invariant} implies that $\ell''$~does
  not cut $\trans^{-2}(\ell')$~nor~$\trans^3(\ell')$.
  It ensues that $\ell''$~intersects the lift of~$c^6$ between
  $\trans^{-2}(y)$~and~$\trans^2(y)$. Since
  $\ell'$~and~$\ell''$ both cross the interior of~$\region{}$, their
  intersection~$z$ is interior to~$\region{}$ by
  Lemma~\ref{lem:relevant-crossing}. It ensues that $\lift{e}^*$~is also
  interior to~$\region{}$. On the other hand, if~$\ell'$~indeed crosses some
  other transversal, then by Lemma~\ref{lem:segment-short} one of the
  endpoints of~$\lift{e}^*$ is the intersection of~$\ell'$ with some
  other $\trans$\dash/transversal~$\mu$. Lemma~\ref{lem:3-translates-invariant}
  implies that these two transversals cannot intersect
  $\trans^{-2}(\ell)$~nor~$\trans^3(\ell)$. It ensues that $\ell'$~and~$\mu$
  cross the lift of~$c^6$  (this is where we need six lifts of~$c$)
  and we conclude as above that
  $\lift{e}^*$~is interior to~$\region{}$.

  We now consider the case where there exists a $\trans$\dash/invariant
  line~$\lambda$. As observed in the proof of
  Lemma~\ref{lem:1-invariant-line}, any $\trans$\dash/transversal must
  cross~$\lambda$. The triangle-free Property then
  forbids these transversals to intersect. It follows that every
  $\trans$\dash/transversal contains exactly two short edges separated by a
  vertex of~$\lambda$. Said differently, the edges of the canonical
  belt precisely lift to
  the edges incident to~$\lambda$ (including the edges of~$\lambda$). On the
  one hand, if $\lambda$~does not cross any of the~$c_i$'s, then the
  lift of~$c^6$ lies on one side of~$\lambda$, say its left. The line~$\lambda$
  may play the role of~$\ell_{\gamma}$ and we may argue as
  above that all short edges to the left of~$\lambda$ and contained
  in~$\bar{B}_{\ell}$ are interior to~$\region{}$. Equivalently, every lift
  in~$\bar{B}_{\ell}$ of an edge of~${\mathcal B}_L$ is interior to~$\region{}$.
  On the other hand, if $c$~crosses~$\cyproj(\lambda)$, then $\lambda$~crosses
  the interior of~$\region{}$. Lemma~\ref{lem:relevant-crossing}
  ensures that all its intersections with the $\trans$\dash/transversals
  in~$\bar{B}_{\ell}$ are interior to~$\region{}$. We conclude this time that
  $\region{}$~contains $\bar{B}_{\ell}\cap \cyproj\inv({\mathcal B}_c)$.
\end{proof}
We now explain how to identify the lift of the canonical belt, or half
of it, contained in~$\mathcal C$.
\begin{lemma}\label{lem:identify-sigma-star}
  Let~$\Sigma^*$~be the subgraph of~$\Gamma^*\cap{\mathcal C}$ projecting
  to the canonical belt.
  We can identify the edges of~$\Sigma^*$ in $O(|c|)$~time.
\end{lemma}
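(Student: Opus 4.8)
The plan is to obtain $\Sigma^{*}$ from the line arrangement inside $\mathcal C=\region{}\cap\bar{B}_{\ell}$ in two stages. First I would classify every line that meets $\mathcal C$ as $\trans$\dash/transversal, as the unique $\trans$\dash/invariant line (when present), or as a line underlying $K_c^{*}$ (``neither'', for short). Then, using the characterisation of the canonical belt in Lemma~\ref{lem:belt-characterisation} and the dichotomy of Lemma~\ref{lem:two-situations}, I would cut $\mathcal C$ along the ``neither'' lines and pick out the single piece that lifts the belt (or its left or right half), reading $\Sigma^{*}$ off that piece.

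For the classification, recall that we already hold, for each $i\in[0,5]$, the oriented list of $\trans$\dash/transversals separating $x_{i}$ from $x_{i+1}$, and that testing $\trans(\cdot)=\cdot$ locates the unique $\trans$\dash/invariant line $\lambda$ (Lemma~\ref{lem:1-invariant-line}), if any. I claim these account for all the $\trans$\dash/transversal and $\trans$\dash/invariant lines meeting $\mathcal C$. Indeed, a line crossing the interior of $\region{}$ is, by the construction of the relevant region, crossed by the lift of $c^{6}$; if such a line $m$ is $\trans$\dash/transversal then, since $\bar{B}_{\ell}$ lies between the separator $\ell$ of $(x_{2},x_{3})$ and its translate $\trans(\ell)$, separator of $(x_{3},x_{4})$, and since by Lemma~\ref{lem:3-translates-invariant} a $\trans$\dash/transversal meets at most two consecutive translates of $\ell$, the line $m$ cannot reach beyond the separators of $(x_{1},x_{2})$ and $(x_{4},x_{5})$, and hence separates one of the pairs $(x_{1},x_{2}),\dotsc,(x_{4},x_{5})$, all enumerated; and a $\trans$\dash/invariant line meeting $\mathcal C$ is $\lambda$. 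Thus, after building in $O(|c|)$ time a lookup table indexed by the line tags of $\Gamma^{*}$ returning the label transversal / invariant / neither, we assign its label to each edge of $\Gamma^{*}\cap\mathcal C$ in $O(1)$ time, hence to all of them in $O(|c|)$ time, since $\Gamma^{*}$ has $O(|c|)$ edges ($\region{}$ has $O(|c|)$ faces by Lemma~\ref{lem:region:small-number-of-faces}).

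For the extraction, Lemma~\ref{lem:belt-characterisation} identifies $\mathcal B_{c}$ with the component of $\cycov c\setminus K_{c}^{*}$ containing a generator, and Lemma~\ref{lem:two-situations} tells us $\region{}$ contains either the whole lift $\bar{B}_{\ell}\cap\cyproj\inv(\mathcal B_{c})$ or, when $\lambda$ exists, exactly one of the half\dash/belt lifts $\bar{B}_{\ell}\cap\cyproj\inv(\mathcal B_{L})$, $\bar{B}_{\ell}\cap\cyproj\inv(\mathcal B_{R})$; in every case this set is a connected component of $\mathcal C$ with the ``neither'' lines (and $\lambda$, in the second case) removed. Since the basepoint of $c$ need not lie in the belt, I would not flood\dash/fill from a fixed vertex; instead I would delete from $\Gamma^{*}\cap\mathcal C$ the edges labelled ``neither'' (and those of $\lambda$, when present), form the connected components of faces of $\uniH^{*}$ — equivalently of vertices of $\Gamma$ — across the remaining edges, and select the component $C$ that supports a generator lift: a component whose closure meets both the $\ell$\dash/side and the $\trans(\ell)$\dash/side of $\partial\mathcal C$ and which is $\trans$\dash/consistent, i.e.\ $\trans$ sends the $\ell$\dash/side of its boundary onto the $\trans(\ell)$\dash/side of its boundary (the translate of any edge carrying a line of $\Gamma^{*}$ is computable in constant time). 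Then $\Sigma^{*}$ consists of the edges labelled transversal or invariant both of whose incident faces of $\uniH^{*}$ belong to $C$, together with the edges of $\lambda$ when $\lambda$ is present. Every face and edge is visited a constant number of times, so this stage — and thus the whole computation — runs in $O(|c|)$ time.

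The step I expect to be the main obstacle is the completeness claim above: that the $\trans$\dash/transversals already enumerated, namely those separating a consecutive pair $(x_{i},x_{i+1})$ with $i\in[0,5]$, exhaust the $\trans$\dash/transversals meeting $\mathcal C$. This is precisely why we work with the sixfold concatenation $c^{6}$: it provides three fundamental domains of slack on either side of $\bar{B}_{\ell}$, so that a $\trans$\dash/transversal touching $\bar{B}_{\ell}$ cannot drift past the enumerated range, and the proof will pin this down through Lemma~\ref{lem:3-translates-invariant}. A secondary difficulty is to verify that the $\trans$\dash/consistent component selected in the second stage is unique and equals the belt lift; I expect this to follow from Lemma~\ref{lem:l-gamma-separating}, Lemma~\ref{lem:l_gamma-line-crossing} and the triangle\dash/free and quad\dash/free properties, which together control how generator lifts and ``neither'' lines can sit inside $\bar{B}_{\ell}$.
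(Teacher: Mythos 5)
Your classification stage (enumerating the $\trans$\dash/transversals separating each consecutive pair $x_{i},x_{i+1}$, locating the unique $\trans$\dash/invariant line $\lambda$ by testing $\trans(\cdot)=\cdot$, and appealing to Lemma~\ref{lem:3-translates-invariant} for completeness) matches the paper's first step, and your justification that these enumerate all relevant lines is sound.

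The gap is in the extraction stage. You choose to delete from $\Gamma^{*}\cap\mathcal C$ not only the ``neither'' edges (i.e.\ those of $\lift{K}^{*}$) but also the edges of $\lambda$, and then look for \emph{the} component $C$ that supports a generator lift. This is a genuine deviation: the paper forms the components of $\Gamma_{\mathcal C}\setminus\lift{K}$ only, keeping $\lambda$'s edges, precisely so that Lemma~\ref{lem:belt-characterisation} applies verbatim and the belt sits in one piece. With $\lambda$ also deleted, situation~1 of Lemma~\ref{lem:two-situations} in the presence of a $\trans$\dash/invariant line breaks your argument: there $c$ crosses $\cyproj(\lambda)$, so $\lambda$ runs through the interior of $\mathcal C$ from $\ell$ to $\trans(\ell)$ and splits the lifted belt $\bar B_{\ell}\cap\cyproj\inv(\mathcal B_{c})$ into two pieces, the lifts of $\mathcal B_{L}$ and $\mathcal B_{R}$. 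Each piece meets both the $\ell$\dash/side and the $\trans(\ell)$\dash/side of $\partial\mathcal C$ and each is $\trans$\dash/consistent, so your selection criterion does not single out one component: two components ``support a generator lift''. Picking one of them and then appending the edges of $\lambda$, as your final formula does, returns only half the short edges (every $c$\dash/transversal has exactly two, one on each side of $\lambda$), so $\Sigma^{*}$ is incomplete. The fix is either to keep $\lambda$'s edges when forming components, as the paper does, or to take the union of all components satisfying your criterion; either way, the paper's simpler selection rule — the component containing an edge $\lift e^{*}\in\ell$ together with $\trans(\lift e^{*})$ — avoids the ambiguity outright.
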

\begin{proof}
  From the preceding lemma, $\cyproj({\mathcal C})\cap {\mathcal B}_c$~is
  connected and contains a generator. Recall that $\lift{K}^*$~is the
  union of the lines that are neither $\trans$\dash/transversal nor
  $\trans$\dash/invariant, and that $K_c^*$~is its projection into~$\cycov{c}$.
  Lemma~\ref{lem:belt-characterisation} ensures that
  $\cyproj({\mathcal C})\cap {\mathcal B}_c$~is the only component
  of~$\cyproj({\mathcal C})\setminus K_c^*$ that contains a
  generator. Equivalently, ${\mathcal C}\cap \cyproj\inv({\mathcal B}_c)$~is
  the only component of~${\mathcal C}\setminus \lift{K}^*$ that contains
  an edge~$\lift{e}^*\in\ell$ together with its translate~$\trans(\lift{e}^*)$.

  Thanks to Lemma~\ref{lem:3-translates-invariant} and following the
  paragraph on line identification and classification, we can
  correctly detect all the $\trans$\dash/transversal and $\trans$\dash/invariant
  lines crossing~$\mathcal C$. By complementarity, we identify the edges
  of~$\lift{K}^*$ in~${\mathcal C}$. We also identify by a simple
  traversal the subgraph~$\Gamma_{\mathcal C}$ of~$\Gamma$ whose dual
  edges are contained in~$\mathcal C$. The graph~$\Gamma_{\mathcal C}$ is
  connected ($\mathcal C$~is a ``convex'' region of the plane) and each
  component of~${\mathcal C}\setminus \lift{K}^*$ is a union of faces
  corresponding to a connected component of~$\Gamma_{\mathcal C}\setminus
  \lift{K}$, where $\lift{K}$~is the set of primal edges corresponding
  to~$\lift{K}^*$. We eventually select the component~$\Sigma$
  of~$\Gamma_{\mathcal C}\setminus K$ that includes an edge~$\lift{e}$
  together with its translate~$\trans(\lift{e})$. It clearly takes time
  proportional to~$|c|$ to select the edges of~$\Sigma$. We finally remark
  from the initial discussion that the dual of the edges in~$\Sigma$ are the
  edges of~$\Sigma^*$.
\end{proof}
\begin{proposition}\label{prop:canonical-generator}
  We can compute the canonical generator in~$O(|c|)$ time.
\end{proposition}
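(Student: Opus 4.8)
The plan is to harvest the canonical generator from the data already assembled in the preceding paragraphs, all of which has size $O(\len{c})$ by Lemma~\ref{lem:region:small-number-of-faces} and has been produced in $O(\len{c})$ time: the relevant region $\region{}$ of $c^6$ together with the adjacency graph $\Gamma$ of its faces and the graph $\Gamma^*$ of edges of $\uniH^*$ interior to $\region{}$, the line tags on $\Gamma^*$ and the classification of each line as $\trans$\dash/transversal, $\trans$\dash/invariant or neither, a $\trans$\dash/transversal $\ell$ separating $x_2$ from $x_3$ (which exists by Corollary~\ref{cor:transversal-unic}), the region ${\mathcal C} := \region{}\cap\bar B_\ell$, and, by Lemma~\ref{lem:identify-sigma-star}, the subgraph $\Sigma^*$ of $\Gamma^*\cap{\mathcal C}$ projecting to the canonical belt. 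Recall also that reducing $c$ to a closed walk in $H$ costs $O(\len{c})$ time (Lemma~\ref{lem:term-product} and Section~\ref{subsec:simplified-framework}). It remains to read off the canonical generator, namely the cyclic sequence of short edges crossed by a minimal generator $\gamma_R$ realising the rightmost short edge of every $c$\dash/transversal.

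First I would identify, inside $\Sigma^*$, the edges of $\uniH^*$ crossed by a lift of $\gamma_R$. By Lemmas~\ref{lem:segment-short} and~\ref{lem:short-connectivity} the short edges lying on a fixed $\trans$\dash/transversal form a segment of it; walking along that transversal inside $\Sigma^*$ --- the facing structure around each of its vertices is available in constant time from $\Gamma^*$ --- and using the left\dash/to\dash/right orientation of transversals, I take the last short edge of the segment. In situations~1 and~2(ii) of Lemma~\ref{lem:two-situations} this is exactly the edge of $\uniH^*$ crossed by $\gamma_R$ on that transversal. In situation~2(i), where $\region{}$ contains only the left half ${\mathcal B}_L$ of the canonical belt, there is a $\trans$\dash/invariant line $\lambda$: on each transversal the segment in $\Sigma^*$ carries the leftmost short edge (the one crossed by $\gamma_L$), and the edge crossed by $\gamma_R$ is the next one along that transversal, \ie/ the facing edge, across the shared vertex, of the leftmost short edge; even when this vertex lies on $\partial\region{}$ so that this edge is missing from $\Gamma^*$, its projection to $H^*$ is recovered in constant time by looking up the facing edge in the precomputed dual graph $H^*$. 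In all cases we obtain, in $O(\len{c})$ time and for each $c$\dash/transversal, (the projection to $H^*$ of) the edge of $\uniH^*$ crossed by $\gamma_R$.

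Second I would order these crossings in the cyclic order in which $\gamma_R$ meets them. Two edges crossed consecutively by $\gamma_R$ bound a common face of $\uniH^*$; in situations~1 and~2(ii), chaining the selected edges through these shared faces --- a constant\dash/time step in $\Gamma^*$ --- enumerates them as $\lift{e}_1^*,\dots,\lift{e}_m^*$, with $m = \len{\gamma_R}\le\len{c}$, in the order $\gamma_R$ crosses them. In situation~2(i) the same chaining applied to the leftmost short edges (all present in $\Sigma^*$) produces the cyclic order of the $c$\dash/transversals, which is also the order in which $\gamma_R$ crosses them. The primal edges $\lift{e}_1,\dots,\lift{e}_m$ of $\uniH$ dual to this sequence are, by definition, a lift of the canonical generator, and their projections to $H$ give the canonical generator as a closed walk of at most $\len{c}$ edges in $H$, hence in $G$. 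Since every data structure has size $O(\len{c})$ and every elementary operation runs in constant time in the RAM model, the whole computation takes $O(\len{c})$ time.

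The step I expect to be the main obstacle is to handle the two situations of Lemma~\ref{lem:two-situations} uniformly, and in particular to recover $\gamma_R$ when $\region{}$ contains only the left half of the canonical belt. The delicate point is that $\gamma_R$ need not lie inside the computed subgraph $\Gamma$ of $\uniH$ at all; one must instead read off, via the precomputed dual $H^*$ and the identified $\trans$\dash/invariant line, the projections and the cyclic order of its crossings with $\uniH^*$ --- which is precisely the data that the canonical generator records.
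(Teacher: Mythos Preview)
Your proposal is correct and follows essentially the same route as the paper: compute $\Sigma^*$ via Lemma~\ref{lem:identify-sigma-star}, pick on each $\trans$\dash/transversal the rightmost short edge it carries inside $\Sigma^*$, treat situation~2(i) of Lemma~\ref{lem:two-situations} separately by stepping one facing edge to the right across the $\trans$\dash/invariant line, and read off the cyclic order by chaining through shared faces. The one place where you are less explicit than the paper is the \emph{detection} of situation~2(i): the $\trans$\dash/invariant line~$\lambda$ need not cross the interior of~$\region{}$ in that case, so it will \emph{not} appear among the lines you classified earlier, and your phrase ``the identified $\trans$\dash/invariant line'' is slightly misleading. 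The paper's fix is purely structural and matches what you hint at: one checks whether every $\trans$\dash/transversal contributes a single edge to~$\Sigma^*$ and whether the right extremities of these edges are linked (after projection to~$H^*$) by a cycle of facing edges---necessarily the projection of~$\lambda$. With that clarification your argument is the paper's argument.
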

\begin{proof}
  We first compute~$\Sigma^*$ as in
  Lemma~\ref{lem:identify-sigma-star}. We then determine if we are in
  situation~2\textit{(i)} of Lemma~\ref{lem:two-situations}. To this end, we
  check that all the edges of~$\Sigma^*$ are supported by pairwise
  distinct $\trans$\dash/transversals. If this is the case, we further
  check if the right extremities of the edges of~$\Sigma^*$ are linked
  by edges of a (necessarily $\trans$\dash/invariant) line. This is easily
  seen in constant time per edge by projecting the edges of~$\Sigma^*$
  back into~$H^*$ on~$\surf$. If we are indeed in situation~2\textit{(i)},
  the canonical generator is composed of the projection on~$\cycov{c}$
  of the dual of the edges facing the edges of~$\Sigma^*$ to their
  right (keeping only one of the two edges supported by
  $\ell$~and~$\trans(\ell)$). In the other situations 1~and~2\textit{(ii)},
  the edges crossed by the lift of the canonical generator in~$\bar{B}_{\ell}$
  are the edges of~$\Sigma^*$ that are supported by
  $\trans$\dash/transversal lines and whose right endpoint is not a
  crossing with any other $\trans$\dash/transversal or $\trans$\dash/invariant
  line. In other words, these are the rightmost edges in~$\Sigma^*$ of
  the pieces of $\trans$\dash/transversals crossing~$\Sigma^*$, unless they
  abut on $\ell$~or~$\trans(\ell)$. In either case, we can clearly
  determine the sequence of edges of the canonical generator in
  $O(|c|)$~time.
\end{proof}
\subsection{End of the proof of Theorem~\ref{th:free-homotopy-test} }
Let $c$~and~$d$ be two non-contractible cycles represented as closed
walks in~$H$. Assuming that $\surf$~is orientable with genus at least
two, we compute the canonical generators $\gamma_R$~and~$\delta_R$
corresponding to $c$~and~$d$ respectively. This takes $O(|c|+|d|)$~time
according to
Proposition~\ref{prop:canonical-generator}. Following the discussion
in the Introduction, $c$~and~$d$ are freely homotopic if and only if
the projections $\pi_c(\gamma_R)$~and~$\pi_c(\delta_R)$ in~$\surf$ are
equal as cycles of~$H$. This can be determined, under the obvious
constraint that these two projections have the same length, in
$O(|c|+|d|)$~time using the classical Knuth-Morris-Pratt
algorithm~\cite{clrs-ia-09} to check whether $\pi_c(\gamma_R)$~is a
substring of the concatenation~$\pi_c(\delta_R)\cdot~\pi_c(\delta_R)$.

It remains to consider the cases of $\surf$~being
a torus. As noted for the contractibility test, its fundamental
group is commutative. Being conjugate is thus equivalent to being equal as
group elements and the free homotopy test reduces to the contractibility
test.

We have thus solved the free homotopy test for closed orientable
surfaces. In a forthcoming paper, we shall show how to solve the
contractibility and free homotopy tests in optimal linear time for
surfaces with non-empty boundary, orientable or not. (In the orientable
case, we could also resort to the present algorithms by first closing
each boundary with a punctured torus.) We leave the
handling of the free homotopy test on closed non-orientable surfaces
as an open problem. There are two easy cases, though. If $\surf$~is a
projective plane, its fundamental group is again commutative and the
test is trivial. If $\surf$~is a Klein bottle, the test was actually
resolved by Max Dehn (see its papers translation by
Stillwell~\cite[p.153]{s-pgtt-87}): following the end of
Section~\ref{sec:contractibility-test}, we can assume that we are
given two words in their canonical forms $a^ub^v$~and~$a^{u'}b^{v'}$.
These two words are conjugate if and only~if $v=v'$
and either $v$~is even~and $u=u'$, or $v$~is odd~and $u$~has the same
parity as~$u'$. For non-orientable surfaces of larger genus we finally
note that if $c$~and~$d$ are two-sided curves in~$\surf$, which is
easily checked in linear time, their cyclic coverings are cylinders
and we can apply our algorithm for the orientable case. The only
unsolved case is when $c$~and~$d$ are one-sided curves and their
cyclic coverings are M\"obius bands. We can still define their
canonical belt as for two-sided curves. It has a single boundary
freely homotopic to the square of a generator. We can easily
checked that the canonical belts are isomorphic in time that is
quadratic in their size. It is not clear how to reduce this comparison
to linear time.

\appendix
\section*{Appendix: Counter-examples to Dey and Guha's approach}
\label{sec:counterexamples}
In a first stage, Dey and Guha~\cite{dg-tcs-99} obtain a term product
representation of $c$ and $d$ as in the present
Lemma~\ref{lem:term-product}. Suppose $f_H|_A = a_1a_2\cdots a_{4g}$,
then a term~$a_i a_{i+1}\cdots a_j$ is denoted~$(i,j)$.  This term is
equivalent in~$\langle A\,;\, f_H|_A\rangle$ to the complementary
term~$a_{i-1}\inv a_{i-2}\inv\cdots a_{j+1}\inv$ \emph{going backward}
along~$f_H|_A$. This complementary term is denoted~$\overline{(i-1,j+1)}$.
The \define{length}~$|(i,j)|$ of a term~$(i,j)$ is the length of the
sequence~$a_i a_{i+1}\cdots a_j$.  The
length of a complementary term is defined analogously, so that
$|(i,j)| + |\overline{(i-1,j+1)}| = 4g$. The \define{length} of a
product of (possibly complementary) terms is the sum of the lengths of
its terms. Let us rename the above term and complementary term as
respectively a \define{forward term} and a \define{backward term}. A
term will now designate either a forward or backward term. Note that a
term being equivalent to its complementary term, we may use a forward
or backward term in place of each term. By convention, we will write a
term in backward form only if it is strictly shorter than its
complementary forward term. This convention will be implicitly assumed
in this section and corresponds to the notion of \emph{rectified} term
in~\cite{dg-tcs-99}.

Let us say that a product~$t_1 t_2$ of two terms
\begin{itemize}
\item $0$\dash/\define{reacts} if~$t_1 t_2 = 1$, the unit element in the
  group~$\langle A\,;\, f_H|_A\rangle$,
\item $1$\dash/\define{reacts} if~$t_1 t_2 = t$ in~$\langle A\,;\,
  f_H|_A\rangle$, for a term~$t$ such that $|t| \leq |t_1| + |t_2|$,
  and
\item $2$\dash/\define{reacts} if~$t_1 t_2 = t'_1 t'_2$ in~$\langle A\,;\,
  f_H|_A\rangle$, for two terms~$t'_1, t'_2$ such that $|t'_1| +
  |t'_2| < |t_1| + |t_2|$.
\end{itemize}
The aim of Dey and Guha is to apply reactions to a given term product
in order to reach a canonical form where no two consecutive terms
react in that form. For this, they define a function  $\texttt{apply}$
that recursively applies reductions to a product of terms. This
function is in turn called by another function $\texttt{canonical}$,
supposed to produce a canonical form.

The following claim appears as points 2~and~3 in Lemma~4
of~\cite{dg-tcs-99} and aims at showing that the function~$\texttt{apply}$ does
terminate.

\nobreak
\emph{
Let~$u,v,w$~be $3$~terms such that $u v$~does not react. If~$v w$
$1$\dash/reacts~or $2$\dash/reacts with $v w = v'$~or $v w = v' w'$
(and $v' w'$~does not react), then $uv'$~does not $1$\dash/react.
}

The non-existence of such $1$\dash/reactions is essential in the proof that
the function $\texttt{canonical}$ indeed returns a canonical
form~\cite[Prop. 7]{dg-tcs-99}. However,
this claim is false as demonstrated by the following examples.
Consider a genus~$2$ surface with $f_H|_A = abcda\inv b\inv c\inv
d\inv$. Put $u=(2,4)$,~$v=\overline{(1,7)}$, and~$w = (7,8)$. Then
$uv = bcd\cdot a\inv dc$ does not react and $vw = a\inv dc\cdot c\inv
d\inv$ $1$\dash/reacts, yielding~$v' = a\inv$. But $uv'$~$1$\dash/reacts, in
contradiction with the claim, since $uv'=bcd\cdot a\inv = (2,5)$.
Likewise, if we now set $u = (2,4)$,~$v = \overline{(1,8)}$ and~$w =
\overline{(4,2)}$, we have: $uv$~does not react, $vw$~$2$\dash/reacts,
yielding~$v' w' = (5,5)\cdot\overline{(3,2)}$, and $uv'$~$1$\dash/reacts, in
contradiction with the claim, since $uv'=bcd\cdot a\inv = (2,5)$.

Define the \define{expanded word} of a term product as the word in the elements
of~$A$ (and their inverses) obtained by replacing each term in the
product with the corresponding factor of~$f_H|_A$ or~$(f_H|_A)\inv$.
Again, $f_H|_A$~and~$(f_H|_A)\inv$ should be
considered cyclically. Call a product of terms \define{stable} if no two
  consecutive terms react.
Another important claim~\cite[Lem. 8]{dg-tcs-99} states that

\nobreak
\emph{ The expanded word of a stable product of terms does not contain
  a factor of length~$2g+1$ that is also a factor of
  $f_H|_A$~or~$(f_H|_A)\inv$.
}

This claim is used to prove that the (supposed) canonical form of a
product is equivalent to~$1$ if and only if it is the empty
product~\cite[Prop. 6]{dg-tcs-99}. However this claim is again false
as demonstrated by the following example. Consider the same genus~$2$
surface as in the previous examples. Then the product
$\overline{(1,7)}\cdot (2,4)\cdot \overline{(1,7)} = cba\cdot bcd\cdot
a\inv dc$ is stable and contains the factor~$a\cdot bcda\inv$ of
length~$2g+1=5$ that is also a factor of~$(f_H|_A)$.

Finally, the canonical form defined by Dey and Guha is not canonical. By
definition of the $\texttt{canonical}$ function in~\cite[p.
314]{dg-tcs-99}, a stable (rectified) product~$w$ is canonical,
\ie/ $\texttt{canonical}(w)~=~w$. Using the same genus~$2$ surface as
before, consider the products $w_1 = \overline{(8,6)}\cdot\overline{(8,6)}
= dcb\cdot dcb$ and $w_2 = (1,4)\cdot (2,5) = abcd\cdot bcda\inv$. It
is easily seen that none of these products react. It follows that \texttt{
  canonical}$(w_i)~=~w_i$, $i =1,2$. However $w_1 = w_2$ in~$\langle A\,;\,
  f_H|_A\rangle$. Indeed, since $\overline{(8,6)} = (1,5)$ in~$\langle A\,;\,
  f_H|_A\rangle$, we have
\[ w_1 = abcda\inv\cdot abcda\inv = abcd\cdot bcda\inv = w_2
\] 
This contradicts the fact that an element of~$\langle A\,;\,
f_H|_A\rangle$ can be expressed as a unique canonical product of
terms. In particular, Proposition~7 in~\cite{dg-tcs-99} is wrong.

The method proposed in~\cite{dg-tcs-99} seems bound to fail: the above
counterexamples\footnote{The counterexamples easily generalize to
  genus $g > 2$ orientable surfaces with $f_H|_A = a_1a_2\cdots
  a_{2g}a_1\inv a_2\inv\cdots a_{2g}\inv$. Similar counterexamples for
  non-orientable surfaces can also be found starting with the product
  of squares as a (canonical) relator.} show that there are not enough
rules to ensure that stable products are canonical forms, but adding
more rules would make more difficult to keep under check chains of
reactions which are already out of control. Another issue is that for
the end of the comparison algorithm to be usable, appending a term to
a stable product should not trigger chains of reaction deep down the
stack. This adds even more constraints on the reaction rules,
especially as lemma 13 in~\cite{dg-tcs-99} does not seem to
help. Indeed, the function $\texttt{good\_conjugate}$ introduced in
the lemma is supposed to reinforce the function
$\texttt{reduced\_conjugate}$ defined in~\cite[p.~318]{dg-tcs-99}.  
From its description, $\texttt{good\_conjugate}$ transforms a
canonical product $w$ in $\tilde{w}$ by inserting some term
product~$c\cdot c\inv$ into~$w$ before computing $\tilde{\tilde{w}} =
\texttt{canonical}(\tilde{w})$ and
$\texttt{reduced\_conjugate}(\tilde{\tilde{w}})$. If~$\texttt{canonical}$~were
canonical then $\tilde{\tilde{w}}$~and~$w$ would be equal as products
of terms; it follows that the functions $\texttt{good\_conjugate}$ and
$\texttt{reduced\_conjugate}$ would have exactly the same effect.

\end{document}